\documentclass[acmsmall]{acmart}\settopmatter{printfolios=true, printccs=true, printacmref=true}
\usepackage{graphicx} 

\usepackage{macro}
\usepackage{amsmath}

\bibliographystyle{ACM-Reference-Format}


\setcopyright{cc}
\setcctype{by}
\acmDOI{10.1145/3763102}
\acmYear{2025}
\acmJournal{PACMPL}
\acmVolume{9}
\acmNumber{OOPSLA2}
\acmArticle{324}
\acmMonth{10}
\received{2025-03-25}
\received[accepted]{2025-08-12}

\begin{CCSXML}
<ccs2012>
   <concept>
       <concept_id>10011007.10011074.10011092.10011782</concept_id>
       <concept_desc>Software and its engineering~Automatic programming</concept_desc>
       <concept_significance>300</concept_significance>
       </concept>
 </ccs2012>
\end{CCSXML}

\ccsdesc[300]{Software and its engineering~Automatic programming}

\keywords{Program Synthesis, Active Learning, Neurosymbolic Synthesis, Conformal Prediction}

\begin{document}

\title{Active Learning for Neurosymbolic Program Synthesis}

\author{Celeste Barnaby}
\orcid{0000-0001-7688-6133}
\affiliation{%
  \institution{University of Texas at Austin}
  \city{Austin}
  \country{USA}
}
\email{celestebarnaby@utexas.edu}

\author{Qiaochu Chen}
\orcid{0000-0003-4680-5157}
\affiliation{%
  \institution{New York University}
  \city{New York}
  \country{USA}
}
\email{qc1127@cs.nyu.edu}

\author{Ramya Ramalingam}
\orcid{0009-0007-6175-6919}
\affiliation{%
  \institution{University of Pennsylvania}
  \city{Philadelphia}
  \country{USA}
}
\email{ramya23@seas.upenn.edu}

\author{Osbert Bastani}
\orcid{0000-0001-9990-7566}
\affiliation{%
  \institution{University of Pennsylvania}
  \city{Philadelphia}
  \country{USA}
}
\email{obastani@seas.upenn.edu}

\author{Işıl Dillig}
\orcid{0000-0001-8006-1230}
\affiliation{%
  \institution{University of Texas at Austin}
  \city{Austin}
  \country{USA}
}
\email{isil@cs.utexas.edu}

\begin{abstract}

The goal of \emph{active learning for program synthesis} is to synthesize the desired program by asking targeted questions that minimize user interaction. While prior work has explored active learning in the purely symbolic setting, such techniques are inadequate for the increasingly popular paradigm of \emph{neurosymbolic program synthesis}, where the synthesized program incorporates neural components. When applied to the neurosymbolic setting, such techniques can ---and, in practice, do --- return an unintended program due to mispredictions of neural components. 
This paper proposes a new active learning technique that can handle the unique challenges posed by neural network mispredictions.  Our approach is based upon a new evaluation strategy called \emph{constrained conformal evaluation (CCE)}, which accounts for  neural mispredictions while taking into account user-provided feedback. Our proposed method iteratively makes CCE more precise until all remaining programs are guaranteed to be observationally equivalent. We have implemented this method in a tool called \toolname and experimentally evaluated it on three neurosymbolic domains. Our results demonstrate that \toolname identifies the ground truth program for 98\% of the benchmarks, requiring under 5 rounds of user interaction on average. In contrast, prior techniques for active learning are only able to converge to the ground truth program for at most 65\% of the benchmarks.

\end{abstract}

\maketitle

\section{Introduction}\label{sec:intro}

\emph{Neurosymbolic learning} refers to techniques that combine neural networks and symbolic reasoning to learn new concepts in an interpretable and data-efficient manner.  A particular instance of this framework is \emph{neurosymbolic program synthesis}, which learns programmatic representations that incorporate neural networks~\cite{chaudhuri2021neurosymbolic}. Typically, these programmatic representations are expressed in a neurosymbolic domain-specific language (DSL) consisting of standard programming constructs (e.g., conditionals, combinators etc.) as well as pre-trained neural networks that convert high-dimensional data (e.g., images) to a more structured symbolic representation. In recent years, neurosymbolic program synthesis has found numerous applications in  image processing and search~\cite{barnaby2024photoscout, imageeye, ellis2018learning}, data extraction~\cite{chen2023data, cheng2022binding, verbruggen2021semantic}, and question answering~\cite{gupta2023visual, suris2023vipergpt, chen2021web}.



Although neurosymbolic representations are generally more interpretable than purely neural approaches, a fundamental challenge is ensuring that the learned program is the intended one.  
While the user can inspect the synthesized program to evaluate its correctness, this step can pose a significant hurdle for  end-users who do not possess the necessary programming expertise.
Prior work on program synthesis~\cite{samplesy, learnsy, activelearning1, activelearning2, activelearning3} has aimed to address this issue through \emph{active learning} techniques that interactively query the user to clarify ambiguities. The key idea behind these techniques is to select questions that will minimize the number of user interactions, while ensuring that the correct program is eventually returned by the synthesizer.

However, when applied to the  neurosymbolic setting, these techniques may return an unintended program. 
{To see why, consider the image editing task from prior work ~\cite{imageeye}, where the goal is to learn a neurosymbolic program that generalizes from a small set of input-output examples to edit a large collection of images. For illustration, suppose the user wants to extract individual photos of people holding baseball bats from a dataset of images, each containing multiple individuals. This task may be achieved using the program in Figure ~\ref{eq:prog}, written in a neurosymbolic DSL.}
{Here, the \textbf{\textsf{Is}} operator is implemented using a neural object detector. The program identifies all \texttt{person} objects in an image that are holding \texttt{baseball\_bat} objects, and applies a crop operation to those individuals. The notion of “holding” is approximated by the spatial relationship \textsf{NextTo}.  To synthesize this program, active learning techniques iteratively query the user for labeled examples (e.g., providing the expected output for a given image) until the system is confident in the inferred logic. Returning to our example, an active learner might ask the user to label the image in Figure~\ref{fig:img} by providing the three cropped photos in Figure~\ref{fig:user-output}. However, for this particular image, a state-of-the-art neural classifier ~\cite{rekognition} fails to detect the baseball bat held by the person in the middle. As a result, when the correct program is executed on the input image, its output does not match the user-provided examples, leading the system to incorrectly reject it as an invalid hypothesis despite it being the desired ground-truth program.}


\begin{figure}
\begin{minipage}{\textwidth}
\begin{align*}
\{ \textsf{Find}(\textbf{\textsf{Is}}(\textsf{Object}(\texttt{baseball\_bat})), \textbf{\textsf{Is}}(\textsf{Object}(\texttt{person})), \textsf{NextTo}) \rightarrow \textsf{Crop} \}
    \end{align*}
\vspace{-.5cm}
\subcaption{A program in a neurosymbolic image editing DSL.}
\label{eq:prog}
\end{minipage}
\medskip 
\begin{minipage}{0.44\textwidth}
    \includegraphics[width=.7\textwidth]{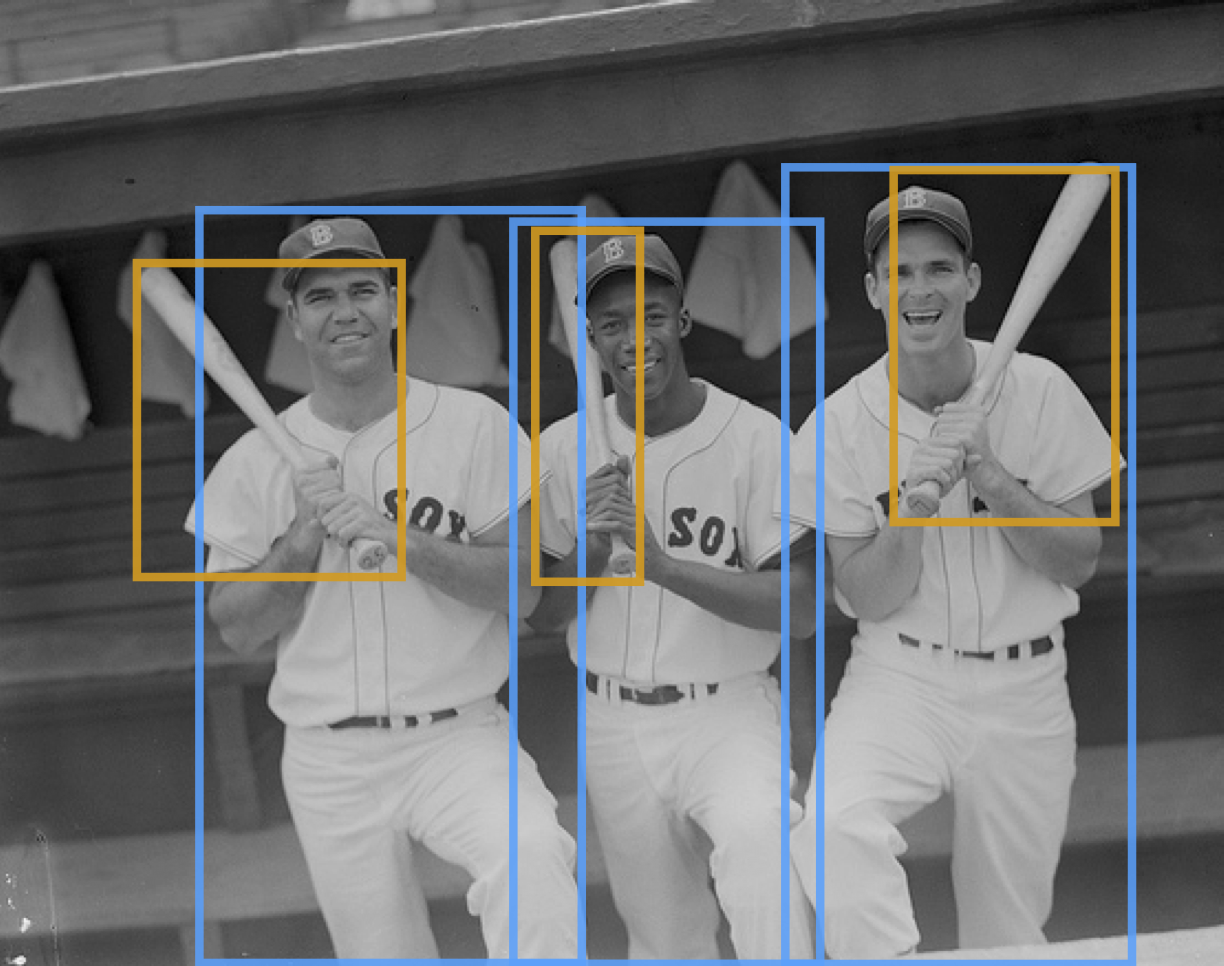}
     \subcaption{Ground truth object detections.}
     \label{fig:img}
\end{minipage}
\begin{minipage}{0.44\textwidth}
    \includegraphics[width=.9\textwidth]{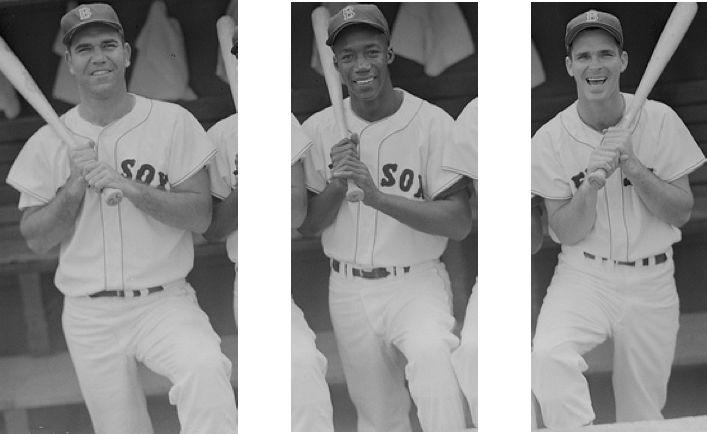}
    \subcaption{Output provided by the user.}
    \label{fig:user-output}
\end{minipage}
 \medskip
\begin{minipage}{0.44\textwidth}
    \includegraphics[width=.7\textwidth]{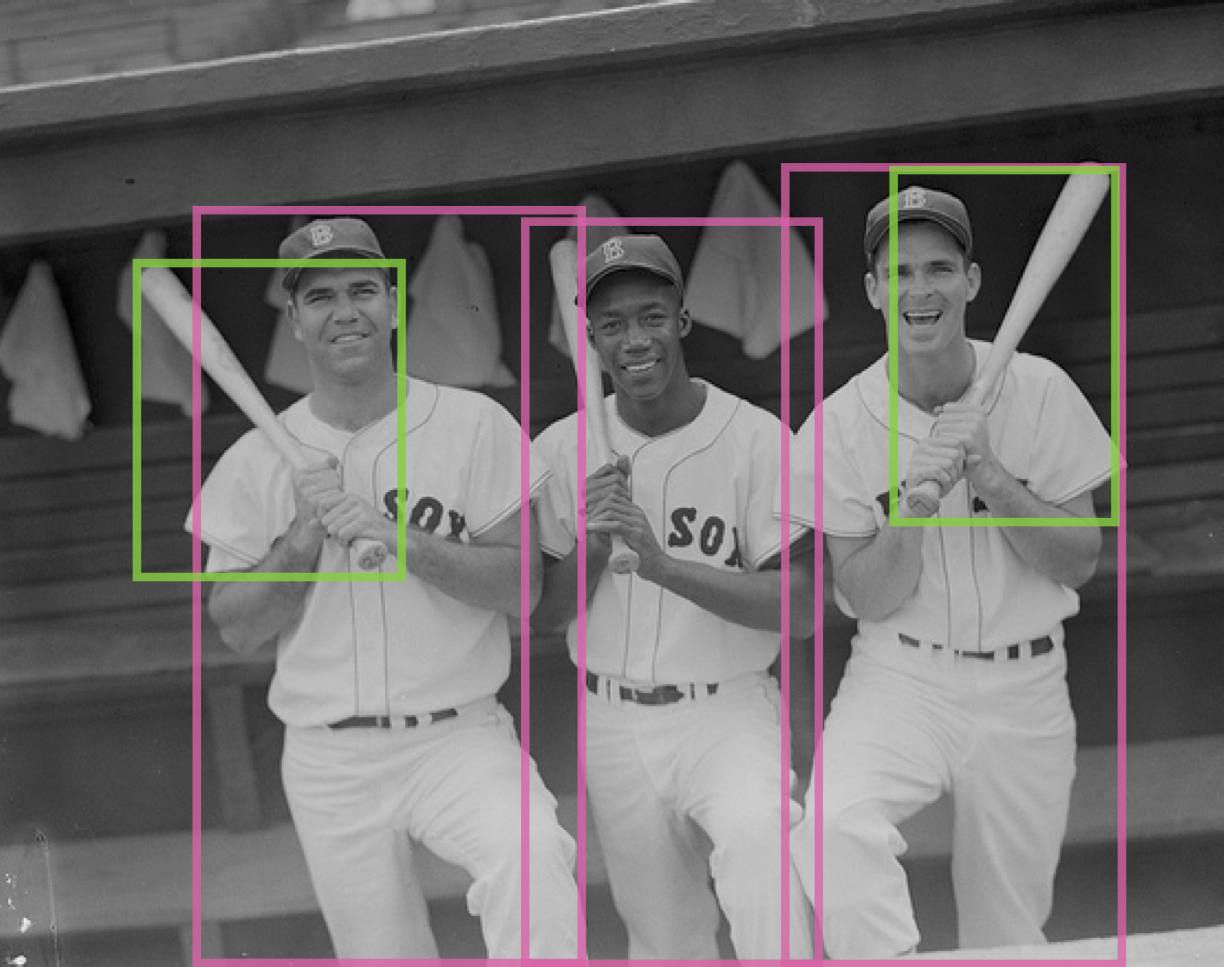}
    \subcaption{Objects detected by the neural network.}
\end{minipage}
\begin{minipage}{0.44\textwidth}
    \includegraphics[width=.72\textwidth]{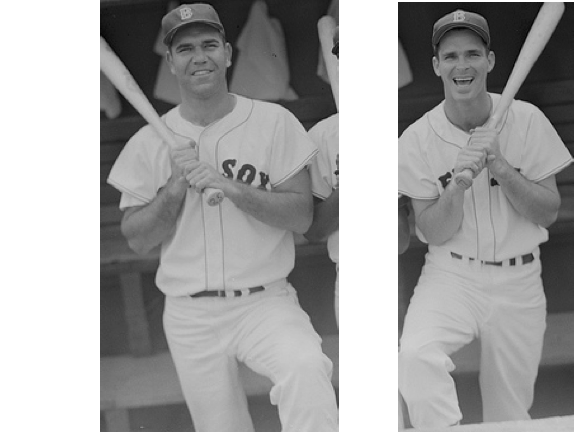}
    \subcaption{Output of the above program.}
\end{minipage}
\vspace{-.2in}
    \caption{(a) A program for cropping people who hold baseball bats. (b) The ground truth object detections, with people in blue and baseball bats in orange. (c) An example output provided by the user. (d) The objects detected by the neural network. (e) The output of the program in (a) when executed on the image. }
    \label{fig:abs_img_example}
    \vspace{-.2in}
\end{figure}

Motivated by this problem, this paper proposes a new active learning framework targeting neurosymbolic programs. Our proposed method deals with possible mispredictions of neural components using \emph{conformal prediction}~\cite{angelopoulos2023conformal, balasubramanian2014conformal}, a principled technique for providing reliable measures of uncertainty for machine learning models. Rather than producing a single prediction, conformal prediction yields a \emph{set} of predictions such that the true label is contained within this prediction set with very high probability. In the context of neurosymbolic programming, we can use these prediction sets to define \emph{conformal semantics} for programs such that every program returns an \emph{output set} that is very likely to contain the ground truth label~\cite{ramalingam2024uncertainty}.  A program $\prog$ is considered to be a possible solution to the learning task if $\prog$ is consistent with the specification under the conformal semantics. The key advantage of this approach is that  it significantly reduces the risk of discarding the correct program despite neural mispredictions.


When using conformal semantics, active learning becomes even more critical because the inherent flexibility of returning a \emph{set} of predictions leads to an explosion in the number of potential solutions. A neurosymbolic synthesis problem can easily admit an intractable number of programs consistent with the conformal semantics, making it impractical for users to manually evaluate all possibilities. The role of active learning is to guide the user through a sequence of targeted questions that gradually refine the hypothesis space of programs by narrowing down the conformal prediction sets. Each round of interaction refines the program's conformal semantics, either by enhancing the task specification or by confirming the ground-truth label for a neural prediction. Thus, as user interaction progresses, program evaluation needs to account for both the results of conformal prediction as well as any user feeback provided thus far. We refer to this type of program evaluation strategy as \emph{constrained conformal evaluation (CCE)}.


\begin{wrapfigure}{r}{0.47\textwidth}
    \begin{center}
    \vspace{-0.35in}
    \includegraphics[scale=0.15,trim={13cm 13cm 10cm 10cm},clip]{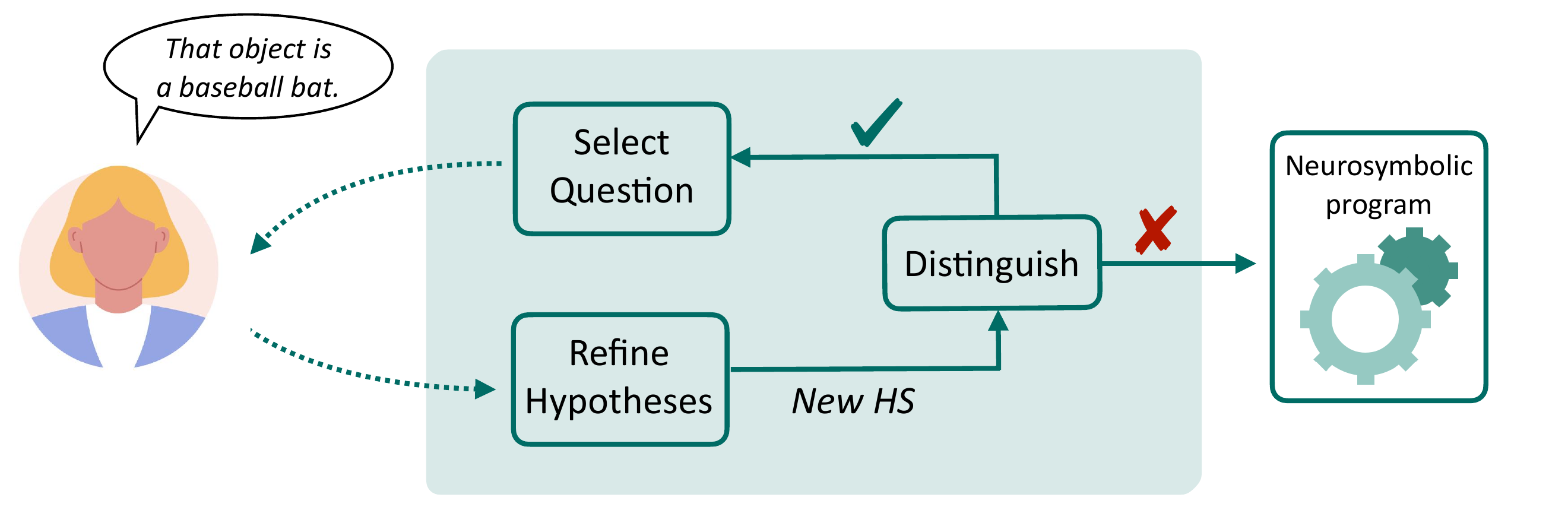}
    \end{center}
    \vspace{-0.15in}
    \caption{Overview of our approach.}
    \label{fig:overview}
    \vspace{-0.2in}
\end{wrapfigure}

As illustrated in Figure~\ref{fig:overview}, each round of user interaction chooses a question to ask to the user, with the goal of shrinking the hypothesis space as much as possible. In each round of user interaction, programs are eliminated from the hypothesis space either because the user provides a new input-output example or because the CCE results have become more precise. Upon refinement of the hypothesis 
space, a \emph{distinguishability} check is performed to identify pairs of programs that \emph{could} have different CCE results if we continued to query the user. If such programs exist, user interaction continues and active learning identifies a new question. Active learning terminates when all remaining programs in the hypothesis space are observationally equivalent.

A key challenge in realizing the active learning framework from Figure~\ref{fig:overview} is that its key components (namely, Select Question, Refine Hypotheses, and Distinguish) require performing CCE many times, which can be very expensive. In particular, under conformal semantics,  expressions evaluate to \emph{sets of values}, so CCE can require an exponential number of computations with respect to the cardinality of these sets. Our method addresses this problem using two key ideas. First, it uses bidirectional abstract interpretation to perform CCE in a practical way. At a high level, the idea is to use the output examples provided by the user to infer abstract values for program sub-expressions. These abstract values can then be used to filter infeasible conformal prediction results, leading to smaller sets. Second, when selecting what questions to ask the user, our method uses a bounded form of conformal evaluation to over-approximate the pruning power of each question, resulting in a much more practical question selection algorithm.

We have implemented the proposed approach in a tool called \toolname and evaluated it on 112 benchmarks spanning
{three neurosymbolic domains: (1) batch image editing, (2) visual arithmetic, and (3) image search for visual concept learning.}
A key highlight of our evaluation is that \toolname can successfully identify the desired program for $98\%$ of the benchmarks, whereas prior active learning techniques that do not use conformal semantics fail to produce the desired program for around 35\% of the benchmarks. Another key highlight of our evaluation is that \toolname converges to the desired program in under 5 rounds of interaction on average. 
Additionally, several ablation studies confirm the effectiveness of our proposed algorithmic optimizations.

To summarize, this paper makes the following key contributions:
\begin{itemize}[leftmargin=*]
    \item We define the \emph{neurosymbolic active learning} problem and propose the first algorithm for solving~it. 
    \item We introduce \emph{constrained conformal evaluation (CCE)} as a new type of program semantics that takes into account both user feedback and prediction sets obtained through conformal prediction. Furthermore, we present a practical CCE technique that uses bidirectional abstract interpretation. 

    \item We implement our approach in a new tool called \toolname and instantiate it on three neurosymbolic domains. Our experiments on 112 tasks show that \toolname can identify the desired program for 98\% of the benchmarks, requiring an average of 4.9 rounds of user interaction.  
\end{itemize}


\section{Overview}\label{sec:overview}

{In this section, we provide a high-level overview of our approach using a  simple  example.
Specifically, suppose that a user is examining a hand-written ledger and wants to count the number of transactions greater than 5 dollars. This task may be automated using the program $\prog_1$ in Figure~\ref{fig:hypspace}, which uses standard combinators such as \texttt{map}, \texttt{fold}, and \texttt{filter}, along with a neural perception component, \mnistpred, which converts an image to a digit. Consider the input list $\inp$ in Figure~\ref{fig:io}. Note that the true label for $x_2$ in $\inp$ is ambiguous; if the user believes the label is 0, then the desired output is 2. However, if \mnistpred\ classifies $x_2$ as 8, then $\prog_1(\inp)$ will return 3, disqualifying it as a solution. In this case, the synthesizer might fail to return any program, or it may produce an erroneous program that doesn't align with the user's expectations.
}


\begin{figure}[!t]
\begin{minipage}{.35\textwidth}
     \centering
    \includegraphics[scale=0.3, trim={0 33cm 47cm 0},clip]{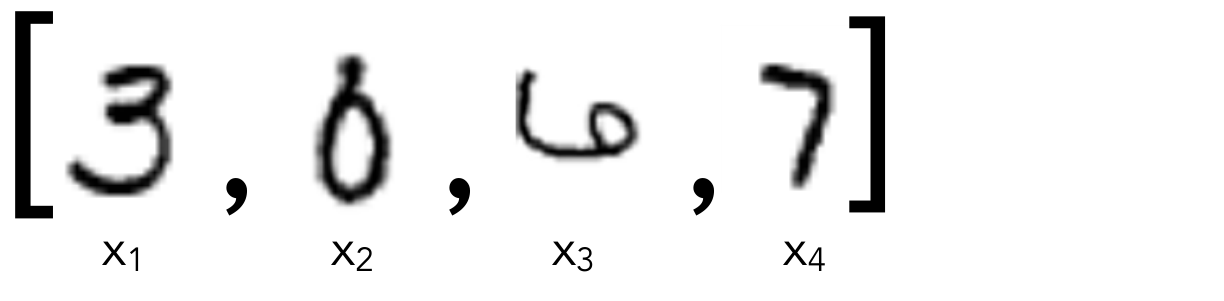}
    \caption{List of hand-drawn digits.}
    \label{fig:io}
\end{minipage}%
\begin{minipage}{.65\textwidth}
 \centering
 \vspace{-.1in}
\footnotesize
\begin{align*}
\prog_1 :=& \ \lambda  \ l . \ \texttt{fold} \  \texttt{inc} \ 0  \ (\texttt{filter} \ (\lambda x. \ x > 5 ) (\texttt{map} \ \mnistpred \ l))  \\ 
\prog_2 :=& \ \lambda  \ l . \ \texttt{fold}  \ \texttt{inc} \ 0  \ (\texttt{filter} \ (\lambda x. \ 6 < x < 9 ) (\texttt{map} \ \mnistpred \ l)) \\ 
\prog_3 :=& \ \lambda  \ l . \ \texttt{fold} \  \texttt{inc} \ 0  \ (\texttt{filter} \ (\lambda x. \ x < 4 ) (\texttt{map} \ \mnistpred \ l)) \\ 
\prog_4 :=& \ \lambda  \ l . \ \texttt{fold}  \ \texttt{inc}  \ 0  \ (\texttt{filter} \ (\lambda x. \ 2 < x < 8 ) (\texttt{map} \ \mnistpred \ l))
\end{align*}
\vspace{-0.2in}
\caption{Hypothesis space $\progs$.}
\label{fig:hypspace}
\end{minipage}%
\vspace{-.2in}
\end{figure}




\vspace{-0.05in}
\paragraph{\textbf{Solution: conformal semantics.}} As discussed in Section~\ref{sec:intro}, a viable solution is to use \emph{conformal prediction (CP)}~\cite{angelopoulos2023conformal} to ensure the ground truth label is included in the result with high probability. CP generates prediction sets for  unlabeled data by calculating nonconformity scores based on previously labeled data, determining which labels to include to maintain a specified confidence level. In our example, given the input list from Figure~\ref{fig:io}, CP would produce the prediction sets [\{3\}, \{0, 8\}, \{6, 9\}, \{7\}], indicating the model's uncertainty about the second and third digits. This uncertainty impacts the output of $\prog_1$ from Figure~\ref{fig:hypspace}: if the second digit is $0$, $\prog_1$ returns $2$; if it is $8$, it returns $3$. Thus, under \emph{conformal semantics}, the output of $\prog_1$ is the set $\{2, 3\}$. Since this set includes the user-provided label $2$, $\prog_1$ remains consistent with the input-output example and is not pruned. 

\vspace{-0.05in}
\paragraph{\textbf{Need for active learning.}} 
While the intended program is no longer pruned from the solution space, a new problem arises: \emph{many} other programs are also consistent with the input-output examples under conformal semantics. 
 In our running example, any other program whose output set contains $2$ (e.g., $\prog_2$, $\prog_3$,  $\prog_4$ in Figure~\ref{fig:hypspace}) is also a potential solution. 
This issue already exists in standard Programming-By-Example (PBE) scenarios, where input-output examples do not fully capture the program's intended behavior. To deal with this issue, prior work has proposed \emph{active learning} techniques that help users resolve such ambiguity through user interaction~\cite{activelearning1,activelearning2}. However, these techniques  only deal with  ambiguity in the specification. In the neurosymbolic setting,  another significant source of ambiguity involves the ground truth labels of neural components. Existing techniques for active learning fail to handle this second class of ambiguities and may prune the intended program. In the remainder of this section, we explain how our proposed method interacts with the user to correctly resolve both types of ambiguities.

\vspace{-0.05in}
\paragraph{\textbf{Checking ambiguities.}} {We first need a method to determine whether there are any remaining ambiguities in the solution space.} Intuitively, the solution space $\progs$ is \emph{ambiguous} if it contains a pair of \emph{distinguishable} programs $(\prog, \prog')$, meaning they could produce different outputs under the \emph{ground truth semantics}. For example, in the hypothesis space $\progs$ from Figure~\ref{fig:hypspace}, the pair $(\prog_1, \prog_2)$ is distinguishable because $\prog_1$'s output set on $\inp$ is $\{2, 3\}$, while $\prog_2$'s output set is $\{1, 2\}$. Thus, it is possible that $\prog_1(\inp) = 2$ and $\prog_2(\inp) = 1$, meaning they differ under the ground truth.  $\prog_1$ and $\prog_4$ are also distinguishable even though both produce $\{2, 3\}$ under conformal semantics, as  $\prog_1$ could produce $2$ under the ground truth semantics, while  $\prog_4$ could produce $3$.


\vspace{-0.05in}
\paragraph{\textbf{Question selection.}} Since the solution space in our example is ambiguous, we query the user. User queries can either request a new input-output example for the target function or ask the user to label an image. Our approach selects a question $q$ based on its \emph{pruning power}—the fraction of programs that will be pruned from the hypothesis space for the worst possible answer to $q$. For instance, consider a question $q_1$ asking the user to label $x_2$ from  Figure~\ref{fig:io}. If the user labels $x_2$ as $0$, $\prog_2$ will be pruned from the program space, while if the user labels $x_2$ as $8$, $\prog_1$ and $\prog_3$ will be pruned. Hence, $q_1$ has a pruning power of $\mathsf{min}(0.25, 0.5) = 0.25$. In contrast, consider question $q_2$ asking the user to label $x_3$. If the answer is $9$, then all programs can still output $2$, meaning that $q_2$ has a pruning power of 0.  Thus, $q_1$ is the preferable question.


\vspace{-0.05in}
\paragraph{\textbf{Hypothesis space refinement.}} After the user responds to $q_1$ with $0$,  we need to determine which programs in the hypothesis space are still viable. To do this, we evaluate each program $\prog \in \progs$ under the conformal semantics while incorporating all user feedback—a process called \emph{constrained conformal evaluation (CCE)}. For instance, before the user answers $q_1$, the result of CCE on $\prog_2$ for $\inp$ is $\{1, 2\}$, as $x_2$ could be between $6$ and $9$. After the user labels $x_2$ as $0$, the result becomes $\{1\}$. Since the output no longer includes $2$, $\prog_2$ is inconsistent with the input-output example and is pruned from the hypothesis space. {At this point, however, there is still ambiguity because $P_1, P_3,$ and $ P_4$ can all output $2$ but they are not equivalent. Hence, our method repeats this process of checking for ambiguities and generating new questions until all remaining programs are equivalent. }



\vspace{-0.05in}
\paragraph{\textbf{Challenges.}} {As discussed above, our procedure involves three main components: checking for ambiguities, question selection, and hypothesis space refinement. Among these  components, checking for ambiguities often takes negligible time because we just need to find a single pair of programs that are distinguishable (see Section ~\ref{sec:runtime}). Meanwhile, hypothesis space refinement and question selection take considerable time due to the need to perform CCE on all pairs of programs and inputs. Unfortunately, performing CCE is computationally quite expensive -- for instance, computing the conformal output of $\prog_1(\inp)$ requires evaluating $\prog_1$ on four possible combinations: $\{3\} \times \{0,8\} \times \{6,9\} \times \{7\}$. While this is manageable for small examples, it becomes impractical for larger inputs or prediction sets. For example, with list of 5 images each having a prediction set of size 4, conformal evaluation would need to consider 1024 possibilities. Our algorithm deals with this challenge through two key algorithmic optimizations, one primarily targeting hypothesis space refinement and the other targeting question selection. Next, we give an overview of these two key optimizations.}

\paragraph{\textbf{Practical CCE for hypothesis space refinement}} 
{To make  hypothesis space refinement practical, we leverage a key observation: \emph{the primary use of CCE is to check whether or not a program is consistent with the input-output examples.} Our method uses this observation to make CCE more scalable. By reasoning backwards from the desired output, we can infer constraints on subexpressions and thereby to reduce the size of the prediction sets during conformal evaluation. }

\begin{wrapfigure}{r}{0.35\textwidth}
\vspace{-0.2in}
\begin{minipage}{.34\textwidth}
    \centering
    \includegraphics[scale=0.17, trim={0, 22.5cm, 47cm, 0},clip]{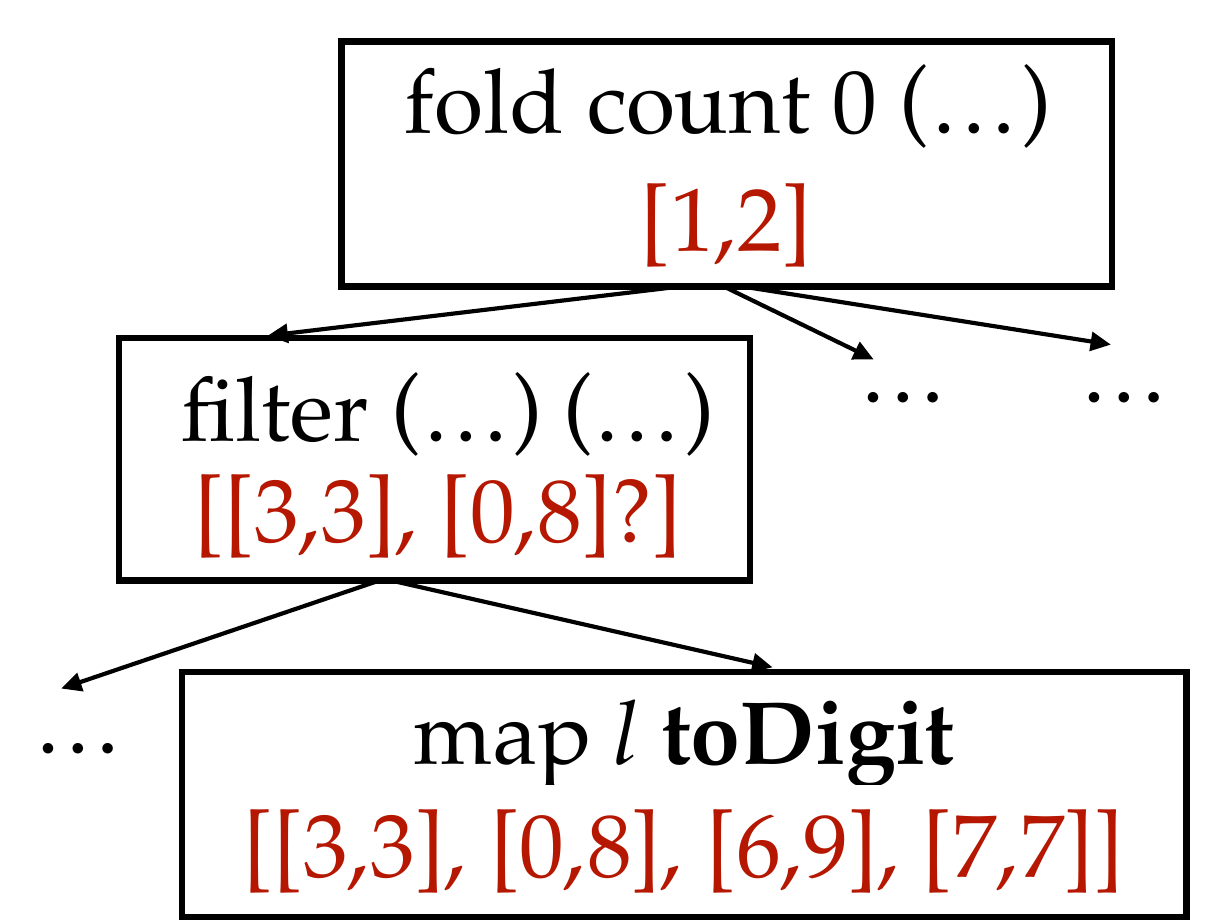}
    \vspace{-0.1in}
    \caption{The annotated AST of $\prog_3$ after forward abstract interpretation.}
     \label{fig:ast1}
\end{minipage} 
\vspace{-0.1in}
\end{wrapfigure}

As an example, consider checking whether  $\prog_3$ is consistent with the user's IO example. To perform CCE more efficiently, our method first performs abstract interpretation in the forward direction to compute abstract values for each sub-expression, as shown in the annotated    abstract syntax tree (AST) in Figure~\ref{fig:ast1}.  Here, each  node is annotated by its abstract value $[\alpha_1, \ldots, \alpha_n]$, where each $\alpha_i$ is either an interval or an \emph{optional} interval $[l, u]?$, indicating the value may or may not exist at that position in the list. 
The leaf node annotations reflect the conformal prediction  results (e.g., the set $\{0, 8\}$ is abstracted as $[0, 8]$), and  abstract values  of internal nodes are obtained by applying abstract transformers (e.g., resulting in the abstract value $[1, 2]$ for the root).

\begin{wrapfigure}{r}{0.35\textwidth}
\vspace{-0.2in}
\begin{minipage}{.34\textwidth}
    \centering
    \includegraphics[scale=0.17, trim={0, 22.5cm, 46cm, 0},clip]{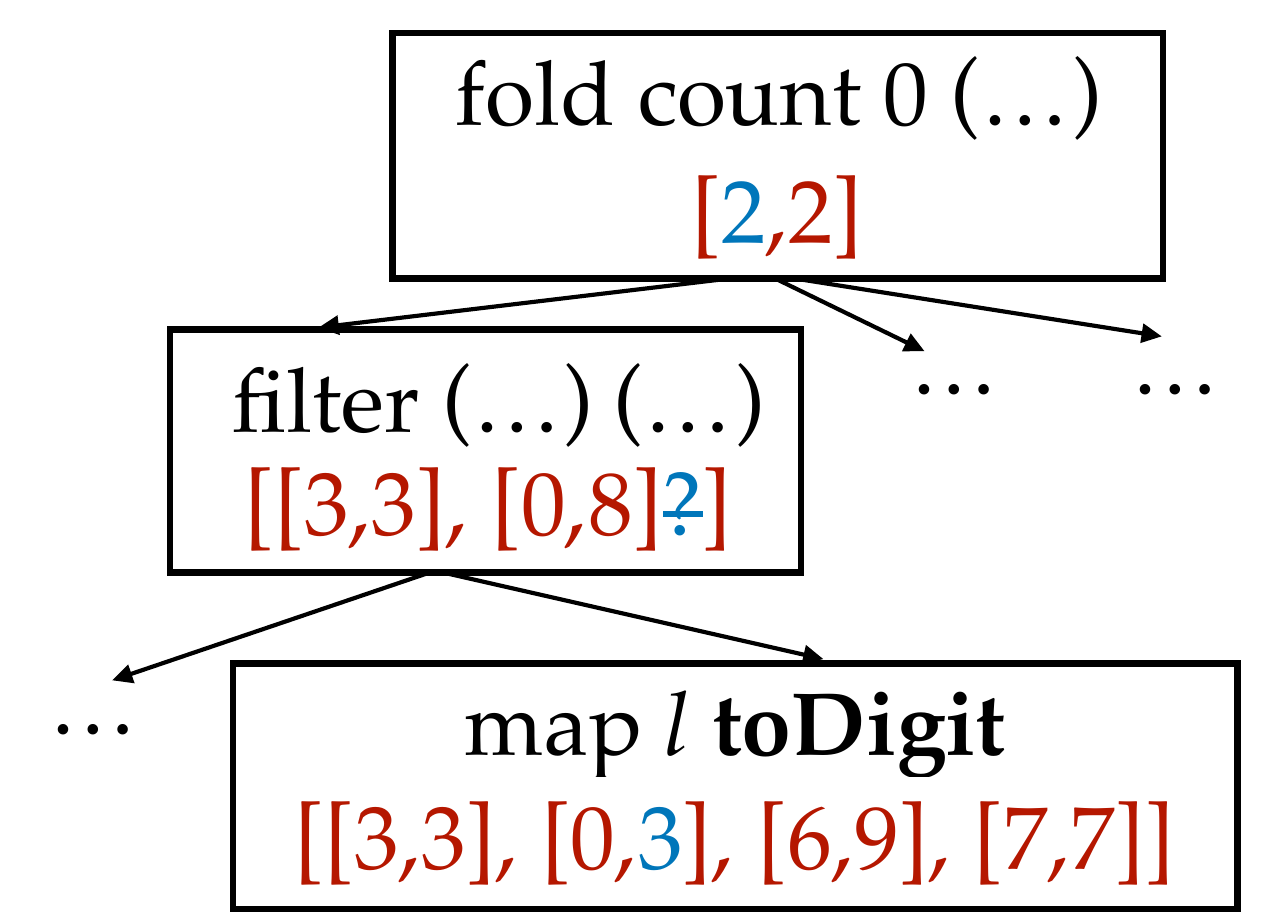}
    \vspace{-0.1in}
    \caption{The annotated AST of $\prog_3$ after backward abstract interpretation.}
    \vspace{-0.12in}
     \label{fig:ast2}
\end{minipage}%
\end{wrapfigure}

Next, our method performs backward reasoning from the output  to tighten these annotations. For instance, in order for $\prog_3$ to output $2$, there must be exactly $2$ items  less than $4$ in the input. Since $x_3$ and $x_4$ are certainly $\geq4$, we can infer $x_2$ must not have been filtered; hence $x_2 < 4$.
As a result, we obtain the strengthened annotations shown in Figure~\ref{fig:ast2}. Intuitively, these annotations correspond to necessary conditions that each element  must satisfy in order for the program to be consistent with the input-output examples. Thus, when performing CCE, we can  filter evaluation results that do not satisfy the inferred necessary condition. For example, when evaluating $\prog_3$ on $\inp$,  we need only consider the $2$ possible ground truth values in $\{3\} \times \{0\} \times \{6,9\} \times \{7\}$. In practice, this approach significantly reduces the sizes of output sets when performing CCE.

\paragraph{\textbf{Minimizing CCE for question selection.}}  {Recall that, during question selection, we need to compute the pruning power of each question, which naïvely requires performing CCE on all (program, input) pairs. To make question selection more practical, we leverage the following key observations: \emph{(1) Any evaluation strategy that under-approximates the CCE result can be used to derive an upper bound on the pruning power of a question. (2) If the upper bound for a question $q$ is lower than that of the true pruning power of a previously encountered question, we can discard it and avoid performing CCE for $q$. }  Based on these observations, our method first uses a more lightweight method, called \emph{bounded conformal evaluation (BCE)}, to derive an upper-bound on the pruning power  of every question and uses them to  reduce the set of  questions for which CCE must be performed.

For instance, consider again question $q_2$ asking the user to label the ground truth value of $x_3$.  To compute an upper bound on $q_2$'s pruning power in a lightweight manner, we perform a bounded form of CCE  where we restrict intermediate results to sets of  size at most $k$ through sampling. For instance, setting $k=1$ in this example and sampling the value $0$ from the prediction set of $x_2$, we determine that the only program that could be pruned by $q_2$ is $P_4$, which gives an upper bound of $0.25$ on the pruning power of $q_2$. Now, if we have previously computed the true pruning power of $q_1$ as $0.25$, we need not consider $q_2$, since it cannot possibly be a strictly better question than $q_1$. As we show experimentally, this  strategy greatly reduces user interaction time.}

\section{Problem Formulation}\label{sec:probform}

\subsection{Neurosymbolic Programs}

We consider a {family} of \emph{neurosymbolic} programming languages defined by the \emph{meta-grammar} in Figure~\ref{fig:metalanguage}, which is parametrized over symbolic functions $\symfuncs$ and neural networks $\nns$.  Every program defines a function called $\synthfunc$ in a functional language with expressions, let statements, and sequencing. Expressions include conditionals of the form $\ite{E}{E}{E}$ as well as compositions of symbolic and neural components. In particular, $\symfunc(E_1, \ldots, E_n)$ denotes the application of a symbolic function $\symfunc \in \symfuncs$ to expressions $E_1, \ldots, E_n$. On the other hand,  $\nn(x)$ performs forward propagation on input $x$ using a neural network $\nn \in \nns$. 

\begin{wrapfigure}{r}{0.49\textwidth}
\vspace{-0.2in}
\small
\begin{minipage}{.48\textwidth}
\[
\begin{array}{lll}
    P & := &\mathsf{def} \ \synthfunc(x) =  S  \\ 
    S & := & E \ | \ \texttt{let } x = E \ | \ S; S \\ 
    E & := & c \ | \ x \ | \ \symfunc(E, \ldots, E) \ | \  \nn(x) \ | \ \ite{E}{E}{E} 
\end{array}
\]
\begin{align*}
\symfunc \in \mathsf{Symbolic \ functions} \quad \quad \nn \in \mathsf{Neural \ networks}
\end{align*}
\end{minipage}
\vspace{-0.1in}
\caption{Meta-grammar for DSL. }
\vspace*{-0.1in}
\label{fig:metalanguage}
\end{wrapfigure}

While the distinction between neural and symbolic components may seem minor in terms of syntax, it creates a significant difference in semantics. Symbolic functions have precisely defined semantics and always produce \emph{accurate} results. In contrast, neural components $\nn$ \emph{approximate} an oracle function $\hat{\nn}$ and  may produce inaccurate results for some inputs (i.e., there may exist an input $x$ such that $\nn(x) \neq \hat{\nn}(x)$). To capture this difference, we define two types of semantics for neurosymbolic DSLs: \emph{evaluation} semantics $\stansem{P}$ and \emph{ground truth} semantics $\gtsem{P}$. These differ only in their treatment of neural components: $\stansem{\nn}(x)$ represents the output of a forward pass through the neural network $\nn$, while $\gtsem{\nn}(x)$ consults an external oracle $\hat{\nn}$ to return the \emph{true label} for $x$.


\subsection{Neurosymbolic Learning and Conformal Semantics}\label{sec:conformal}

This paper is concerned with \emph{neurosymbolic learning} from examples. That is, given a set  $\inout = \{(\inp, \out)\} \subseteq \inps \times \outs$ (where $\inps$ is an input space and $\outs$ is an output space), we wish to learn a program $\prog$ in a  neurosymbolic DSL such that $\prog$ is consistent with $\inout$ under the \emph{ground truth semantics}, i.e.:
{\small 
\begin{equation}\label{eq:NSLE-gt}
\textstyle
\forall (\inp, \out) \in \inout. \ \gtsem{\prog}(\inp) = \out
\end{equation}
}
However, since the ground truth semantics requires consulting an external oracle (e.g., a human), we cannot easily check whether a program $\prog$ constitutes a valid solution to the learning problem. {As a consequence, running a traditional synthesizer with the ground truth semantics would require the user to provide an enormous number of ground truth labels.}

A natural strategy for minimizing this workload is to leverage active learning. 
To do so, we need  to identify  programs that are consistent with the input examples $\inout$ and ground truth labels for perception. In this section, we focus on the former and discuss the latter later. A na\"{i}ve approach is to approximate the ground truth semantics via the evaluation semantics, and apply an existing active learning strategy for PBE.
That is, we could formulate the learning problem as that of finding a program $\prog$ that is consistent with $\inout$ under the evaluation semantics, such that
{\small 
\begin{equation}\label{eq:NSLE-eval}
\forall (\inp, \out) \in \inout. \ \stansem{\prog}(\inp) = \out
\end{equation}
}
However, as discussed in previous sections, this alternative formulation  has a significant drawback: because the evaluation semantics may be incorrect, a solution to Equation~\ref{eq:NSLE-gt} may \emph{not} be a solution to Equation~\ref{eq:NSLE-eval}. This means  that the alternative problem formulation would inherently allow incorrect solutions, while ruling out the desired solution. 

Recent work has proposed to use \emph{conformal prediction} to address this problem~\cite{ramalingam2024uncertainty}. At a high level, conformal prediction ~\cite{angelopoulos2023conformal, shafer2008tutorial} replaces each neural component $\nn:\mathcal{X}\to\mathcal{Y}$ with a \emph{conformal predictor} $\tilde{\nn}:\mathcal{X}\to2^{\mathcal{Y}}$ that predicts sets of labels $\confout \subseteq\mathcal{Y}$.  Intuitively, $\tilde{\nn}(x)$ represents the set of all plausible labels for input $x$. More precisely, conformal prediction aims to ensure that the ground truth label is contained in the prediction set.
This property is known as \emph{coverage},  and prior work in the conformal prediction literature~\cite{ramalingam2024uncertainty} can provide coverage with bounded probability under the standard i.i.d. assumption. 
In the remainder of this paper, we assume access to a conformal predictor that can guarantee coverage, and we utilize such a predictor to define the \emph{conformal semantics} of any neurosymbolic DSL (formalized in the next subsection). In more detail, given an input $\inp$, the conformal semantics $\confsem{\prog}(\inp)$ yields a \emph{set} $\confout$ of outputs such that $\gtsem{\prog}(\inp) \in \confsem{\prog}(\inp)$. 
In other words, the conformal semantics generates a set of outputs that  includes the ground truth, and we define our learning problem in terms of these conformal semantics. \change{Section ~\ref{sec:impl} provides details about the conformal prediction technique used in our implementation.}

\begin{definition}{\bf (Neurosymbolic learning from examples (NSLE))}\label{def:NSLE} Given a set of input-output examples $\inout$, the goal of neurosymbolic learning is to find a set of programs $\progs$ satisfying:
{
\begin{equation}\label{eq:nsle}
\prog \in \progs \Longleftrightarrow  \forall (\inp, \out) \in \inout. \ \out \in \confsem{\prog}(\inp)
\end{equation}
}
\end{definition}

In other words, the goal of NSLE is to identify all programs consistent with the examples under the \emph{conformal} semantics. Each such program is referred to as a \emph{solution} to the NSLE problem. {While our formulation of NSLE is guaranteed to include the desired solution, it may also admit \emph{many} others. As a result, users must manually identify the correct program from a potentially large set of candidates—a process that can be burdensome, especially without expertise in the underlying DSL. To mitigate this, we define the \emph{active learning for NSLE} problem, which aims to efficiently guide the user toward the intended solution.}

\subsection{Constrained Conformal Semantics}

At a high level, the goal of {active learning} is to {refine the conformal semantics} until the ground truth program is found. Intuitively, the approach starts with the vanilla conformal semantics, but each round of user interaction makes the conformal semantics more precise, making the output sets smaller while still containing the ground truth.  {We refer to the refined semantics that incorporates user feedback as the \emph{constrained conformal semantics}, and we formally define \emph{user feedback} to precisely characterize this alternative semantics:}

\begin{definition}{\bf (User feedback)}
Let $\inps$ be a finite input space\footnote{As standard in prior work, we assume a finite set of inputs because, in the PBE setting, the synthesized program is used to automate a task on a large but fixed set of inputs.  }  on which programs will be executed. User feedback $\fb$ is a set of \emph{labels} $(f, \inp, \out)$ where $f \in \nns \cup \{ \synthfunc \}$, $\inp \in \inps$, and $\out = \gtsem{f}(\inp)$. 
\end{definition}

In this definition,  a label can take two forms: If $f = \synthfunc$, the label corresponds to an input-output example for the function to be synthesized. On the other hand, for $f \in \nns$, the label specifies the ground truth semantics of neural component $f$ for a specific input. Given user feedback $\fb$, we use the notation $\fbf_\fb$ to denote the formula representation of $\fb$, i.e.,
\[
\textstyle
\fbf_\fb = \bigwedge_{(f, I, O) \in \fb} f(I) = O
\]
For notational convenience, we often leave $\fb$ implicit and denote user feedback simply by $\fbf$. We assume that the user always gives feedback consistent with the ground truth.

\begin{figure}
    \footnotesize
    \begin{mathpar}    
    \inferrule*[Left=Neural]{\bindings(x) = \inp \ \ \ \ \ \confoutint = \{(\out, \nn(\inp) = \out)  \ | \ \out \in \tilde{\nn}(I) \wedge \textsf{SAT}(\nn(\inp)= O \wedge \fbf) \}}
    {\bindings \vdash \nn(x) \constrainedeval \confoutint} \\ 
    \inferrule*[Left=Symb]{\bindings \vdash E_1 \constrainedeval \confoutint_1 \quad \ldots  \quad \bindings \vdash E_n \constrainedeval \confoutint_n, \\ \confoutint = \{(\stansem{\symfunc}(\out_1, \ldots, \out_n), \bigwedge_{i} \varphi_i) \ | ((\out_1, \varphi_1), \ldots (\out_n, \varphi_n)) \in \confoutint_1 \times \ldots \times \confoutint_n  \wedge \textsf{SAT}(\fbf \land \bigwedge_{i} \varphi_i)\} }{\bindings \vdash \symfunc(E_1, \ldots, E_n) \constrainedeval \confoutint} \\
    \inferrule*[Left=Assign]{\shortstack{$\bindings \vdash E \constrainedeval \confoutint$ \\ $\bindings[x \mapsto \confoutint] \vdash S \constrainedeval \confoutint'$}}{\bindings \vdash \texttt{let } x=E;S \constrainedeval \confoutint'}  \ \ \ \ \ \ \ \ \ \ \ \ \ \ \ \
   \quad  \inferrule*[Left=Lookup]{ \ }{\bindings \vdash x \constrainedeval \bindings(x)} \quad \ \ \ \ \ \ \ \ \ \ \ \ \ \
    \inferrule*[Left=Prog]{\shortstack{$\bindings(x) = \inp \ \ \ \ \  \bindings \vdash S \constrainedeval \confoutint$ \\ $ \confout = \{\out \ | \ (\out, \varphi) \in \confoutint \}$} }{\bindings \vdash \textsf{def } \synthfunc(x) = S  \constrainedeval \confout } \\
    \end{mathpar}
    \vspace{-0.35in}
    \caption{Constrained conformal semantics. Conditionals are omitted because they are a special case of  {\sc Symb}.}
    \label{fig:constrainedsem}
    \vspace{-0.1in}
\end{figure}

Next, we define \emph{constrained conformal semantics} to evaluate programs while incorporating user feedback. Given a program $\prog$ and input $\inp$, constrained conformal evaluation uses the user-provided label for evaluating an expression $E$ when available, and performs standard conformal evaluation otherwise. Formally, Figure~\ref{fig:constrainedsem} presents the constrained conformal semantics using judgments of the form $ \bindings \vdash S \constrainedeval \confout $, indicating that $S$ evaluates to the output set $\confout$ under the valuation $\bindings$ and user feedback $\fbf$ (represented as a formula). 

The first rule, {\sc Neural}, shows how to evaluate a neural expression $\nn(x)$ under constrained conformal semantics. It first looks up the value $\inp$ for $x$ under $\bindings$ and performs conformal prediction to obtain the set of possible outputs $\confout$. The result is then filtered to retain only outputs consistent with the user feedback, i.e., those for which $\nn(I) = O \land \fbf$ is logically satisfiable.\footnote{While we formalize this consistency check as the satisfiability of a logical formula, this check does not require an  SMT solver. In particular, these formulas can be viewed as a set of input-output examples for $\nn \in \nns$. \change{Thus, this consistency check may be implemented by querying whether an input $\inp$ maps to output $\out$ in a hash map designated for $\nn$.}} \change{As discussed in Section ~\ref{sec:conformal}, conformal prediction guarantees that $\confout$ will contain the user feedback with bounded, user-controllable probability.}  

The second rule, {\sc Symb}, handles symbolic expressions by evaluating each nested expression $E_i$ under constrained conformal semantics and applying $\symfunc$ to each combination of results, filtering out combinations with unsatisfiable constraints. 
The final rule, {\sc Prog}, shows how to evaluate the entire program under user feedback. This rule evaluates the function body using the other rules to obtain the result $\confoutint$, and then discards the constraints to produce the final output set.

\begin{definition}{\bf (Constrained conformal evaluation)} Let $P$ be a program, $I$ an input, and  let $\fbf$ denote user feedback. We say that $P$ conformally evaluates to $\confout$ on $I$ under user feedback $\fbf$, denoted $\confsem{P}_\fbf(I) = \confout$, if and only if $[ x \mapsto I] \vdash  \prog \constrainedeval \confout$.
\end{definition}

Note that a program $P$ may produce an output set inconsistent with a user-provided label. Therefore, we define a notion of consistency with user feedback:

\begin{definition}{\bf (Consistency with user feedback)}
We say that a program $\prog$ is \emph{consistent} with user feedback $\fbf_\fb$, denoted $\prog \models \fbf_\fb$, iff $\forall \ (\synthfunc, I, O) \in \fb. \ \ \out \in \constrainedsemm{\prog}(\inp).$

\end{definition}
In other words, a program $\prog$ is consistent with user feedback if, for all input-output examples $(I, O)$ provided by the user, $\prog$ can return $O$ on $I$ under the conformal semantics. 

\begin{example}
    Let $l = [x_1, x_2, x_3]$ be an input list of hand-written digits for which the  ground truth label is $[2, 4, 9]$ and whose conformal prediction result is $[\{2, 7\}, \{4\}, \{8,9\}]$. Consider the following pairs of user feedback  and programs:
{\small 
\begin{align*}
    \assump_1 &= (\synthfunc(\inp) = 15) \wedge (\mnistpred(x_1) = 2) 
    &\prog_1 = \lambda x. \texttt{fold} \ (\texttt{map} \ x \ \mnistpred ) \ 0 \ \texttt{sum} \\  
    \assump_2 &= (\synthfunc(\inp) = 15) \wedge (\mnistpred(x_3) = 9)  
    &\prog_2 = \lambda x. \texttt{fold} \ (\texttt{map} \ x \ \mnistpred ) \ 1 \ \texttt{sum} 
\end{align*}
}

Here, for $\prog_1$, we have $\prog_1 \models \assump_1$ and $\prog_1 \models \assump_2$. However, for $\prog_2$, we have $\prog_2 \models \assump_1$ but  $\prog_2 \not\models \assump_2$. This is because $\prog_1$ and $\prog_2$ evaluate to the following values under the constrained conformal semantics:
{ 
    \begin{align*}
   \confsem{\prog_1}_{\assump_1}(l) = \{14, 15\} \quad \confsem{\prog_1}_{\assump_2}(l) = \{15, 20\}   \quad 
    \confsem{\prog_2}_{\assump_1}(l) = \{15, 16\} \quad \confsem{\prog_2}_{\assump_2}(l) = \{16, 21 \}      
    \end{align*}
    }
\end{example}

\subsection{Active Learning Problem}\label{sec:active_learning_problem}

In this section, we formally define the \emph{active learning} problem addressed in this paper.

\begin{definition}{\bf (Hypothesis space)}\label{def:hs}
Let $\progs$ be the set of all possible programs, and let $\assump$ be the current user feedback. The hypothesis space defined by $\assump$, denoted $\hs{\progs}{\assump}$, is the set $\{ \prog  \in \progs \ | \ P \models \assump \} $.
\end{definition}

\begin{definition}{\bf (Question)}
    A question $\question$ is a tuple $(f, \inp)$ where $f \in \nns \cup \{\synthfunc\}$ and $\inp \in \inps$. We say that an answer $a$ to a question $(f, \inp)$ is valid   iff $a = \gtsem{f}(\inp)$. 
\end{definition}
{Note that there are two types of questions: those involving $\synthfunc$ ask the user to provide a new input-output specification of the function being synthesized. In contrast, those involving neural components $f \in \nns$ ask the user to provide the ground truth label for some perception task. } 
Given a question $q = (f, \inp)$ with a \emph{valid} answer $a$, we use $\mathsf{Feedback}(q)$ to denote the feedback $f(\inp) = a$, and $\mathsf{Possible}(q)$ to represent the set of possible user feedback formulas for $q$. This terminology naturally extends to a question space $\questions$. 

\begin{definition} {\bf (Indistinguishable Programs)}\label{def:indis}
Two neurosymbolic programs $\prog_1$ and $\prog_2$ are \emph{indistinguishable} under feedback $\assump$, denoted $\prog_1 \indist_\assump \prog_2$, if 
for all inputs $\inp$ in the input space and for any set of questions $Q$ not already answered by $\fbf$, we have:
{ 
\begin{equation}\label{eq:dis}
 \forall \assump' \in \mathsf{Possible}(Q).\ \ \confsem{\prog_1}_{\assump  \land \assump'}(\inp) = \confsem{\prog_2}_{\assump  \land \assump'}(\inp)
\end{equation}
}

\end{definition}
In other words, programs $P_1, P_2$ are indistinguishable under $\assump$ if we cannot distinguish their behavior by asking the user more questions. Note that Definition~\ref{def:indis} \emph{implies} $\confsem{\prog_1}_{\assump}(\inp) = \confsem{\prog_2}_{\assump}(\inp)$;  however, $\confsem{\prog_1}_{\assump}(\inp) = \confsem{\prog_2}_{\assump}(\inp)$ does \emph{not} imply that the programs are indistinguishable. In particular, two programs that evaluate to the same output set under $\assump$ \emph{could} evaluate to different sets as we strengthen $\assump$ by getting additional feedback from the user.

\begin{example}
    Consider two programs $\prog_1 = \lambda x. \nn^1(x)$, $ \prog_2 = 
\lambda x. \nn^2(x) + 1$, input space $\{ \inp\}$, and user feedback $\varnothing$. Suppose further that $\tilde{\nn^1}(\inp) = \{ 1, 2 \}$ and  $\tilde{\nn^2}(\inp) = \{ 0, 1\}$. Here, we have $\confsem{\prog_1}_{\assump}(\inp) = \confsem{\prog_2}_{\assump}(\inp) = \{ 1, 2 \}$; but these programs are  \emph{distinguishable}. 
    In particular, consider the question $(\nn^1, \inp)$ and possible answer $2$. Then, we have $\confsem{P_1}_{\nn^1(x) = 2}(\inp) = \{2 \} $, but  $\confsem{P_2}_{\nn^1(x) = 2}(\inp) = \{1, 2 \} $, so they violate Equation~\ref{eq:dis}. Our definition of indistinguishability requires the programs to remain observationally equivalent under any answers to future questions. 
\end{example}





\begin{lemma}\label{lemma:indis}\label{lemma:sufficient}
Let $\prog_1, \prog_2$ be a pair of programs such that $\prog_1 \indist_\assump \prog_2$. Then, assuming $\assump$ represents accurate user feedback, we have $\forall I \in \inps. \ \gtsem{\prog_1}(I) = \gtsem{\prog_2}(I)$ where $\inps$ denotes the input space.\footnote{Proofs of all lemmas and theorems are provided in 
Appendix ~\ref{sec:proofs}.
}
\end{lemma}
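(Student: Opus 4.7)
The plan is to derive $\gtsem{\prog_1}(I) = \gtsem{\prog_2}(I)$ from indistinguishability by constructing a specific question set $Q_I$ and answer formula $\assump'_I$ whose effect is to collapse the constrained conformal semantics of both programs on $I$ down to a singleton containing the ground truth output. Once both outputs collapse in this way, the equality supplied by Definition~\ref{def:indis} immediately gives equality of the ground-truth outputs.

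Concretely, fix $I \in \inps$ and let $Q_I$ contain, for every neural sub-expression $\nn(x)$ syntactically present in $\prog_1$ or $\prog_2$ (including those reached via \texttt{let}-bindings and through both branches of every conditional), a question $(\nn, v)$ for each value $v$ to which the variable $x$ can be bound during constrained conformal evaluation on $I$. Since $\inps$ is finite and the programs have finite syntax, $Q_I$ is finite. Define $\assump'_I = \bigwedge_{(\nn, v) \in Q_I} \nn(v) = \hat{\nn}(v)$, pinning each neural call to its true oracle answer; this formula lies in $\mathsf{Possible}(Q_I)$ because of the coverage property of the conformal predictor assumed in Section~\ref{sec:conformal}, which guarantees $\hat{\nn}(v) \in \tilde{\nn}(v)$.

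The crux is a structural induction on expressions showing that, under the feedback $\assump \land \assump'_I$, every sub-expression evaluates to a set carrying exactly one value, namely its ground-truth value. For the {\sc Neural} rule, the \textsf{SAT} filter admits only $O = \hat{\nn}(v)$, yielding the singleton $\{(\hat{\nn}(v), \nn(v) = \hat{\nn}(v))\}$. For the {\sc Symb} rule, the Cartesian product of singleton $\confoutint_i$ is a singleton whose accompanying constraint is a conjunction of mutually consistent ground-truth labels and hence remains satisfiable. The {\sc Assign}, {\sc Lookup}, and {\sc Prog} rules then propagate this property, giving $\confsem{\prog_i}_{\assump \land \assump'_I}(I) = \{\gtsem{\prog_i}(I)\}$ for $i = 1, 2$. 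Instantiating Definition~\ref{def:indis} with $Q = Q_I$ and $\assump' = \assump'_I$ yields $\{\gtsem{\prog_1}(I)\} = \{\gtsem{\prog_2}(I)\}$, and letting $I$ range over $\inps$ completes the argument.

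The main obstacle is verifying that $Q_I$ really captures \emph{every} neural call that arises during constrained conformal evaluation on $I$. This relies on the DSL restriction in Figure~\ref{fig:metalanguage} that $\nn$ is applied only to variables, together with the standing assumption that $\inps$ contains every intermediate value that such a variable can ever be bound to during execution (so that each relevant neural query is in fact a well-formed question). Once coverage of $Q_I$ is established, the inductive argument itself is routine, since the {\sc Neural} and {\sc Symb} rules in Figure~\ref{fig:constrainedsem} are designed so that complete neural-label feedback precisely recovers the standard ground-truth semantics.
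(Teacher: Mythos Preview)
Your proposal is correct and follows essentially the same strategy as the paper: pin every relevant neural call to its ground-truth label so that the constrained conformal semantics collapses to the ground-truth semantics, then invoke Definition~\ref{def:indis}. The paper's proof is slightly coarser---it simply takes the single feedback set $\fb = \{(f, \inp, \gtsem{f}(\inp)) \mid f \in \nns, \inp \in \inps\}$ covering \emph{all} neural functions on \emph{all} inputs at once, rather than building a per-input $Q_I$ and carrying out the structural induction you sketch; your version is more explicit about why the collapse happens and even flags the well-formedness assumption on intermediate values that the paper's proof leaves implicit.
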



This lemma is important because it states that active learning can terminate when all programs in the hypothesis space are indistinguishable from one another. 

\begin{definition}{\bf (Active learning for NSLE).} Let $\questions$ be the space of all possible queries and let $\gtprog$ be a ground truth program.
The goal of the active learning problem is to ask the user a series of questions $q_0, \ldots, q_n$ in $\questions$ such that (1) for $\fbf = \bigwedge_{i=0}^n \mathsf{Feedback}(q_i)$, we have 
$\forall P \in (\hs{\progs}{\assump}). \ P \indist_\assump \gtprog$; (2)  the number of questions asked to the user is minimized.
\end{definition}

The first criterion states that when active learning terminates, all remaining programs  are observationally equivalent on inputs $\inps$ under the ground truth semantics. The second condition focuses on optimality, aiming to minimize the number of questions the user must answer. However, achieving a globally optimal solution is NP-hard~\cite{samplesy} without knowing the answer in advance. Like previous work on active learning~\cite{samplesy}, we focus on \emph{worst-case} rounds of interaction. Our revised problem aims to construct a sequence of questions where each is \emph{locally} optimal, formalized in terms of pruning power:

\begin{definition} {\bf (Pruning Power)}.\label{def:pp}
Let $\progs$ be the  hypothesis space, and let   $\question = (f, \inp)$ be a question  with possible answers $a_1, \dots, a_n$. Then, the \emph{pruning power} of $\question$  modulo hypothesis space $\progs$ is: 
\begin{align*}
\textstyle
    \pp{q, \progs} = \frac{|\progs| - \underset{i}{\text{max}} | (\hs{\progs}{(f(\inp) = a_i)})|}{|\progs|}.
\end{align*}
\end{definition}

In other words, the pruning power of a question $\question$ is the fraction of programs pruned from the hypothesis space by the \emph{worst} answer to $\question$.

\begin{definition}{\bf (Revised problem statement)}\label{def:revised} Let $\questions$ denote all  questions and let $\gtprog$ be a ground truth program.
Active learning  aims to ask the user a series of questions $q_0, \ldots, q_n$ in $\questions$ such that: 

\begin{enumerate}
    \item For $\fbf = \bigwedge_{i=0}^n \mathsf{Feedback}(q_i)$, we have: $\forall P \in (\hs{\progs}{\assump}). \ P \indist_\assump \gtprog$
\item For $\fbf^i = \bigwedge_{j=0}^{i-1} \mathsf{Feedback}(q_j)$ and $\questions^i = \questions \backslash \cup_{j=0}^{i-1} \{q_j\}$, we have
$
q_i =    
        \underset{\question \in \questions^i
        }{\text{argmax }}\pp{q, \hs{\progs}{\fbf^i}}
$
\end{enumerate}
\end{definition}

As before, the first criterion states that all remaining programs in the hypothesis space are indistinguishable. The second condition states a \emph{local optimality} requirement, namely that the next question asked to the user should have maximal pruning power among all remaining questions. 

\paragraph{\bf \emph{Remark}.}  { A reasonable alternative to considering the worst-case answer is to reason about the \emph{expected} answer by leveraging prediction probabilities of neural components. While we also considered this alternative, reasoning about expected rounds of interaction does not work well empirically, as we evaluate in Section~\ref{sec:rounds}. Intuitively,  neural networks can sometimes be confidently wrong about their predictions; hence, reasoning about worst-case behavior turns out to be a better objective in practice.}

\section{Interactive Neurosymbolic Synthesis Algorithm}\label{sec:synth}


\subsection{Top-Level Algorithm}\label{sec:top-level}

\begin{figure}[!t]
\vspace*{-0.5cm}
\begin{minipage}[t]{0.55\textwidth}
    \centering
    \scriptsize
    \begin{algorithm}[H]
    \small
    \begin{algorithmic}[1]
    \Procedure{ActiveLearning}{$\progs, \questions, \inps, \inout$}
    \Statex\Input{Hypothesis space $\progs$, 
 question space $\questions$,  input space $\questions$  and  IO examples $\inout$.}
    \Statex\Output{A program $\prog \in \progs$}
    \State $\fbf \assign \{\synthfunc(\inp) = \out \ | \ (\inp,\out)\in \inout \}$
    \While{true}
    \State $\progs \assign \textsc{RefineHS}(\progs, \fbf, \inout)$
    \State $\prog' \assign \textsf{Sample}(\progs)$
    \If{ $\neg \textsc{Distinguish}(\prog', \progs \setminus \prog', \fbf, \inps, \questions)$}
    \State {\bf break;}
    \EndIf
    \State $\question \assign \text{\sc{SelectQuestion}}(\progs,\questions)$
    \State $\out \assign \text{\sf{QueryUser}}(\question)$
    \State $\fbf \assign \fbf \wedge f(\inp) = \out; \ \ \questions \assign \questions \setminus (f, \inp)$
    \If{$f = \synthfunc$}
    $Z \gets Z \cup \{ (I, O) \} $
    \EndIf
    \EndWhile
   \State \Return $\prog \in \progs$
    \EndProcedure
    \end{algorithmic}
    \end{algorithm}
    \vspace{-0.5in}
    \caption{Active learning algorithm. }
    \label{fig:toplevel}
\end{minipage}%
\begin{minipage}[t]{0.45\textwidth}
    \centering
    \small
    \begin{algorithm}[H]
    \small
    \begin{algorithmic}[1]
    \Procedure{RefineHS}{$\progs, \fbf, \inout$}
    \Statex\Input{$\progs$ is a hypothesis space, $\fbf$ is a user feedback formula, and $\inout$ is the IO examples.}
    \Statex\Output{A new hypothesis space $\progs'$}
    \State $\progs' \assign \progs$
    \ForAll{$\prog \in \progs$}
    \ForAll{$(\inp, \out) \in \inout$}
    \State $\confout \assign \underline{\textsc{CCE}}(\prog, \fbf, \inp)$
    \If{$\out \not\in \confout$}
    \State $\progs' \assign \progs' \setminus \prog$ 
    \State {\bf break;} 
    \EndIf
    \EndFor
    \EndFor 
    \State \Return $\progs'$
    \EndProcedure
    \end{algorithmic}
    \end{algorithm}
    \vspace{-0.1in}
    \caption{Hypothesis space refinement algorithm. }
    \label{fig:refinehs}
\end{minipage}%
\vspace{-.15in}
\end{figure}

Our top-level active learning algorithm is presented in Figure~\ref{fig:toplevel}. This procedure takes four inputs, namely (1) $\progs$, which is the hypothesis space and is assumed to contain the ground truth program $\prog^*$, (2) the space $\questions$ of all possible questions, (3) the input space $\inps$, and (4) a set of initial input-output examples provided by the user. The return value of {\sc ActiveLearning} is a program that is  observationally equivalent to $\prog^*$ on the input space under  the ground truth semantics.

The {\sc ActiveLearning} procedure gradually grows the user feedback $\fbf$ and refines the hypothesis space $\progs$ until all pairs of programs in $\progs$ are indistinguishable according to Definition~\ref{def:indis}.  Initially, the user feedback $\fbf$ contains the input-output examples provided by the user (line 2). In each iteration, the algorithm first computes the new hypothesis space by calling {\sc RefineHS} (shown in Figure~\ref{fig:refinehs}), which computes $\hs{\progs}{\fbf}$ from Definition~\ref{def:hs} with some optimizations discussed in the next subsection. Next, at line 5, it samples a program $\prog'$ from the hypothesis space and then calls the {\sc Distinguish} procedure (presented in Figure~\ref{fig:dist}) to check whether there exists any program in $\progs$ that is distinguishable from $\prog'$ according to Definition~\ref{def:hs}.  If no distinguishable programs exist, the procedure terminates. Otherwise, it selects a question using  {\sc SelectQuestion}  (explained in Section~\ref{sec:select}) to maximize pruning power, and the user's answer is added to the feedback.


The top-level algorithm relies on three auxiliary procedures: {\sc RefineHS}, {\sc Distinguish}, and {\sc SelectQuestion}. Since the first two are straightforward, we describe them here, deferring the discussion of {\sc SelectQuestion} to Section~\ref{sec:select}. As shown in Figure~\ref{fig:refinehs},  {\sc RefineHS}   iterates over all programs in the hypothesis space, checking if each $\prog \in \progs$ satisfies the feedback $\Phi$ using the \textsc{CCE} procedure for constrained conformal evaluation. If any input-output example $(\inp, \out) \in \inout$ is not included in the \textsc{CCE} result, $\prog$ is pruned from the hypothesis space. The {\sc Distinguish} procedure is presented in Figure~\ref{fig:dist} and checks if there is a program $\prog \in \progs$ that is distinguishable from $\prog'$. It first examines whether any input $\inp \in \inps$ causes \textsc{CCE} to produce different results for $\prog$ and $\prog'$. If so,  the algorithm returns true. Otherwise, it considers all possible question-answer pairs $(q, a)$ and checks if any program $\prog \in \progs$ could be distinguished from $\prog'$ based on the user's response.



\begin{figure}
    \centering
    \begin{algorithm}[H]
    \small
    \begin{algorithmic}[1]
    \Procedure{Distinguish}{$\prog', \progs, \fbf, \inps, \questions$}
    \Statex\Input{Program $\prog'$, hypothesis space $\progs$, user feedback $\fbf$,  input space $\inps$, question space $\questions$.}
    \Statex\Output{A boolean indicating whether there is any program in $\progs$ distinguishable from $\prog'$.}


    \ForAll{$\inp, \prog \in \inps \times \progs $}
    \If{$\underline{\textsc{CCE}}(\prog, \inp, \fbf) \neq \underline{\textsc{CCE}}(\prog', \inp, \fbf) $}
     \Return true
    \EndIf 
    \EndFor
    \ForAll{$(\question, \answer) \in \questions \times \textsf{Answers}(\question)$}
    \If{$\textsc{Distinguish}(\prog', \progs, \fbf \wedge (\question = \answer), \inps, \questions \setminus q)$}  \Return true
    \EndIf
    \EndFor
    \State \Return false
    \EndProcedure
    \end{algorithmic}
    \end{algorithm}
    \vspace{-0.5in}
    \caption{Procedure for checking distinguishability.}
    \label{fig:dist}
    \vspace{-.2cm}
\end{figure}

\begin{lemma}\label{lemma:dist1}
Let $\progs$ be a hypothesis space and let $\prog'$ be a program randomly sampled from $\progs$. For any user feedback $\fbf$, input space $\inps$, and question space $\questions$, we have
$
\forall \prog_1, \prog_2 \in \progs. \ \prog_1 \indist_\fbf \prog_2
$ if and only if  {\sc Distinguish}($\prog', \progs \backslash \prog', \fbf, \inps, \questions$) returns false.
\end{lemma}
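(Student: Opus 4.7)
My plan is to reduce the pairwise indistinguishability claim to a single-representative check, by noting that $\indist_\fbf$ is an equivalence relation on programs: it is defined in terms of pointwise equality of CCE output sets across all feedback extensions, and equality is an equivalence relation. Thus it suffices to prove the equivalent claim that $\textsc{Distinguish}(\prog', \progs \setminus \prog', \fbf, \inps, \questions)$ returns $\texttt{false}$ if and only if $\prog' \indist_\fbf \prog$ for every $\prog \in \progs \setminus \prog'$. I would then prove this biconditional by induction on $|\questions|$, which bounds the depth of the recursion in \textsc{Distinguish}.

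For the forward direction, I would assume every pair in $\progs$ is indistinguishable under $\fbf$ and argue that both loops of \textsc{Distinguish} fall through to \texttt{false}. Instantiating Definition~\ref{def:indis} with the empty extension yields $\textsc{CCE}(\prog, \inp, \fbf) = \textsc{CCE}(\prog', \inp, \fbf)$ for every $\inp$ and $\prog$, so the first loop cannot trigger the \texttt{true} branch. For the second loop, I would establish a monotonicity lemma: if $\prog_1 \indist_\fbf \prog_2$ and $a$ is a valid answer to some unanswered $q$, then $\prog_1 \indist_{\fbf \wedge (q = a)} \prog_2$, because any extension of $\fbf \wedge (q = a)$ is also an extension of $\fbf$. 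The inductive hypothesis on the smaller space $\questions \setminus q$ then discharges each recursive call.

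For the backward direction, I would assume \textsc{Distinguish} returns $\texttt{false}$, fix an arbitrary $\prog \in \progs \setminus \prog'$, and verify Equation~\ref{eq:dis} for an arbitrary question set $Q \subseteq \questions$ with $\assump' \in \mathsf{Possible}(Q)$. The base case $|Q| = 0$ is immediate from the first loop. For the inductive step, I would select any $q \in Q$ with its assigned answer $a$ in $\assump'$, decompose $\assump' = (q = a) \wedge \assump''$, and invoke the outer induction hypothesis on the recursive call at $\fbf \wedge (q = a)$ with $\questions \setminus q$ to obtain $\prog \indist_{\fbf \wedge (q = a)} \prog'$; instantiating this with $\assump''$ yields the required CCE equality. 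Transitivity of equivalence then lifts $\prog' \indist_\fbf \prog$ across all $\prog$ to pairwise indistinguishability on $\progs$.

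The main obstacle will be the bookkeeping in this backward inductive step: a hypothetical extension $\assump'$ can mention questions in any order, whereas the recursion in \textsc{Distinguish} strips them off in a fixed enumeration order. I expect this to be resolved cleanly by commutativity of conjunction, since the outer loop of \textsc{Distinguish} explores every possible next $(\question, \answer)$, so any decomposition of $\assump'$ corresponds to some path in the recursion tree. The forward direction and the equivalence-relation reduction should both be essentially routine once the monotonicity lemma is in place.
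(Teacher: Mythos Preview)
Your proposal is correct and follows essentially the same approach as the paper's proof: both directions hinge on transitivity through the sampled representative $\prog'$ and on the fact that \textsc{Distinguish} recursively explores every possible feedback extension. The paper argues this more informally---the forward direction is a one-liner from Definition~\ref{def:indis}, and the backward direction is a short proof by contradiction that simply asserts ``\textsc{Distinguish} enumerates all possible sets of questions and all possible answers'' and then invokes transitivity through $\prog'$. Your induction on $|\questions|$ and the accompanying monotonicity lemma make precise exactly the step the paper leaves implicit, so your write-up is more rigorous, but there is no substantive difference in strategy.
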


\begin{theorem}
Let $\prog^*$ be the ground truth program, and let $\prog$ be the program returned by the {\sc ActiveLearning} procedure. Then, assuming $\prog^* \in \progs$, we have 
$
\forall \inp \in \inps. \ \gtsem{\prog}(\inp) = \gtsem{\prog^*}(\inp)
$.
\end{theorem}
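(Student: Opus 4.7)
The plan is to show that, at termination, the ground truth program $\prog^*$ still lies in the refined hypothesis space $\progs$ and is indistinguishable from the returned program $\prog$ under the accumulated feedback $\fbf$; the conclusion will then follow directly from Lemma~\ref{lemma:sufficient}. This decomposes into two loop invariants that I would establish by induction on the iteration count of \textsc{ActiveLearning}.

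The first invariant is that $\prog^* \in \progs$ throughout execution. The initial case is given by assumption. The only place where programs are removed is inside \textsc{RefineHS}, which discards $\prog$ whenever some $(\inp, \out) \in \inout$ violates $\out \in \textsc{CCE}(\prog, \fbf, \inp)$. For $\prog^*$, every example in $\inout$ and every new label of the form $(\synthfunc, \inp, \out)$ added to $\fbf$ through \textsf{QueryUser} reflects the ground-truth behavior, i.e.\ $\out = \gtsem{\prog^*}(\inp)$, and every neural label $(\nn, \inp, \out)$ is similarly truthful by assumption on the user. Because the conformal predictor satisfies the coverage property, we have $\gtsem{\prog^*}(\inp) \in \confsem{\prog^*}_{\fbf}(\inp)$ for every $\inp$, and the satisfiability check in the \textsc{Neural} rule always keeps the true label, so $\prog^*$ passes every \textsc{CCE} consistency check and survives each call to \textsc{RefineHS}. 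I would package this as a short sublemma: \textsc{RefineHS} preserves any program consistent with $\fbf$ under the ground-truth semantics.

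The second invariant concerns the termination condition. The loop only exits when \textsc{Distinguish}$(\prog', \progs \setminus \prog', \fbf, \inps, \questions)$ returns false for the sampled $\prog' \in \progs$. By Lemma~\ref{lemma:dist1}, this is equivalent to $\forall \prog_1, \prog_2 \in \progs.\ \prog_1 \indist_\fbf \prog_2$. Combining the two invariants: the returned program $\prog$ and the ground truth $\prog^*$ both belong to $\progs$ at termination, hence $\prog \indist_\fbf \prog^*$. Applying Lemma~\ref{lemma:sufficient} to this pair, since $\fbf$ consists solely of accurate user feedback, yields $\forall \inp \in \inps.\ \gtsem{\prog}(\inp) = \gtsem{\prog^*}(\inp)$, which is precisely the conclusion.

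The main obstacle I expect is the invariant that $\prog^*$ is never pruned. It depends critically on the coverage property of conformal prediction, which the paper guarantees only with bounded (user-controllable) probability; so strictly speaking the theorem should be read as holding with that same probability, and I would spell this out when assembling the invariants. Once this technicality is accepted, the rest of the argument is a clean composition of Lemma~\ref{lemma:dist1} and Lemma~\ref{lemma:sufficient} with the simple inductive preservation argument for \textsc{RefineHS}.
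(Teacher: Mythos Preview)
Your proposal is correct and follows essentially the same approach as the paper: show that $\prog^*$ survives every call to \textsc{RefineHS} (the paper compresses this to one sentence invoking the assumption that feedback is consistent with ground truth), then apply Lemma~\ref{lemma:dist1} at termination and conclude via Lemma~\ref{lemma:sufficient}. Your version is more explicit about the inductive invariant and rightly flags that coverage is only guaranteed with bounded probability, a caveat the paper handles by treating coverage as a standing assumption in Section~\ref{sec:conformal}.
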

\paragraph{\bf \emph{Remark}.} In the worst case, \textsc{Distinguish} requires $\mathcal{O}(|\progs| \times |\inps| \times |\questions| \times |\textsf{Answers}(\questions)|)$ \textsc{CCE} calls. However, in practice, \textsc{Distinguish} is efficient because it stops as soon as it finds a single program that is distinguishable from $\prog'$. Early in the active learning loop, when the hypothesis space is large, it is easy to find such a program, so this subroutine runs very quickly. After each round of user interaction, the hypothesis space shrinks, and although finding a distinguishable program becomes harder, fewer programs remain -- hence, the running time continues to be fast. 

\subsection{Practical Constrained Conformal Evaluation}\label{sec:confeval}


Several aspects of the {\sc ActiveLearning} procedure require evaluating \emph{many} programs on \emph{many} inputs using constrained conformal semantics.
In particular, in Figures~\ref{fig:refinehs} and~\ref{fig:dist}, constrained conformal evaluation is performed using the call to procedure ${\textrm{CCE}}$, whose invocations are \underline{underlined}. 
Unfortunately, when the prediction sets are large, performing constrained conformal evaluation can be very costly.
In this section, we present a CCE algorithm that leverages user-provided input-output examples to make constrained conformal evaluation more efficient in practice.


To understand the key insight behind the \textsc{CCE} procedure, consider the conformal evaluation of a symbolic function $\symfunc$: As shown in the  {\sc Symb} rule of Figure~\ref{fig:constrainedsem}, an expression $\symfunc(E_1, \ldots, E_n)$ is evaluated by first evaluating each $E_i$ and then taking the cross product of the resulting sets $\confoutint_i$. If each $\confoutint_i$ has cardinality $m$, the expression can produce up to $m^n$ outputs, causing an exponential blowup.
However, in practice, we find that many of these outputs are \emph{infeasible} given the user feedback provided so far. As an example, consider the following function:
\begin{align*}
\small
    \lambda  \ x_1, x_2, x_3 .  (\mnistpred(x_1) + \mnistpred(x_2)) \times \mnistpred(x_3) 
\end{align*}
Suppose the prediction set for $x_3$ includes $\{2, 4, 7\}$, but the expected output for input $\inp$ is $9$. Since multiplying by an even number produces an even result, the labels 2 and 4 for $x_3$ must be spurious. Thus, even without explicit user feedback, we can refine $x_3$'s prediction set using this information. This motivates our \textsc{CCE} procedure based on bidirectional abstract interpretation.


\begin{figure}[!t]
\vspace{-.2in}
\begin{minipage}[t]{0.55\textwidth}   
    \centering
     \small
    \begin{algorithm}[H]
    \small
    \begin{algorithmic}[1]
    \Procedure{CCE}{$\prog, \fbf, \inp$}
    \Statex\Input{$\prog$ is a program, $\fbf$ is  feedback, and $\inp$ is an input.}
    \Statex\Output{An output set $\confout$}
    \State $\spec \assign \textsc{ForwardAI}(\prog, \fbf, \inp)$
    \State $\spec \assign \textsc{BackwardAI}(\prog, \theta)$
    \State $\confoutint \assign \textsc{EvalConsistent}(\prog, \spec, \fbf)$
    \State \Return $\{\out \ | \ (\out, \varphi) \in \confoutint \} $ 
    \EndProcedure
    \end{algorithmic}
    \end{algorithm}
    \vspace*{-1cm}
    \caption{Constrained conformal evaluation algorithm.}
    \label{fig:cce}
\vspace*{-0.3cm}
    \centering
    \begin{algorithm}[H]
    \small
    \begin{algorithmic}[1]
    \Procedure{EvalConsistent}{$\prog, \theta, \inp$}
    \Statex\Input{$\prog$ is an AST, $\theta$ is a constraint \\ specification, and $\inp$ is an input.}
    \Statex\Output{Output set $\confout$}

    \State $n \assign \textsf{Root}(\prog);  \quad f \assign \textsf{Label}(n); \quad \confoutint \assign \varnothing$
    \If{$\textsf{Leaf}(n)$} \Return $\confsem{\prog}_{\spec[n]}(\inp)$
    \EndIf
    \ForAll{$n_i \in \textsf{Children}(n)$}
    \State $\confoutint_i \assign \textsc{EvalConsistent}(n_i, \spec, \inp)$
    \EndFor
    \ForAll{$((\out_1, \varphi_i), \ldots, (\out_k, \varphi_k)) \in \confoutint_1 \times \ldots \times \confoutint_k $}
    \If{$\textsf{SAT}(\underset{i}{\bigwedge} \varphi_i) \wedge \stansem{f}(\out_1, \ldots, \out_k) \in \gamma(\spec[n])$}
    \State $\confoutint \assign \confoutint \cup \{(\stansem{f}(\out_1, \ldots, \out_k), \underset{i}{\bigwedge} \varphi_i)\}$
    \EndIf 
    \EndFor 
    \State \Return $\confoutint$
    \EndProcedure
    \end{algorithmic}
    \end{algorithm}
    \vspace*{-1cm}
    \caption{Conformal evaluation using abstract values.}
    \label{fig:evalconsistent}
\end{minipage}%
\begin{minipage}[t]{0.45\textwidth}
    \centering
    \begin{algorithm}[H]
    \small
    \begin{algorithmic}[1]
    \Procedure{ForwardAI}{$\prog, \fbf, \inp$}
    \Statex\Input{$\prog$ is an AST, $\fbf$ is  feedback,  $\inp$ is input.}
    \Statex\Output{Mapping $\spec$ from AST nodes to abstract values}

    \ForAll{$n \in \textsf{Nodes}(\prog)$}
    \If{$\fbf \models \textsf{Label}(n)(\inp) = \out$}
    \State $\spec[n] \assign \alpha(\out)$
    \Else \  $\spec[n] \assign \abssem{\textsf{SubProg}(\prog, n)}(\inp) $
    \EndIf 
    \EndFor
    \State \Return $\spec$ 
    \EndProcedure
    \end{algorithmic}
    \end{algorithm}
    \vspace*{-1cm}
    \caption{Forward abstract interpretation.}
    \label{fig:initspec}
\vspace*{-0.3cm}
    \centering
    \begin{algorithm}[H]
    \small
    \begin{algorithmic}[1]
    \Procedure{BackwardAI}{$\prog, \theta$}
    \Statex\Input{$\prog$ is an AST, $\theta$ is a mapping from AST nodes to abstract values.}
    \Statex\Output{A strengthened version of $\spec$}

    \State $n \assign \textsf{Root}(\prog)$
    \State $f \assign \textsf{Label}(n)$
    \ForAll {$n_0, \ldots, n_m \in \textsf{Children}(n) $}
    \State $\chi \assign \backsem{f}{i}(\spec[n], \spec[n_0], \ldots, \spec[n_m])$
    \State $\spec[n_i] \assign \spec[n_i] \sqcap  \chi$ 
    \State $\spec \assign \textsc{BackwardAI}(n_i, \theta)$
    \EndFor
    \State \Return $\spec$

    \EndProcedure
    \end{algorithmic}
    \end{algorithm}
    \vspace*{-1cm}
    \caption{Backward abstract interpretation.}
    \label{fig:strengthen}
\end{minipage}%
\vspace*{-0.4cm}
\end{figure}

As shown in Figure~\ref{fig:cce}, \textsc{CCE} takes three inputs, namely a program $\prog$ to be evaluated (represented in terms of its AST), the user feedback $\fbf$, and an input $\inp$. The result of \textsc{CCE} is a set $\confout$ such that $\confout = \confsem{P}_\fbf(I)$.  The algorithm consists of three high-level steps. The first two steps compute a mapping $\spec$ from each node $n$ of the program's AST to an abstract value. The third step, called {\sc EvalConsistent} (shown in Figure~\ref{fig:evalconsistent}), uses the computed mapping $\spec$ to refine the prediction sets, with the goal of preventing an exponential blowup in their size as they are propagated through the AST.  In particular, after evaluating all the children of a node $n$ (lines 5-6 in Figure~\ref{fig:evalconsistent}), {\sc EvalConsistent}  filters evaluation  results that are inconsistent with $\spec[n]$ (lines 8--9). Because the filtering operation helps keep the evaluation results of subexpressions in check, {\sc EvalConsistent} allows for more practical propagation of these output sets.   

The \textsc{CCE} procedure computes mapping $\spec$ using bidirectional abstract interpretation. The forward direction, summarized in procedure {\sc ForwardAI} (Figure~\ref{fig:initspec}), initializes $\spec$ using the user feedback $\fbf$ and forward abstract interpretation on input $\inp$. Note that, even though we wish to evaluate the program on a \emph{concrete} input $\inp$, abstract interpretation is useful for summarizing prediction sets into a \emph{single} abstract value, thereby avoiding potential exponential blowup. In the {\sc ForwardAI} procedure, we assume access to an abstract semantics $\abssem{\cdot}$ (see line 6) that can be used to compute the abstract output of a program $P$ on a concrete input $I$. 

Given mapping $\spec$ initialized via forward abstract interpretation, the \textsc{CCE} procedure invokes {\sc BackwardAI} to \emph{refine} the abstract values  for each node. Starting with the abstract value for the root, {\sc BackwardAI} (see Figure~\ref{fig:strengthen}) uses abstract transformers in the inverse direction to compute the abstract values for the children of a construct $f$. As shown on line 5, we use the notation $\backsem{\cdot}{i}$ to denote an \emph{inverse abstract transformer} that computes the abstract value for the $i$'th operand of $f$, given its output abstract value and the previous abstract values of its  children. 

\begin{example}

Consider the following program in the hypothesis space:
{\small 
\[
\textstyle
\vspace{-0.05in}
\prog = \lambda x_1, x_2, x_3. \tightunderbrace{\mnistpred(x_3)}{a} + \overbrace{(\tightunderbrace{\mnistpred(x_1)}{b} 
\times \tightunderbrace{\mnistpred(x_2)}{c})}^{d} 
\]
}
Suppose  the user provides a new input-output example $\fbf \equiv \synthfunc(\inp_1, \inp_2, \inp_3) = 16$ where each $\inp_j$ is an image of a digit, and suppose we have the following prediction sets:
{\small 
\[
\confsem{\mnistpred}(\inp_1) = \{1, 2 \} \quad 
\confsem{\mnistpred}(\inp_2) = \{2, 4 \} \quad 
\confsem{\mnistpred}(\inp_3) = \{6, 7, 8 \} \quad 
\]
}

Using the {\sc ForwardAI} procedure, we first compute the following mapping $\spec$ in the \emph{interval abstract domain}~\cite{cousot77}: $a \mapsto [6,8]$, $b \mapsto [1,2]$, $c \mapsto [2,4]$, $d \mapsto [2,8]$, and $\prog \mapsto [16,16]$. Here, $a$ maps to $[6,8]$ because the prediction set $\{6,7,8\}$ for $\inp_3$ is abstracted as $[6,8]$. The abstract value for $\prog$ is $[16,16]$ due to the user-provided output of 16, and $d$ is computed using the forward abstract transformer for multiplication.
Next, {\sc BackwardAI} refines the abstract values. Starting with $\prog$ and using the inverse semantics for $+$, we infer $\spec[a] = [8,14] \sqcap [6,8] = [8,8]$, and then $\spec[d] = [8,8] \sqcap [2,8] = [8,8]$. Using the inverse transformer for $\times$, we compute $\spec[b] = [2,4] \sqcap [1,2] = [2,2]$ and $\spec[c] = [4,4] \sqcap [2,4] = [4,4]$. Finally, {\sc EvalConsistent} evaluates each AST node. For $a$, the only consistent value in $\{6,7,8\}$ is $8$, and similarly, $2$ and $4$ are obtained for $b$ and $c$. The result of {\sc EvalConsistent} is the singleton set ${16}$, allowing us to conclude $\prog \models \fbf$ without exponential blowup during conformal evaluation.

\end{example}

\vspace{-0.1in}
\paragraph{\textbf{Remark 1.}} Our \textsc{CCE} strategy fundamentally relies on \emph{both}  forward \emph{and} backward  reasoning. Without  backward reasoning, {\sc EvalConsistent} would not be able to filter out any values. Conversely, backward reasoning is not possible without forward reasoning, as inverse transformers require the abstract values of the arguments in addition to the output.

\vspace{-0.05in}
\paragraph{\textbf{Remark 2.}} 
Given an input $\inp$ to which $n$ neural constructs may be applied, the worst case complexity of $\textsc{CCE}$ is $\mathcal{O}(n^m)$, where  $m$ is the maximum prediction set size. Thus, while our $\textsc{CCE}$ algorithm does not change the \emph{theoretical} worst-case complexity of constrained conformal evaluation, it makes a significant difference in practice: {our experiments in Section~\ref{sec:eval} show that the na\"{i}ve conformal evaluation strategy grows exponentially with respect to prediction set size, whereas our proposed method does not.}

\subsection{Question Selection}\label{sec:select}

In this section, we describe the {\sc SelectQuestion} procedure that identifies a question with maximal pruning power. A naive approach would consider all possible questions, compute their pruning powers, and return the best one. Unfortunately, computing the pruning power of a question can be very expensive because it requires  refining the hypothesis space and performing constrained conformal evaluation for \emph{every} remaining program. Thus, even with our optimized \textsc{CCE} procedure, this method can be very expensive.

The key idea of our question selection algorithm is to \emph{overapproximate} the pruning power of a question through a new evaluation strategy called \emph{bounded conformal evaluation (\textsc{BCE})}. The difference between \textsc{BCE} and \textsc{CCE} is similar to the difference between \emph{beam search} and \emph{exhaustive search}. Like beam search limits the beam length to $k$, \textsc{BCE} restricts output set cardinalities to a fixed $k$, which acts as a hyper-parameter for our algorithm.  \textsc{BCE} approximates the constrained conformal evaluation rules   from Figure~\ref{fig:constrainedsem}
 by modifying the {\sc Neural} and {\sc Symb} rules to choose $k$ elements from $\Delta$ whenever $|\Delta| > k$. 
Since \textsc{BCE} is a simple modification of these rules, we omit its formal presentation, using $\confsem{P}_\fbf^k(I)$ to denote the result of \textsc{BCE} with hyper-parameter $k$. Unlike \textsc{CCE}, \textsc{BCE} avoids exponential blowup, as all output sets are bounded by $k$.

\begin{figure}[!t]
    \centering
    \small
    \begin{algorithm}[H]
    \begin{algorithmic}[1]
    \Procedure{SelectQuestion}{$\progs, \questions, \fbf$}
    \Statex\Input{$\progs$ is a hypothesis space, $\questions$ is a question space, and $\fbf$ is a feedback formula.}
    \Statex\Output{Question $\question^* \in \questions$ with highest pruning power}

    \vspace{0.05in}
    \State $\question^* \assign \bot \ ; \ \beta^* \assign 0$ \algcmt{$q^*$ denotes the best question so far, and $\beta^*$ is its pruning power}
    \State $\worklist \assign \textsf{SortedSet}( )$ \algcmt{Worklist sorted in decreasing order of pruning power}
   \vspace{0.04in}
    \ForAll{$(f, \inp) \in \questions$} \algcmt{Worklist initialization}
    \State $\eta \assign \text{max}_{\answer_i} \ | \{ \prog \in \progs \ | \ \prog \models_k (\fbf \land f(\inp) = a_i) \} |$ \algcmt{Refine HS using \textsc{BCE}}
    \State $\beta \assign {(|\progs| - \eta)}/{|\progs|}$ \algcmt{$\beta$ overapproximates the pruning power of $(f, \inp)$}
    \State $\worklist \assign \worklist \cup ((f, \inp), \beta)$
    \EndFor 

    \While{$\worklist \neq \varnothing$}
    \State $(\question, \beta) \assign \worklist\textsf{.first}()$ \algcmt{Process question with best overapproximate pruning power}
    \If{$\beta^* \geq \beta$} \text{break;} \algcmt{Found question with best pruning power}
    \EndIf 
    \State $\eta \assign \text{max}_{\answer_i}  |\textsc{RefineHS}(\progs, \question, \fbf \land f(\inp) = a_i)|$
    \State $\beta' \assign  {(|\progs| - \eta)}/{\progs}$ \algcmt{Compute actual pruning power of $q$}
    \If{$\beta' > \beta^*$} 
     $\question^* \assign \question \ ; \ \beta^* \assign \beta'$ \algcmt{Found question with higher pruning power}
    \EndIf
    \EndWhile 
    \State \Return $\question^*$
    \EndProcedure
    \end{algorithmic}
    \end{algorithm}
    \vspace{-0.4in}
    \caption{Question selection algorithm. }
    \vspace{-0.1in}
    \label{fig:qselect}
\end{figure}

\begin{definition}{\bf (\textsc{BCE}-consistency)} Let $\prog$ be a program and $\fbf_\fb$ be user feedback. We say that $P$ is \textsc{BCE}-consistent with $\fbf_\fb$, denoted $\prog \models_k \fbf_\fb$, iff
$ 
    \forall (\synthfunc, \inp, \out) \in \fb. \ \out \in \confsem{P}_{\fbf_\fb}^k(\inp)
$.
\end{definition}

Our question selection algorithm utilizes the observation that \textsc{BCE}-consistency can be used to over-approximate the pruning power of a question, as stated in the following theorem:

\begin{theorem}
Let $q = (f, I)$ be a question and $\progs$ be a program space consistent with $\fbf$. Then:
{\small 
\begin{align*}
    \pp{q, \progs} \leq  \frac{|\progs| - \underset{i}{\text{max}}  \ | \{ \prog \in \progs \ | \ \prog \models_k \fbf \land f(I) = a_i \}  |}{|\progs|}
\end{align*}
}
\end{theorem}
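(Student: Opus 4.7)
The plan is to show that \textsc{BCE}-consistency is a \emph{stronger} condition than \textsc{CCE}-consistency, which immediately yields the desired upper bound because a subset-containment translates directly into the claimed inequality on pruning power. More precisely, I would establish the key monotonicity lemma $\confsem{P}_\fbf^k(I) \subseteq \confsem{P}_\fbf(I)$ for every program $P$, input $I$, and feedback $\fbf$, and then chase this through the definitions.

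First I would prove the inclusion $\confsem{P}_\fbf^k(I) \subseteq \confsem{P}_\fbf(I)$ by structural induction on the program $P$ (equivalently, on the derivation of the \textsc{BCE} judgment). The leaves and the {\sc Lookup}/{\sc Assign}/{\sc Prog} rules are immediate because they do not discard anything. The interesting cases are {\sc Neural} and {\sc Symb}: in each case, the \textsc{BCE} rule forms the same candidate set $\confoutint$ as in \textsc{CCE} and then chooses a size-$k$ subset whenever $|\confoutint|>k$. Whichever subset is chosen, it is contained in the full \textsc{CCE} output of that subexpression, and the inductive hypothesis propagates this containment upward through the combinator (since symbolic application and the satisfiability filter are both monotone in the operand sets). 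This gives $\confsem{P}_\fbf^k(I)\subseteq \confsem{P}_\fbf(I)$.

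As a direct corollary, if $P\models_k \fbf_\fb$, i.e.\ $O\in\confsem{P}_{\fbf_\fb}^k(I)$ for every $(\synthfunc,I,O)\in\fb$, then $O\in\confsem{P}_{\fbf_\fb}(I)$ for each such triple, so $P\models \fbf_\fb$. Consequently, for every answer $a_i$,
\begin{equation*}
\{P\in\progs : P\models_k \fbf\land f(I)=a_i\}\;\subseteq\;\{P\in\progs : P\models \fbf\land f(I)=a_i\}\;=\;\hs{\progs}{\fbf\land f(I)=a_i},
\end{equation*}
where the last equality uses Definition~\ref{def:hs}. Moreover, since every $P\in\progs$ already satisfies $\fbf$ by assumption, $\hs{\progs}{\fbf\land f(I)=a_i}=\hs{\progs}{f(I)=a_i}$, which is exactly the set whose cardinality appears in the definition of $\pp{q,\progs}$ (Definition~\ref{def:pp}).

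Taking cardinalities and then the maximum over $i$ reverses the direction of the inequality when we subtract from $|\progs|$:
\begin{equation*}
|\progs|-\max_i\bigl|\{P\in\progs : P\models_k \fbf\land f(I)=a_i\}\bigr|\;\geq\;|\progs|-\max_i\bigl|\hs{\progs}{f(I)=a_i}\bigr|.
\end{equation*}
Dividing by $|\progs|$ and applying Definition~\ref{def:pp} on the right-hand side yields exactly the claim. The only delicate step is the structural induction for the monotonicity lemma; the main subtlety there is that the \textsc{Symb} rule forms a Cartesian product of operand sets and then filters by satisfiability, so I have to argue carefully that restricting each operand to a subset yields a subset of the product and that the satisfiability filter is preserved; both are routine once written out, making this the only real obstacle in an otherwise direct argument.
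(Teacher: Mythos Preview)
Your proposal is correct and follows essentially the same approach as the paper: establish that \textsc{BCE} outputs are subsets of \textsc{CCE} outputs, deduce that $P\models_k\fbf$ implies $P\models\fbf$, and then chase this through the definitions of hypothesis space and pruning power. Your treatment is in fact more careful than the paper's, which asserts the key inclusion $\confsem{P}_{\fbf}^k(I)\subseteq\confsem{P}_{\fbf}(I)$ in one line (``BCE simply samples subsets of the prediction sets output by CCE'') rather than spelling out the structural induction you outline.
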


We now discuss the {\sc SelectQuestion} algorithm summarized in Figure~\ref{fig:qselect}. This algorithm takes as input a hypothesis space $\progs$ consistent with feedback $\fbf$ and a set of questions $\questions$, and returns a question $q^* \in \questions$ such that $\forall q \in \questions, \ \pp{q^*, \progs} \geq \pp{q, \progs}$. To compute $q^*$, the algorithm maintains a worklist $\worklist$ of pairs $(q, \beta)$, where $q$ is a question and $\beta$ is an upper bound on its pruning power. The worklist is initialized in lines 4–7 using \textsc{BCE}-consistency (line 5). The second loop (lines 8–13) processes elements in \emph{decreasing} order of over-approximated pruning power, updating the best question $q^*$ and highest pruning power $\beta^*$ encountered. Each iteration removes the element with the highest possible pruning power $\beta$ under \textsc{BCE} semantics. If $\beta^*$ exceeds $\beta$, the search terminates, as $\beta$ is an upper bound for all remaining questions. Otherwise, the algorithm computes the \emph{actual} pruning power $\beta'$ of $q$ by calling {\sc RefineHS}. If $\beta'$ is larger than $\beta^*$, $q^*$ and $\beta^*$ are updated.


\begin{theorem}
The {\sc SelectQuestion} procedure returns a question $q^* \in \questions$ with the highest pruning power -- i.e.,
$
\forall q \in \questions. \ \pp{q, \progs} \leq \pp{q^*, \progs}
$.
\end{theorem}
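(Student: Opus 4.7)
}
The plan is to prove the theorem via a loop invariant argument that leverages the preceding theorem, which states that for any question $q = (f, I)$ the BCE-based quantity
\[
\beta_q = \frac{|\progs| - \max_i |\{\prog \in \progs \mid \prog \models_k \fbf \land f(I) = a_i\}|}{|\progs|}
\]
is an upper bound on the true pruning power $\pp{q, \progs}$. Thus every element inserted into the worklist $\worklist$ in lines~4--7 is of the form $(q, \beta_q)$ with $\pp{q, \progs} \leq \beta_q$, and the worklist is kept sorted in decreasing order of $\beta_q$.

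The invariant I would maintain at the top of each iteration of the second loop (lines~8--13) is: \emph{$\beta^* = \pp{q^*, \progs}$ and $\beta^* \geq \pp{q', \progs}$ for every question $q'$ already removed from $\worklist$.} Initialization holds trivially with $\beta^* = 0$ and no processed questions. To show preservation, note that when a question $q$ is popped, the algorithm calls $\textsc{RefineHS}$ (which, by the correctness of \textsc{CCE}, computes the exact hypothesis space) and sets $\beta'$ to the true pruning power $\pp{q, \progs}$. The update on line~13 then assigns $q^* \gets q$ and $\beta^* \gets \beta'$ exactly when $\beta' > \beta^*$, so after the iteration $\beta^*$ equals the maximum of $\pp{q, \progs}$ and the old $\beta^*$, preserving the invariant.

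For termination, there are two cases. (i) If the worklist becomes empty, every question in $\questions$ has been processed, so by the invariant $\pp{q, \progs} \leq \beta^* = \pp{q^*, \progs}$ for all $q \in \questions$. (ii) If the loop breaks early at line~10 with some $(q, \beta)$ still in $\worklist$ satisfying $\beta^* \geq \beta$, then for every unprocessed question $q'$ the sorted-order property gives $\beta_{q'} \leq \beta$, and by the upper bound theorem $\pp{q', \progs} \leq \beta_{q'} \leq \beta \leq \beta^* = \pp{q^*, \progs}$. Combined with the invariant for processed questions, this yields $\pp{q, \progs} \leq \pp{q^*, \progs}$ for all $q \in \questions$.

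The main obstacle I expect is justifying correctness of the two pruning-power computations that feed into the invariant: the BCE-based $\beta$ values must be valid upper bounds (supplied by the previous theorem), and the $\beta'$ computed via $\textsc{RefineHS}$ must equal the \emph{true} pruning power as defined in Definition~\ref{def:pp}. The latter requires appealing to the soundness and completeness of $\textsc{CCE}$ (which underlies $\textsc{RefineHS}$) so that $|\textsc{RefineHS}(\progs, q, \fbf \wedge f(I)=a_i)|$ equals $|\hs{\progs}{\fbf \wedge f(I)=a_i}|$ for each candidate answer $a_i$. Once these two semantic facts are in hand, the sorted-worklist plus early-termination argument above closes the proof cleanly.
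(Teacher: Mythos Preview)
Your proposal is correct and follows essentially the same approach as the paper: both arguments partition $\questions$ into questions whose true pruning power was computed (handled by your invariant, the paper's ``either'' case) and questions that remain unprocessed (handled by the BCE upper bound plus the sorted-worklist property, the paper's ``or'' case). Your loop-invariant framing is slightly more explicit than the paper's terse case analysis, but the key ingredients---Corollary~\ref{cor:qs3} for the upper bound and the correctness of $\textsc{RefineHS}/\textsc{CCE}$ for the exact pruning power---are identical.
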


\noindent
{\bf \emph{Remark 1.}} Any evaluation strategy that over-approximates pruning power (i.e., under-approximates $\hs{\progs}{\fbf}$) could be used instead of \textsc{BCE} without affecting the optimality of {\sc SelectQuestion}. 

\vspace{0.05in}
\noindent
{\bf \emph{Remark 2.}}
Since $\textsc{SelectQuestion}$ procedure performs $\textsc{RefineHS}$ at most once for every (question, answer) pair, it can require up to  $\mathcal{O}(|\progs| \times |\inps| \times |\questions| \times |\textsf{Answers}(\questions)|)$ calls to \textsc{CCE}. While the theoretical worst-case complexity of {\sc Distinguish} is the same as that of  $\textsc{SelectQuestion}$, empirically, we find that  $\textsc{SelectQuestion}$ dominates the runtime of $\textsc{ActiveLearning}$ (see Section ~\ref{sec:runtime}). 
\section{Implementation}\label{sec:impl}
We have implemented the proposed active learning technique as a new tool called \toolname{} written in Python. \toolname{} can be customized for different application domains by (1) instantiating the meta-DSL from Figure~\ref{fig:metalanguage} with a specific DSL,  (2) defining a suitable abstract domain, along with appropriate forward and backward abstract transformers, and (3) supplying a synthesis engine that can be used to construct the initial hypothesis space for that domain.

\vspace{-0.05in}
\paragraph{{\bf \emph{Sampling.}}}  
Following prior work~\cite{samplesy}, we  compute the pruning power of each question on a sampled subset of the hypothesis space. As shown in ~\cite{samplesy}, this sampling procedure assures that the selected query has pruning power $\varepsilon$-close to that of the optimal query with bounded probability.

\vspace{-0.05in}
\paragraph{{\bf \emph{BCE generalization.}}} Recall that our presentation of the question selection algorithm uses BCE to over-approximate the pruning power of a question. 
\toolname{} implements a generalized version of BCE  that samples $\mathsf{min}(k, n \times k')$ elements from the prediction set, where $k$ is a positive integer, $k'$ is a real-valued percentage in the range $(0, 1)$, and $n$ is the size of the  prediction set. This generalized BCE  strategy allows  sampling  some percentage $k'$ of  the  prediction set up to some limit $k$. All hyper-parameters used in our implementation, including $k$ and $k'$,  are chosen for each domain based on a validation set. 
\paragraph{\bf \emph{Computing prediction sets for neural functions}}
\change{Our evaluation technique utilizes conformal prediction, a technique for augmenting machine learning methods with prediction sets that guarantee coverage of the true labels with some user-specified probability. Given any model $f: \mathcal{X} \to \mathcal{Y}$, the goal is to learn a model $\tilde{f}: X \to 2^{\mathcal{Y}}$ that produces sets of labels that include the true label with high probability. Critically, this guarantee should be achieved regardless of how well or poorly the initial model $f$ performs. We assume the existence of a \textit{calibration dataset} $Z \in \mathcal{D}^n$ with which to train $\tilde{f}$, and a \textit{non-conformity score} $s_{f} : \mathcal{X} \times \mathcal{Y} \to \mathbb{R}$, which is a heuristic measure of how ``close'' the prediction $f(x)$ is to any label $y$. Intuitively, a smaller score $s_f(x,y)$ means that $y$ has a higher likelihood of being the true label of $x$. The basic idea of conformal prediction is to consider the collection of prediction-set functions $\{\tilde{f}_{\tau}\}_{\tau \in \mathbb{R}}$ parameterized by a thresholding score-value $\tau$:
\begin{equation*}
\tilde{f}_{\tau}(x) = \{y \in \mathcal{Y}: s_{f}(x, y) \leq \tau \}
\end{equation*}
The calibration dataset $Z$ gives an empirical distribution over the non-conformity scores induced by $\mathcal{D}$, and using this, a thresholding value $\tilde{\tau}$ can be chosen to achieve a guarantee of the  form: 
\begin{equation*}
\mathbb{P}_{Z \sim \mathcal{D}^{n}, (x,y) \sim \mathcal{D}}[y \in \tilde{f}_{\tilde{\tau}}(x)] \geq 1 - \delta
\end{equation*}
Here, $\mathcal{D}$ is the fixed distribution from which data is drawn, and $\delta \in (0, 1]$ is the miscoverage rate. Although conformal prediction makes no assumptions on how $s_f$ is defined, a well-constructed non-conformity score often results in smaller prediction sets.}
\section{Evaluation}\label{sec:eval}

In this section, we describe the results of our experimental evaluation, which aims to answer the following research questions: 

\begin{itemize}[leftmargin=*]

\item \textbf{RQ1:} Can our active learning approach identify the ground truth program, and how do the results compare against prior work on active learning?

\item \textbf{RQ2:} How many rounds of user interaction are required, and how does this compare against alternative question selection approaches?
 
\item \textbf{RQ3:} How important are our key algorithmic ingredients for the runtime and scalability of our approach?
\item \textbf{RQ4:} Which components of \toolname dominate its runtime in practice?
\end{itemize} 

\subsection{Application Domains and Benchmarks}\label{sec:setup}

To answer our research questions, we instantiate \toolname for three application domains from prior work: batch image editing, image search, and visual arithmetic. In what follows, we provide a brief description of each of these domains and their DSLs. The neural components of each DSL are denoted in \textbf{bold}. We refer the  interested reader to 
Appendix ~\ref{sec:domains2}
for more details about the abstract domains and corresponding abstract transformers.

\begin{wrapfigure}{r}{0.4\textwidth}
    \centering
    \small
    \vspace{-.5cm}
    \begin{minipage}{.4\textwidth}
    \[
    \begin{array}{r l}
        \prog := & \{ \extractor \shortrightarrow A, \cdots, \extractor \shortrightarrow A\} \\
         A := & {\sf Blur} \ | \ {\sf Brighten} \ | \ {\sf Crop} \ | \ \cdots \\
         \extractor := & {\sf All} \ | \ \textsf{\textbf{Is}}(\varphi) \ | \ {\sf Complement}(E) \\ 
         | &{\sf Union}_N(E_1, \cdots, E_N) \\ 
         | &{\sf Intersect}_N(E_1, \cdots, E_N) \\
         | & {\sf Find}(\extractor, \varphi, f) \ | \ {\sf Filter}(E, \varphi)  \\
          \varphi := & {\bf {\sf \textbf{\textsf{Object}}}}(O) \ | \ {\bf {\sf \textbf{\textsf{Smiling}}}} \ | \cdots \ \cdots \\
         f := & {\sf GetLeft} \ | \  {\sf GetRight} \ | \  {\sf GetAbove} \ | \ \cdots
    \end{array}
    \]
    \end{minipage}
    \vspace{-.3cm}
    \caption{\imageedit DSL. }
    \label{fig:imageeditdsl}
\end{wrapfigure}

\subsubsection{Batch Image Editing} Our first application domain,  \imageedit, consists of the neurosymbolic image editing DSL (see Figure ~\ref{fig:imageeditdsl}) and batch image editing tasks considered in recent work~\cite{imageeye}. This DSL uses neural components (e.g. image segmentation, object classification, facial recognition, etc.) to construct a symbolic representation of an image, and applies manipulations (e.g. crop, blur, etc.) to \emph{parts} of the image that match some criterion. For this application, we use the same abstract domain proposed in ~\cite{imageeye} but additionally define backward abstract transformers for each DSL construct. We evaluate on the \emph{same tasks} from ~\cite{imageeye}, excluding 13 text-related benchmarks due to the lack of conformal predictors for the underlying neural networks. The input spaces for these tasks consist of the same image datasets considered in ~\cite{imageeye} (e.g. an album of 359 wedding photos collected from Flickr). 
Note that neural components in this domain perform binary classification; hence, the maximum size of each prediction set for this domain is 2.

\change{Notably, our experimental setup differs from that of ~\cite{imageeye}, wherein the reported results rely on repeated synthesis attempts guided by an external oracle. This oracle checks whether the synthesized program is observationally equivalent to the ground-truth program and provides additional examples as needed. Our evaluation setting does not assume access to such an oracle.}

\begin{wrapfigure}{r}{0.35\textwidth}
     \centering
     \small
     \begin{minipage}[b]{0.35\textwidth}
         \centering
    \[
    \begin{array}{r l}
        E  := &  r(t_1, \ldots, t_n) \\
         | & \ E \rightarrow E  \ | \ E \wedge E \ | \ E \vee E \ \\
         | &\ \neg E \ |  \ \exists x. E \ | \ \forall x. E \\

        r  := & \textbf{\textsf{HasType}} \ | \ \textbf{\textsf{HasEmotion}} \\
           | & \textbf{\textsf{HasProperty}} \ | \ \textsf{HasRelation}  \\ 
                t := & x \ | \ c \\
    \end{array}
    \]
    \vspace{-.4cm}
         \caption{\imagesearch DSL.}
         \label{fig:imagesearchdsl}
     \end{minipage}
\end{wrapfigure}

\subsubsection{Image Search and Visual Concept Discovery} Our second domain, \imagesearch, involves tasks that require finding all images in a corpus that exhibit a certain property (e.g., containing both a dog and a cat) or that involve a ``visual concept'' (e.g., guitar player, still life photo). Such tasks have been considered both in the neurosymbolic synthesis literature~\cite{barnaby2024photoscout} as well as machine learning and computer vision literature ~\cite{visualconcepts,Sun_2015_ICCV}. We consider the DSL (see Figure ~\ref{fig:imagesearchdsl}) and tasks from  ~\cite{barnaby2024photoscout} as well as a set of visual concept discovery tasks~\cite{visualconcepts}.  The input space for tasks in this domain consists of the same real-world images as the \imageedit domain. Furthermore, the abstract domain is the same as \imageedit, with transformers adapted to the image search DSL. In this domain, neural networks also perform binary classification; hence the maximum prediction set size is also 2.

\subsubsection{Visual Arithmetic}

\begin{wrapfigure}{r}{.25\textwidth}
 \vspace{-0.4in}
    \centering
    \small
    \begin{minipage}{.25\textwidth}
    \begin{align*}
        \prog := & \ \lambda \ l . \ E \\ E :=& \ l \ | \ c \in \mathbb{N} \ | \ \texttt{fold} \ f \ c \ E  \\ &| \ \texttt{map} \ g \ E \ | \ \texttt{filter} \ h \ E \\ 
        f :=&  \ \texttt{sum} \ | \ \texttt{max} \\ &| \ \texttt{product} \ | \ \texttt{inc} \\ g :=& \texttt{curry} \ f \ c \ | \ \texttt{\textbf{toDigit}} \\ h :=& \lambda x . x \lhd c 
    \end{align*}
    \vspace{-0.3in}
    \caption{\digits DSL.}
    \label{fig:mnistdsl}
     \vspace{-0.1in}
    \end{minipage}
\end{wrapfigure}

As a third application domain, we consider \emph{visual arithmetic} tasks (e.g., adding or multiplying hand-written digits) from prior work on neurosymbolic programming~\cite{deepproblog,dolphin, mnisttasks1, mnisttasks2}. These tasks involves performing computations over a list of digit images using functional combinators such as \texttt{map}, \texttt{fold}, and \texttt{filter}, using the DSL in Figure ~\ref{fig:mnistdsl}.
The input space for this application domain consists of lists of digits from the SVHN dataset~\cite{svhn}, which is a more challenging version of the MNIST dataset ~\cite{mnist} comprised of images of digits in natural scene images. 

Since prior work does not introduce an abstract domain for this  application, we consider a standard interval abstraction for each list element. Unlike the first two domains where all neural components perform binary classification, the neural constructs in this DSL perform digit classification; hence, the maximum size of prediction sets for this domain is 9. 

\vspace{1em}

To give the reader some intuition about the tasks considered in our evaluation, Table~\ref{tab:example} shows two representative tasks for each domain, along with their corresponding ground truth programs.

\subsection{Experimental Set-up and Methodology}
We evaluate \toolname (as well as all baselines and ablations) using the following methodology: First, we sample two inputs from the input space and supply the corresponding output, giving us a set $\inout$ of initial input-output examples.
Then, we compute  $\progs = \{\prog \ | \ \forall (\inp, \out) \in \inout. \ \out \in \confsem{\prog}(\inp) \}$ by enumerating all programs up to a fixed AST size of 20. 

We then use \toolname{} and all baselines to perform active learning on $\progs$, $\questions$, $\inps$ and $\inout$, using the ground truth programs and labels to emulate a user responding to queries. To account for variability from the random IO examples, we repeat all  experiments five times, each with a different seed, and report the mean and standard deviation of the outcomes. All experiments are run on a 2022 MacBook Pro with an 8-core m2 processor and 8 GB RAM, with a timeout limit of 600s. 


\subsubsection{Benchmark statistics.}  
 As summarized in Table~\ref{tab:details}, we evaluate \toolname on 112 tasks. On average, the initial hypothesis space contains around 1510 programs, the  question space contains around 800, and each question has approximately 9.8 possible answers. \change{Recall from Section ~\ref{sec:active_learning_problem} that the question space $\questions$ comprises (1) questions about the ground-truth output for each input, and (2) questions about the ground-truth label for each (neural component, input) pair. Hence, $\questions$ is fixed for a given task.} As expected, prediction set sizes are larger in the {\sc VisArith} domain, since the prediction space includes 9 labels rather than 2 for binary classification. However, the question space is significantly larger for {\sc ImageEdit} and {\sc ImageSearch} due to (1) the larger input space, and (2) the greater number of neural components in their DSLs. 

 \change{To quantify the quality of the neural components used in our experiments, we also report the average percentage of inputs with at least one misprediction. In the \imageedit and \imagesearch domains, slightly over half of  inputs contain a neural misprediction, on average. In these domains, a misprediction may be an undetected object, an erroneous object detection, or a misclassified attribute.  In {\sc VisArith}, 13.0\% of inputs contain a neural misprediction, on average. In this domain, a misprediction is a misclassified digit. These statistics demonstrate the systemic uncertainity of neural mispredictions across different domains and models. } 
 
Together, these three domains span diverse characteristics, forming a comprehensive test bed for evaluating our approach.

\begin{table}[t]
\caption{Representative tasks in the \imageedit, \imagesearch, and \digits domains.}
\vspace{-.1in}
    \centering
    \footnotesize
    \renewcommand{\arraystretch}{1.2} 
    \begin{tabular}{| m{0.11\textwidth} | m{0.22\textwidth} |  m{0.59\textwidth} |}
        \hline
        \textbf{Domain} & \textbf{Task Description} & \textbf{Ground Truth Program} \\
        \hline
         \imageedit & Crop out everyone who is not smiling or has their eyes closed. & 
         \vspace{-1cm}
        $\begin{aligned}
        \{ &\textsf{Intersect}(\textbf{\textsf{Is}}(\textsf{\textbf{Object}}( \texttt{face})),  \\
        & \textsf{Complement}(\textsf{Intersect}(\textbf{\textsf{Is}}(\textsf{\textbf{Smiling}}), \textbf{\textsf{Is}}(\textsf{\textbf{EyesOpen}})))) \rightarrow \textsf{CropOut}  \}
        \end{aligned}$ 
        \\
        \hline
        \vspace{.4cm}
         \imageedit & 
         \vspace{.3cm}
         Blur the faces of people who are not playing guitar. & 
         \vspace{.1cm}
        $\begin{aligned}
        \{ &\textsf{Intersect}(\textbf{\textsf{Is}}(\textsf{\textbf{Object}}( \texttt{face})),  \\
        & \textsf{Complement}(\textsf{Find}((\textbf{\textsf{Is}}(\textsf{\textbf{Object}}(\texttt{guitar})), \\ 
        & \ \ \ \ \ \ \ \ \ \ \ \ \ \ \ \ \ \ \ \ \ \ \ \ \ \ \ \ \ \ \ \ \ \ \ \textbf{\textsf{Is}}(\textsf{\textbf{Object}}(\texttt{face})), \textsf{GetAbove})))) \rightarrow \textsf{Blur}  \}
        \end{aligned}$ 
        \\
        \hline
        \vspace{.3cm}
        \imagesearch & \vspace{.3cm} Find all images that contain a cyclist wearing a helmet. & 
        $\begin{aligned} 
        \exists x. \exists y. \exists z. &\textbf{\textsf{HasType}}(x, \texttt{bicycle}) \wedge \textbf{\textsf{HasType}}(y, \texttt{person}) \\ 
        \wedge &\textbf{\textsf{HasType}}(z, \texttt{helmet}) \wedge \textsf{HasRelation}(x, y, \texttt{below}) \\
        \wedge &\textsf{HasRelation}(y, z, \texttt{below})
        \end{aligned}$ 
        \\
        \hline
        \imagesearch & 
        Find all images that do not contain any people. & 
        $\begin{aligned} 
        \forall x. \neg \textsf{\textbf{HasType}}(x, \texttt{person})
        \end{aligned}$ 
        \\
        \hline
        \digits & Compute the sum of a list after doubling all items. & 
        $\begin{aligned} 
        \texttt{fold } \texttt{plus } 0 \ (\texttt{map} \ (\texttt{curry } \texttt{prod } 2) \ (\texttt{map } \textbf{\texttt{toDigit }} l))
        \end{aligned}$ 
        \\
        \hline
        \digits & Compute the maximum list item that is smaller than the head of the list. & 
        $\begin{aligned} 
        \texttt{fold } \texttt{max } 0 \ (\texttt{filter} \ &(\texttt{curry } \lambda.x < (\texttt{head} \ l)) (\texttt{map } \textbf{\texttt{toDigit}} \ l))
        \end{aligned}$ 
        \\
        \hline
    \end{tabular}
    \label{tab:example}
    \vspace{-.1in}
\end{table}

\begin{figure}
\vspace{-.2in}
    \begin{table}[H]
    \caption{Details about the (1) number of tasks, average sizes of (2) initial hypothesis space, (3) final hypothesis space, (4) input space, (5) question space, (6) answer space per question, (7) prediction set, and (8) \change{average percentage of inputs with mispredictions.}}
    \vspace{-0.1in}
\footnotesize
\begin{tabular}{|c|c|c|c|c|c|c|c|c|}
    \hline
     \makecell{\textbf{Domain}} & \# \textsf{Tasks} & $| \mathsf{Initial \ HS}|$ & $ | \mathsf{Final\ HS}|$ & $|\inps|$ & $|\questions|$ & $|\mathsf{Possible}(q)|$& $|\mathsf{Pred. \ Set}|$ & \makecell{\change{\textsf{\% Inputs}} \\ \change{\textsf{w/ Mispred.}}} \\ 
     \hline
     \textsc{ImageEdit} & 37 & 1731.0 & 5.7 & 271.7 & 1050.8 & 11.3 & 1.2 & \change{53.0\%} \\ 
     \hline 
     \textsc{ImageSearch} & 25 & 593.8 & 3.7 & 257.4 & 1135.1 & 12.7 & 1.1 & \change{59.0\% }\\ 
     \hline
     \makecell{\digits} & 50 & 1813.7 & 32.1 & 100.0 & 444.0 & 7.3 & 2.5 & \change{13.0\%} \\
     \hline
     \makecell{\textbf{Overall}} & 112 & 1514.1 & 17.1 & 191.9 & 797.4 & 9.8 & 1.8 & \change{30.0\%} \\
     \hline
\end{tabular}
\vspace{-.35in}
\label{tab:details}
\end{table}
\end{figure}

\subsection{Comparison Against Existing Active Learning Techniques}

To answer our first research question, we compare \toolname against two existing active learning techniques proposed in prior work for traditional program synthesis:
\begin{enumerate}[leftmargin=*]
\item \textsc{{\textbf{SampleSy}}} ~\cite{samplesy},  an active learning procedure whose question selection algorithm uses an objective function similar to that of \toolname. In each round of interaction, \textsc{SampleSy} selects the question whose worst answer eliminates the largest number of programs. \item \textsc{{\textbf{LearnSy}}}~\cite{learnsy}, an active learning procedure that selects the input maximizing the approximate number of pairs of programs whose outputs differ.
\end{enumerate}

Importantly, both of these baselines evaluate program consistency using the DSL’s standard semantics -- that is, they treat the neural prediction as the ground truth and produce a single output when executing a neurosymbolic program on a given input. \change{While prior work proposes synthesis techniques for similar application domains (e.g., PhotoScout ~\cite{barnaby2024photoscout} solves image search tasks using a neurosymbolic DSL), these techniques do not explicitly address the neurosymbolic active learning problem. Hence, such approaches are not baselines for evaluating the contributions of our work. }



The results of this experiment are summarized in the ``\# Benchmarks Solved'' column of Table~\ref{tab:results1}. Here, a benchmark is considered solved if the synthesizer returns a program that is observationally equivalent to the ground truth program under the ground truth semantics. For \toolname, failure to return the ground truth program can occur for two reasons: (1) timing out, and (2) incompleteness of conformal prediction. Although conformal prediction includes the ground truth label in the prediction set with \emph{high probability}, there is a small but non-zero chance that it will be excluded. In the baselines, failure can also occur due to timing out or due to mispredictions by the neural classifiers. Unlike \toolname, the baselines do not use conformal semantics, making it more likely for the ground truth program to be pruned from the search space.

As shown in Table~\ref{tab:results1}, \toolname significantly outperforms the baselines in terms of its ability to converge to the ground truth program:  \toolname finds the ground truth program for over 98\% of the benchmarks, whereas \textsc{SampleSy} and \textsc{LearnSy} output the correct program for 65\% and 58\% of benchmarks, respectively. 

\begin{wrapfigure}{r}{0.52\textwidth}
\vspace{-.3cm}
    \captionof{table}{Experimental results comparing the number of benchmarks solved by \toolname{} versus two baseline active learning techniques.}
    \label{tab:results1}
    \vspace{-.2cm}
\scriptsize
\begin{tabular}{|c?c|c|c|}
    \hline
     \makecell{ \\ \textbf{Domain}} 
     & \multicolumn{3}{c|}{\makecell{\textbf{\# Benchmarks Solved}}} \\ 
     &  \multicolumn{1}{c}{\textsf{Ours}} & \multicolumn{1}{c}{\textsc{SampleSy}} & \multicolumn{1}{c|}{\textsc{LearnSy}} \\ 
     \hline
     \makecell{\imageedit}  & $95.7\% \pm 2.4\%$ & $70.3\% \pm 5.1\%$ & $58.9\% \pm 14.6\%$ \\ 
     \hline
     \makecell{\imagesearch} & $99.2\% \pm 1.8\%$ & $72.8\% \pm 8.8\%$ & $55.2\% \pm 12.4\%$ \\
     \hline 
     \makecell{\digits} & $100\% \pm 0\%$ & $57.2\% \pm 5.4\%$ & $58.8\% \pm 2.3\%$  \\
     \hline
     \makecell{\textbf{Overall}} & $98.4 \pm 0.7\%$ & $65.0\% \pm 5.0\% $ & $58.0\% \pm 6.0\%$ \\ 
     \hline
     
\end{tabular}
\end{wrapfigure}

To gain insight about the failures of \toolname, we manually inspected the 9 cases out of 560 runs where \toolname fails to output the intended program.  Out of these 9 cases, 2 are due to timeouts. For the remaining 7 cases, failure is caused by the intended program's output set not containing the ground truth for a particular input. For all of these 7 benchmarks, we confirmed that our method does converge to the intended program if we manually increase the confidence threshold used for conformal prediction, thereby making conformal prediction more conservative (albeit at the cost of making user interaction times longer). We also performed a similar investigation for the failure cases of the two baselines: {\sc SampleSy} fails on 196 out of the 560 runs, whereas {\sc LearnSy} fails on 235. There are no timeouts for either of the baselines; hence, all failures are caused by using standard semantics instead of conformal semantics. These results underscore the need for conformal semantics in the context of neurosymbolic active learning.

\vspace{0.05in}
\idiotbox{RQ1}{\toolname's use of constrained conformal semantics leads to significantly higher success rates in finding the desired program compared to the baselines.}


\subsection{Evaluation of Rounds of User Interaction}\label{sec:rounds}

To answer our second research question, we evaluate the number of rounds of user interaction required to converge to the ground truth program. To further evaluate our question selection strategy, we also compare the worst-case pruning power objective against two  alternative question selection strategies:

\begin{itemize}[leftmargin=*]
\item {\bfseries\scshape Expected:}  This variant selects questions that maximize the \emph{expected} number of programs prunes, rather than the worst-case. Specifically, for each candidate question \( q = (f, I) \), we compute the prediction set \(\{a_1, \ldots, a_n\}\) for \(f(I)\), along with the associated confidence scores from the neural predictor. These scores are normalized to form a probability distribution \(\{p_1, \ldots, p_n\}\) over possible answers. The expected pruning power of \(q\) is then given by: $\text{EPP}(q) = \sum_{i=1}^n p_i \cdot \Pi(q, P \mid a_i)$. Here \(\Pi(q, P \mid a_i)\) denotes the fraction of programs pruned from the hypothesis space assuming the user's answer is \(a_i\). This strategy favors questions that eliminate many incorrect programs \emph{in expectation}.
\item {\bfseries\scshape Random:} Rather than optimizing a certain objective during question selection, this alternative  question selector randomly samples a question from the question space.
\end{itemize}

\begin{table}[H]
\vspace{-.2cm}
    \caption{Experimental results comparing the average number of rounds of user interaction and average time per round of interaction taken by \toolname versus two alternative question selection strategies. Our proposed question selection strategy, which uses a worst-case pruning power optimization objective, is denoted by the \textsc{WorstCase} column. The \textsc{Expected} column denotes a strategy that uses an \emph{expected} pruning power optimization objective, and the \textsc{Random} column denotes a strategy that samples a question.}
    \vspace{-0.1in}
\scriptsize
\begin{tabular}{|c?c|c|c?c|c|c|}
    \hline
     \makecell{ \\ \textbf{Domain}} 
     & \multicolumn{3}{c?}{\makecell{\textbf{Average \# Rounds} \\ \textbf{ of Interaction}}}  
     & \multicolumn{3}{c|}{\makecell
     {\textbf{Average Time per} \\ \textbf{Round of Interaction (s)}}}  \\ 
     & \multicolumn{1}{c}{\textsc{WorstCase}} & \multicolumn{1}{c}{\textsc{Expected}} & \multicolumn{1}{c?}{{\textsc{Random}}} & \multicolumn{1}{c}{\textsc{WorstCase}} & \multicolumn{1}{c}{\textsc{Expected}} & \multicolumn{1}{c|}{{\textsc{Random}}}  \\ 
     \hline
     \makecell{\imageedit}  & $5.0 \pm 1.7$ & $6.3 \pm 2.0$ & $113.4 \pm 29.5$ & $7.3 \pm 2.8$ & $48.5 \pm 14.1$ & $0.5 \pm 0.2$ \\ 
     \hline
     \makecell{\imagesearch} &  $6.4 \pm 1.6$ & $6.5 \pm 2.2$ & $145.8 \pm 54.6$ & $4.9 \pm 1.8$ & $46.9 \pm 12.8$ & $0.1 \pm 0.0$ \\
     \hline 
     \makecell{\digits} & $4.0 \pm 0.1$ & $9.8 \pm 0.2$ & $53.5 \pm 2.2$ & $4.4 \pm 0.2$ & $11.2 \pm 0.3$ & $0.3 \pm 0.0$ \\
     \hline
     \makecell{\textbf{Overall}} & $4.9 \pm 0.5$ & $7.9 \pm 0.7$ & $93.9 \pm 17.5$ & $5.5 \pm 1.4$ & $26.8 \pm 3.6$ & $0.3 \pm 0.0$ \\ 
     \hline
\end{tabular}
\vspace{-.1in}
\label{tab:results2}
\end{table}

We emphasize that both of these alternative strategies rely on the machinery introduced in this paper (e.g., constrained conformal evaluation) to guarantee convergence to the ground truth program. Thus, the differences in performance can be attributed entirely to the choice of question selection strategy.

The results of this evaluation are presented in Table~\ref{tab:results2}. When using our proposed \emph{worst-case pruning power} optimization objective from Definition~\ref{def:pp}, our method  takes an average of 4.9 rounds of user interaction. In other words, \toolname can disambiguate between the initial 1514 programs in the hypothesis space by asking fewer than 5 questions, on average. As expected, the \textsc{Random} strategy performs significantly worse, requiring an average of 93.9 rounds of user interaction -- nearly 20 times more than our \textsc{WorstCase} strategy. While the \textsc{Expected} strategy performs better than \textsc{Random}, it still requires 7.9 rounds on average. This result is somewhat surprising, given that \textsc{Expected} leverages prediction probabilities to guide question selection. In practice, however, we found that neural networks can sometimes be confidently wrong about their predictions, which can causes the expected pruning power to prioritize misleading questions.

Additionally, Table~\ref{tab:results2} shows the average time per user interaction round for each strategy. The \textsc{Random} strategy is extremely fast, since it avoids any computation during question selection and simply samples a query uniformly at random. However, despite its low per-round overhead, \textsc{Random} still takes significantly longer overall. Even assuming an optimistic estimate of just 3 seconds for the user to answer each query, \textsc{Random} would take over 300 seconds to converge -- more than 7 times longer than the total active learning time required by the \textsc{WorstCase} strategy.

In contrast, the \textsc{Expected} strategy incurs a much higher cost per round compared to \textsc{WorstCase}, as it requires propagating the predicted probabilities of neural components during program evaluation. Overall, these results empirically demonstrate that optimizing for worst-case pruning power yields a favorable trade-off between the \emph{number} of rounds of user interaction and the \emph{time} to compute a question per round. \\

\vspace{0.05in}
\idiotbox{RQ2}{
The optimization objective based on worst-case pruning power yields the best trade-off between number of rounds and computation time compared to two alternative question selection strategies.
}

\subsection{Ablation Studies}

In this section, we aim to evaluate the impact of our two key algorithmic optimizations through ablation studies. For this evaluation, we consider the following  ablations of \toolname:
\begin{itemize}[leftmargin=*]
\item  {\bfseries\scshape No-AbsInt:}   This variant disables our proposed bidirection abstract interpretation method in the CCE algorithm. In particular,  this ablation does \emph{not} call {\sc ForwardAI} and {\sc BackwardAI} in the CCE procedure; it also omits the consistency check in lines 7--8 of the {\sc EvalConsistent} method.  
\item  {\bfseries\scshape No-BCE:}  This ablation does not utilize bounded conformal evaluation to derive an upper bound on the pruning power of each question. In more detail, it omits the initial for loop in lines 4--7 of the {\sc SelectQuestion} procedure; it also omits the pruning test at line 13.
\end{itemize}

\subsubsection{Runtime.}
Table~\ref{tab:results3} compares the runtime of \toolname against those of the two ablations. Since these ablations do not impact the number of rounds of user interaction, we only compare the time it takes to generate a question.  Across all domains, \toolname takes an average of 5.5 seconds per round, while {\sc No-AbsInt} and {\sc No-BCE} take roughly 2$\times$ and 6$\times$ longer, respectively. This difference highlights the impact of these key optimizations on our method's practicality.

\subsubsection{Scalability.}
As discussed earlier in Section ~\ref{sec:synth},  our algorithm scales linearly with respect to  various parameters like input and question space size; however, its worst-case time complexity is exponential with respect to \emph{prediction set size}. Here, we evaluate the effectiveness of our proposed algorithmic optimizations (namely, \textsc{CCE} with bidirectional reasoning and \textsc{BCE} for question selection) in mitigating this worst-case blowup in practice. To perform this experiment, we increase prediction set sizes by varying the parameters of conformal prediction (see Section ~\ref{sec:impl} for more details). Intuitively, by varying this parameter, we make conformal prediction more conservative, leading to larger prediction set sizes.

\begin{wrapfigure}{r}{0.45\textwidth}
\vspace{-.5cm}
\centering
\captionof{table}{Experimental results comparing the average time per round of interaction taken by \toolname{} versus two ablations.}
\label{tab:results3}
\scriptsize
\begin{tabular}{|c?c|c|c|}
    \hline
    \makecell{ \\ \textbf{Domain}} 
    & \multicolumn{3}{c|}{\makecell{\textbf{Average Time per} \\ \textbf{Round of Interaction (s)}}}  \\ 
    & \textsf{Ours} & No-BCE & No-AbsInt \\ 
    \hline
    \imageedit  & $7.3 \pm 2.8$ & $42.1 \pm 8.5$ & $16.9 \pm 5.1$ \\
    \hline
    \imagesearch & $4.9 \pm 1.8$ & $41.9 \pm 8.1$ & $8.9 \pm 2.2$ \\
    \hline 
    \digits & $4.4 \pm 0.2$ & $21.5 \pm 0.5$ & $8.5 \pm 0.5$  \\
    \hline
    \textbf{Overall} & $5.5 \pm 1.4$ & $33.3 \pm 3.8$ & $11.2 \pm 2.3$  \\ 
    \hline
\end{tabular}
\end{wrapfigure}

The result of this evaluation is presented in 
Figure ~\ref{fig:scalability}, which shows how the average user interaction time increases with respect to prediction set size in each domain. In these plots, the $x$-axis represents average prediction set size across the entire input space, and the $y$-axis represents average user interaction. The blue dots show the results for \toolname{}, while the purple triangles represent the results of an ablation that uses neither bidirectional reasoning nor \textsc{BCE} (i.e. it is a combination of both \textsc{No-AbsInt} and \textsc{No-BCE}). As demonstrated by the divergence between the two sets of dots, our algorithmic optimizations are indeed effective at preventing an exponential blowup. In particular, the blue line fits the \toolname{} results roughly linearly (best fit $ax + b$, $R^2 = 0.97$ for \digits, $R^2 = 0.94$ for \imageedit, and $R^2 = 0.98$ for \imagesearch), while the purple dashed line fits the ablation results exponentially (best fit $ab^x$, $R^2 = 0.99$ for \digits, $R^2 = 0.99$ for \imageedit, and $R^2 = 0.99$ for \imagesearch). \\  

\begin{figure}
    \centering
    \begin{minipage}[t]{0.33\textwidth}
        \centering
        \includegraphics[width=\textwidth]{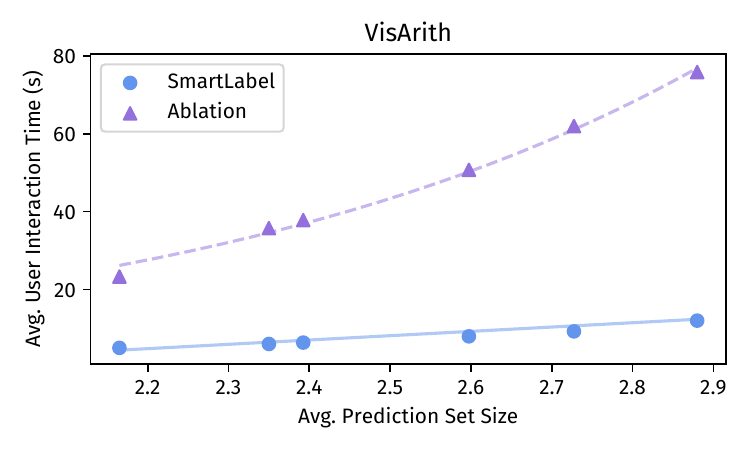}
    \end{minipage}%
    \begin{minipage}[t]{0.33\textwidth}
        \centering
        \includegraphics[width=\textwidth]{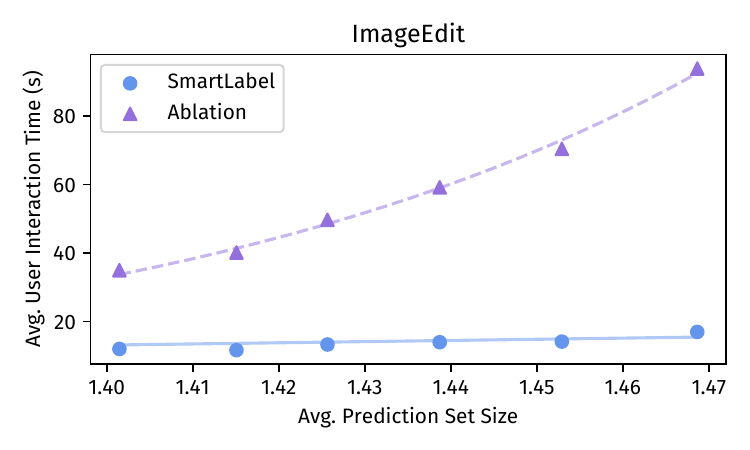}
    \end{minipage}
    \begin{minipage}[t]{0.33\textwidth}
        \centering
        \includegraphics[width=\textwidth]{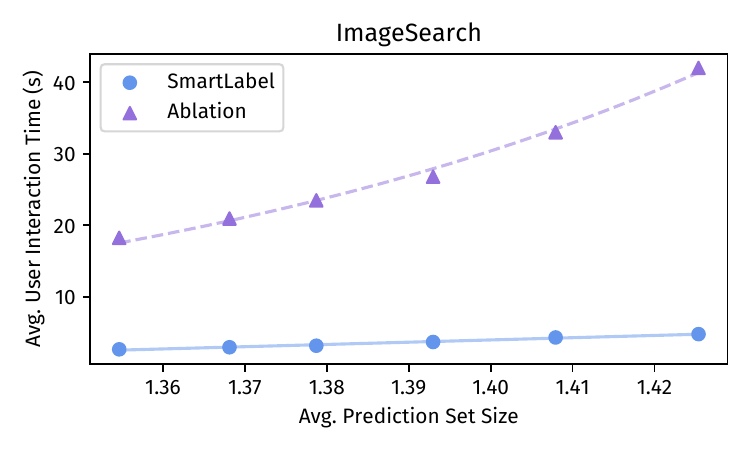}
    \end{minipage}
    \vspace{-.35in}
    \caption{User interaction time as prediction set size increases for each domain.}
    \label{fig:scalability}
    \vspace{-0.2in}
\end{figure}

\vspace{0.05in}
\idiotbox{RQ3}{Our key algorithmic ingredients have a significant positive impact on both average runtime per user interaction round as well as on the algorithm's scalability with respect to prediction set size. 
}

\subsection{Evaluation of the Impact of Active Learning Components on Runtime}\label{sec:runtime}

Recall that our active learning procedure involves three core components:  (1) \textsc{RefineHS}, which prunes inconsistent programs using CCE; (2) \textsc{Distinguish}, which checks whether further disambiguation is necessary; and (3) \textsc{SelectQuestion}, which identifies the next query by maximizing an objective.  
While Section~\ref{sec:synth} analyzes the theoretical worst-case complexity of these components, an interesting \emph{empirical} question is where \toolname spends the majority of its runtime in practice. Figure ~\ref{fig:bottleneck} shows how the three main components of our  algorithm contribute to user interaction time. As shown in this figure, most of the runtime (69.5\%) is consumed by the \textsc{SelectQuestion} procedure, followed by \textsc{RefineHS} (26.4\%), and only 4.1\% is spent on \textsc{Distinguish}. 
Despite its worst-case complexity, the practical efficiency of \textsc{Distinguish} can be explained by two factors. First, in early rounds of active learning, there are many distinguishable programs in the hypothesis space, and \textsc{Distinguish} only needs to find a single distinguishing pair to conclude that learning should continue -- allowing it to terminate quickly. Second, in later rounds, although distinguishing programs becomes harder, the hypothesis space is much smaller -- reducing the overall search burden. \\

\begin{wrapfigure}{r}{0.32\textwidth}
  \begin{center}
  \vspace{-.6in}
    \includegraphics[width=0.31\textwidth]{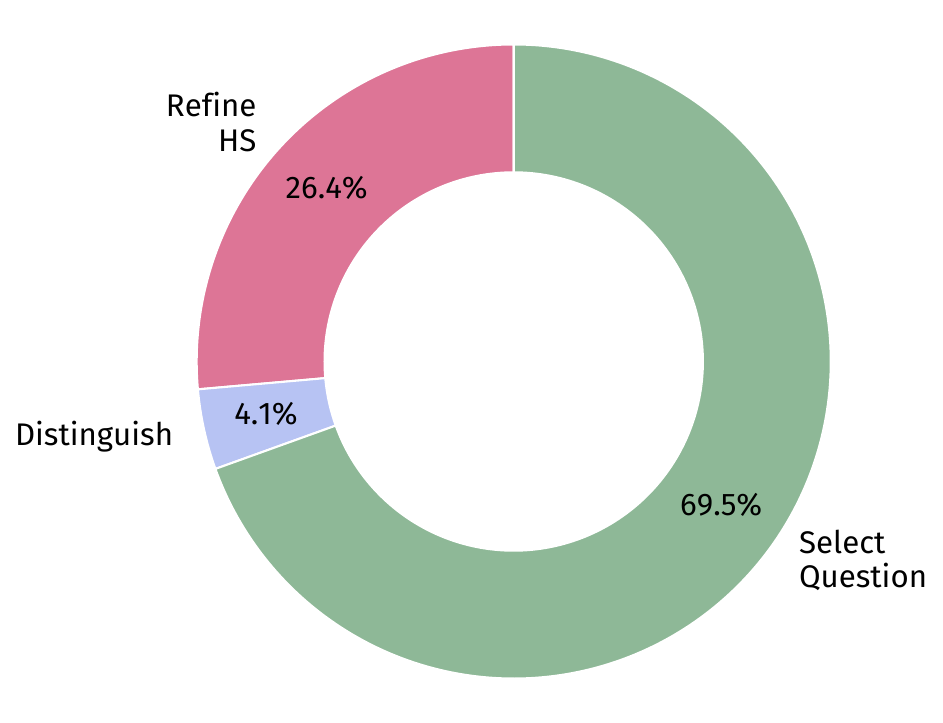}
  \end{center}
    \caption{Breakdown of active learning runtime.}\label{fig:bottleneck}
  \vspace{-0.5in}
\end{wrapfigure}

\vspace{0.05in}
\idiotbox{RQ4}{ The runtime of active learning is dominated by \textsc{SelectQuestion}, followed by \textsc{RefineHS}. In contrast, \textsc{Distinguish} is very fast in practice despites its worst-case theoretical complexity.  }

\section{Related Work}

\paragraph{\textbf{Active learning for program synthesis.}}  Active learning ~\cite{settles2009active, ren2021survey, bshouty1995exact, schohn2000less, dasgupta2004analysis, bshouty1994oracles} 
 is a type of machine learning approach where the algorithm selectively chooses the data from which it learns. Instead of using a large, randomly selected dataset, the model actively selects specific data points to label,   with the goal of  minimizing the number of queries to the human. 
In recent years, there has been significant interest in active learning techniques  for \emph{interactive program synthesis}. The goal of these techniques is to query the user  until there is no remaining ambiguity in the  specification. A variety of techniques ~\cite{susmit10, forest, Laich2020Guiding, flashprog, flashprofile} select a question that distinguishes at least two programs in the program space. However, these techniques do not address the \emph{optimal question selection} problem in the PBE setting, where the goal is to minimize the rounds of user interaction. Similarly to \toolname, \emph{SampleSy} and \emph{EpsSy} ~\cite{samplesy} both employ a greedy question selection algorithm that selects the question whose worst-case answer will result in the best outcome. The followup work, \emph{LearnSy} \cite{learnsy}, learns a model that predicts the likelihood that two programs will have the same output on a question. However, these prior algorithms are designed to synthesize programs in \emph{purely symbolic} languages, and do not take into account the discrepancy between the ground truth and evaluation semantics of neural constructs. As a result, they do not offer any correctness or completeness guarantees in the neurosymbolic setting.



\vspace{-0.05in}
\paragraph{\textbf{Neurosymbolic program synthesis.}} There is much recent work in applying program synthesis techniques to neurosymbolic DSLs. In the image processing domain, various works ~\cite{imageeye, barnaby2024photoscout, huang2020generating, ellis2018learning, johnson2017inferring, gupta2023visual, suris2023vipergpt} synthesize programs that combine symbolic operators with neural perception modules. In the data extraction and manipulation domain, several recent works ~\cite{flashgpt, chen2023data, chen2021web, jiang2021neuralizing, cheng2022binding} synthesize programs that utilize large language models to reason semantically about data. Other works ~\cite{gaunt2017differentiable, shah2020learning, valkov2018houdini} provide more general frameworks for synthesizing programs that compose neural and symbolic operators. A common limitation among these works is that the synthesized program is only guaranteed to match the user's specification under the assumption that the program's underlying neural components are always correct. Our work proposes a framework for neurosymbolic program synthesis that is robust to mispredictions of neural models.  

\vspace{-0.05in}
\paragraph{\textbf{Conformal prediction.}} In recent years, conformal prediction ~\cite{angelopoulos2023conformal, shafer2008tutorial} has  become a popular paradigm for uncertainty quantification of black-box learning models, finding use in a variety of learning tasks across different domains \cite{angelopoulos2020uncertainty}. In the traditional setting, one limiting assumption must be made about the test data in order to achieve the desired coverage guarantees -- namely that it is exchangeable with the calibration data. Much work in the area has thus focused on adapting conformal techniques to more general learning settings, such as distribution shift ~\cite{gibbs2021adaptive, tibshirani2019conformal, park2021pac}, time-series ~\cite{chernozhukov2018exact, chendynamictimeseries}, and online settings ~\cite{bastani2022practical, gibbs2022conformal}. 
A very recent (unpublished) work ~\cite{ramalingam2024uncertainty} aims to extend conformal prediction (parameterized by a user-specified confidence bound) from individual neural components to neurosymbolic programs, using a calibration set, a scoring function (for measuring non-conformity), and a user-specified confidence bound. Since the prediction sets obtained this way can be quite imprecise, they also use abstract interpretation to tighten them. Our work builds on this effort in defining conformal semantics for neurosymbolic languages, but allows the user to refine the conformal semantics through  user feedback. Additionally, the main focus of our paper is neurosymbolic active learning, which is not addressed in prior work. 



\vspace{-0.05in}
\paragraph{\textbf{Abstract interpretation in program synthesis.}} Prior work has proposed pruning techniques for program synthesis that leverage abstract interpretation. Many such approaches employ abstract reasoning in only the forward direction ~\cite{feng2017component, singh2011synthesizing, so2017synthesizing, wang2018learning, wang2017synthesizing, mell2024optimal, guria2023absynthe}, or only the backward direction ~\cite{pailoor2021synthesizing}. Other works utilize abstraction refinement to generate a program that matches a set of IO examples under an abstract semantics, and then iteratively refine the semantics if the generated program is spurious ~\cite{wang2017program, wang2018learning, guo2019program, vechev2010abstraction}. Several works leverage both forward and backward reasoning in program synthesis. In particular, ~\cite{mariano2022automated} and ~\cite{mukherjee2020dataflow} use bidirectional abstract reasoning to efficiently prune infeasible partial programs in their respective domains (automated transpilation and LLVM superoptimization). Yoon et al. ~\cite{yoon2023inductive} propose a general framework for inductive synthesis that uses iterative forward-backward abstract interpretation to infer constraints on partial programs.  In contrast to all of these approaches, our method uses forward and backward abstract interpretation to speed up conformal evaluation, which is not addressed in prior work.


\vspace{-0.05in}
\paragraph{\textbf{Program synthesis with noisy data.}} Various works contend with the problem of synthesizing a program from noisy IO examples. In general, neural program synthesizers are more robust to noise than symbolic synthesizers. For instance, RobustFill ~\cite{robustfill} uses an RNN-based approach to generate string transformations programs from IO examples with noise (e.g. output strings with typos). However, purely neural synthesis approaches do not offer correctness guarantees. ~\citet{handa1} proposes a general framework for inductive synthesis over noisy data that finds a program that minimizes a user-defined cost function. This cost function quantifies how often the program differs from the IO examples. Its followup, ~\cite{handa2}, bounds the error rate of this framework with respect to a formalized noise source. Raychev et al.  similarly propose a synthesis technique for noisy data that outputs a ``best fit'' program for a cost function; this approach also bounds the error rate in the case that the amount of noise in the dataset is bounded~\cite{raychev2016learning}. In contrast to these works, \toolname guarantees that the ground truth program is synthesized under the assumption that prediction sets contain the ground truth label for neural components.

\vspace{-0.05in}
\section{Conclusion}

In this paper, we defined the neurosymbolic active learning problem and proposed the first technique for solving it. Our approach uses a new evaluation strategy called \emph{constrained conformal evaluation (CCE)} to account for inaccuracies in neural network predictions.  We have evaluated our approach -- implemented in a new tool called \toolname\ -- on 112 neurosymbolic program synthesis benchmarks across three domains. Our experimental results point to three key findings: First, \toolname is able to identify the desired program for over 98\% of the benchmarks, whereas prior techniques for active learning can only converge to the ground truth for at most 65\% of the benchmarks. Second, the questions identified by \toolname result in effective user interaction, requiring $19\times$ fewer rounds of user interaction compared to a random question selection strategy. Finally, our ablation studies show the importance of our key algorithmic optimizations, both in terms of reducing user interaction time and scaling with respect to prediction set sizes.

\section{Future Work}
\change{While our evaluation focuses on domain-specific languages, \toolname is not inherently limited to this setting. A promising direction for future work is to apply our approach to more expressive, general-purpose languages. The core components of our method (namely, active learning via constrained conformal evaluation) remain applicable so long as the user can provide an abstract interpreter for their target language. Prior work has implemented abstract transformers for a range of expressive languages ~\cite{fromherz2018static, arceri2021analyzing, keidel2019sound}.}

\change{General-purpose languages typically have larger program spaces, and may require different specification formats. One natural approach to scaling to such languages is to leverage large language models (LLMs). LLM-based techniques have recently shown promise in program synthesis tasks by generating candidate programs from natural language prompts~\cite{austin2021program, barnaby2024photoscout}. While LLMs are often noisy and imprecise, they can be used to produce a diverse candidate set that serves as the initial hypothesis space for \toolname. Our method is well-suited to this setting, as the CCE algorithm is designed to resolve uncertainty and eliminate incorrect candidates through user interaction.}

\change{Another avenue for future work is to explore more compact or symbolic hypothesis space representations, such as version-space algebras (VSAs), tree automata, or partial program sketches. While our experiments use enumerative synthesis to generate the initial hypothesis space, our active learning approach is compatible with any hypothesis space representation that (1) guarantees inclusion of the ground-truth program, and (2) supports pruning of inconsistent candidate programs. Alternative hypothesis space representations may enable more efficient synthesis in domains where enumerative search is infeasible.}

\change{More broadly, the problem our method addresses (i.e., ambiguity caused by uncertain neural predictions) applies to a wide range of neurosymbolic systems, not just the ones considered in this paper. While our evaluation focuses on systems where neural and symbolic components are tightly integrated (e.g., neural functions embedded in a DSL), similar challenges arise in more loosely coupled architectures. For example, some recent agentic systems use a large language model to orchestrate a set of symbolic tools or modules, calling different components based on a high-level plan \cite{wu2023autogen, chan2023chateval, wang2024survey}. These systems typically treat neural components as black boxes and lack systematic ways to detect or resolve uncertainty. We believe that extending our conformal evaluation and querying strategy to such agent-based architectures could improve their reliability -- especially in cases where neural predictions might silently lead to incorrect results.}
\section{Data-Availability Statement}

The artifact for this paper is available on Zenodo ~\cite{smartlabel-artifact}.
\begin{acks}
We would like to thank the members of the UTOPIA group, and the anonymous reviewers, for their help and feedback on this paper. This work was conducted in a research group supported by NSF awards CCF-1762299, CCF-1918889, CNS-1908304, CCF-1901376, CNS-2120696, CCF- 2210831, and CCF-2319471, CCF-2422130, CCF-2403211 as well as a DARPA award under agreement HR00112590133. 
\end{acks}

\bibliography{main}

\pagebreak
\appendix
\section{Proofs}\label{sec:proofs}

\setcounter{theorem}{0}

\begin{lemma}\label{lemma:sufficient-dist}
Let $\prog_1, \prog_2$ be a pair of programs such that $\constrainedsem{\prog_1}(I) \neq \constrainedsem{\prog_2}(I)$ for some $\inp \in \inps$, where $\inps$ denotes the input space. Then, $\prog_1 \not \indist_{\fbf} \prog_2$. 
\end{lemma}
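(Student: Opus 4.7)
The plan is to unfold Definition~\ref{def:indis} and construct an explicit witness from the hypothesis of the lemma. Recall that $\prog_1 \indist_\fbf \prog_2$ requires that for \emph{every} input $\inp \in \inps$ and for \emph{every} set $Q$ of unanswered questions, the equality $\confsem{\prog_1}_{\fbf \land \fbf'}(\inp) = \confsem{\prog_2}_{\fbf \land \fbf'}(\inp)$ holds for all $\fbf' \in \mathsf{Possible}(Q)$. Therefore, to establish $\prog_1 \not\indist_\fbf \prog_2$, it suffices to exhibit a single triple $(\inp, Q, \fbf')$ that violates this equality.

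My intended witness is the degenerate one: take $Q = \varnothing$, so that $\mathsf{Possible}(Q) = \{\top\}$, and let $\fbf' = \top$. Then $\fbf \land \fbf' \equiv \fbf$, and the indistinguishability clause for this choice reduces to $\confsem{\prog_1}_{\fbf}(\inp) = \confsem{\prog_2}_{\fbf}(\inp)$. The hypothesis of the lemma gives an input $\inp \in \inps$ for which $\constrainedsem{\prog_1}(\inp) \neq \constrainedsem{\prog_2}(\inp)$, i.e., $\confsem{\prog_1}_{\fbf}(\inp) \neq \confsem{\prog_2}_{\fbf}(\inp)$ under the currently held user feedback. This directly contradicts the indistinguishability requirement, yielding $\prog_1 \not\indist_\fbf \prog_2$.

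There is essentially no substantive obstacle here; the only point to verify is that the definition of indistinguishability permits $Q = \varnothing$. Since the quantifier in Equation~\ref{eq:dis} ranges over all sets of questions not already answered by $\fbf$, and the empty set trivially satisfies this side condition, the witness construction is legitimate. Consequently the lemma follows immediately, and it can be regarded as the contrapositive of the direct consequence of Definition~\ref{def:indis} noted in the paragraph following that definition, namely that indistinguishability implies equality of constrained conformal outputs under the current feedback.
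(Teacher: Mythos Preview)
Your proposal is correct and takes essentially the same approach as the paper's own proof: both instantiate the question set as $Q = \varnothing$, observe that $\mathsf{Possible}(\varnothing) = \{\mathsf{true}\}$, and note that the hypothesis then directly violates Equation~\ref{eq:dis}.
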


\begin{proof}
    Note that $\textsf{Possible}(\varnothing)$ contains a single formula,  $\textsf{true}$. From the assumption, this implies 
    \begin{align}
    \confsem{\prog_1}_{\assump  \land \textsf{true}}(\inp) \neq \confsem{\prog_2}_{\assump  \land \textsf{true}}(\inp)
    \end{align} 
    Hence, $\prog_1 \not \indist_{\fbf} \prog_2$
\end{proof}

\begin{lemma}
Let $\progs$ be a hypothesis space and let $\prog'$ be a program randomly sampled from $\prog$. For any user feedback $\fbf$, input space $\inps$ and question space $\questions$, then we have:
\[
\forall \prog_1, \prog_2 \in \progs. \ \prog_1 \indist_\fbf \prog_2
\] if and only if  {\sc Distinguish}($\prog', \progs \backslash \prog', \fbf, \inps, \questions$) returns false.
\end{lemma}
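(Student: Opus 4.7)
My plan is to reduce the lemma to two ingredients: (i) $\indist_\fbf$ is an equivalence relation on programs, and (ii) the procedure \textsc{Distinguish}$(\prog',\progs\setminus\prog',\fbf,\inps,\questions)$ returns \textbf{false} if and only if $\prog' \indist_\fbf \prog$ for every $\prog \in \progs\setminus\prog'$. Given (i) and (ii), the forward direction is immediate (since if all pairs in $\progs$ are indistinguishable then in particular $\prog'$ is indistinguishable from every other element, so \textsc{Distinguish} must return \textbf{false}). For the reverse direction, (ii) gives us $\prog' \indist_\fbf \prog$ for every $\prog \in \progs\setminus\prog'$, and then reflexivity plus symmetry plus transitivity of $\indist_\fbf$ upgrades this to $\prog_1 \indist_\fbf \prog_2$ for every $\prog_1,\prog_2 \in \progs$.

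\paragraph{Establishing (i).}
Since Definition~\ref{def:indis} defines $\prog_1 \indist_\fbf \prog_2$ purely through an equality of the form $\confsem{\prog_1}_{\fbf\land\fbf'}(\inp) = \confsem{\prog_2}_{\fbf\land\fbf'}(\inp)$ quantified universally over $\inp$, over question sets $Q$ disjoint from $\fbf$, and over $\fbf' \in \mathsf{Possible}(Q)$, reflexivity, symmetry, and transitivity each follow directly from the corresponding properties of set equality applied pointwise. I would state this as a short preliminary claim.

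\paragraph{Establishing (ii).}
This is the technical core of the proof, and where I expect the main difficulty to lie. I would prove it by strong induction on $|\questions|$. In the base case $\questions = \varnothing$, the second \textbf{forall} loop in Figure~\ref{fig:dist} is vacuous, so \textsc{Distinguish} returns \textbf{false} iff $\textsc{CCE}(\prog,\inp,\fbf) = \textsc{CCE}(\prog',\inp,\fbf)$ for every $\prog\in\progs\setminus\prog'$ and $\inp\in\inps$; since $\mathsf{Possible}(\varnothing)=\{\mathsf{true}\}$, this is exactly $\prog'\indist_\fbf \prog$ restricted to feedback extensions drawn from $\questions$. For the inductive step, I would argue that the first loop captures the $Q=\varnothing$ case of Definition~\ref{def:indis}, while the second loop, for each question $q$ and valid answer $a$, invokes \textsc{Distinguish} on the strengthened feedback $\fbf \land (q=a)$ and the strictly smaller question space $\questions\setminus q$; by the induction hypothesis, these recursive calls all return \textbf{false} iff $\prog' \indist_{\fbf \land (q=a)} \prog$ for every $\prog$, which together with the base-case observation precisely covers every extension $\fbf'\in\mathsf{Possible}(Q)$ for every $Q\subseteq\questions$ (since any such $\fbf'$ can be built one question at a time).

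\paragraph{Main obstacle.}
The delicate part is the bookkeeping that ties the recursive shape of \textsc{Distinguish} to the quantifier $\forall Q.\forall \fbf'\in\mathsf{Possible}(Q)$ in Definition~\ref{def:indis}: I need to argue that iteratively consuming one question at a time reaches exactly the set of consistent feedback formulas of interest, neither over- nor under-approximating it, and that restricting answers $a$ to $\mathsf{Answers}(q)$ in the loop matches the ``valid answer'' requirement built into $\mathsf{Possible}$. Once this correspondence is nailed down, combining (i) and (ii) yields the lemma in one line.
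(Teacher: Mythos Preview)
Your proposal is correct and follows essentially the same approach as the paper: both arguments hinge on using $\prog'$ as a ``hub'' and then appealing to transitivity (the paper does this via a direct contradiction argument on a distinguishing pair $\prog_1,\prog_2$, while you make the equivalence-relation structure of $\indist_\fbf$ explicit). Your decomposition into ingredients (i) and (ii), together with the induction on $|\questions|$ to tie the recursive structure of \textsc{Distinguish} to the universal quantifier over $Q$ in Definition~\ref{def:indis}, is somewhat more carefully structured than the paper's proof, which simply asserts that ``\textsc{Distinguish} enumerates all possible sets of questions and all possible answers'' without spelling out the induction.
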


\begin{proof}
    $\Rightarrow$ Suppose $\forall \prog_1, \prog_2 \in \progs. \ \prog_1 \indist_\fbf \prog_2$. Then for any input $\inp$ and any set of questions $\questions' \subseteq \questions$, 
    \begin{align}
    \forall \fbf' \in \textsf{Possible}(\questions'). \ \confsem{\prog_1}_{\assump  \land \assump'}(\inp) = \confsem{\prog_2}_{\assump  \land \assump'}(\inp), 
    \end{align}
    and by extension,
    \begin{align}
    \forall \fbf' \in \textsf{Possible}(\questions'). \ \textsc{CCE}(\prog, \inp, \assump  \land \assump') = \textsc{CCE}(\prog_2, \inp, \assump \land \assump'). 
  \end{align}    
    
    Therefore, $\textsc{Distinguish}(\prog', \progs, \fbf, \inp, \questions)$ will return false for any $\prog' \sim \progs$.

    $\Leftarrow$ Suppose {\sc Distinguish}($\prog', \progs \backslash \prog', \fbf, \inps, \questions$) returns false. Towards contradiction, suppose that there exists $\prog_1, \prog_2 \in \progs$ such that $\prog_1 \not\indist_{\fbf} \prog_2$. Then there exists $\fbf' \in \textsf{Possible}(\questions)$ and some input $\inp$ such that $\confsem{\prog_1}_{\assump  \land \assump'}(\inp) = \confsem{\prog_2}_{\assump  \land \assump'}(\inp)$, and by extension $\textsc{CCE}(\prog_1, \inp, \assump  \land \assump') = \textsc{CCE}(\prog_2, \inp, \assump \land \assump')$. Since $\textsc{Distinguish}$ enumerates all possible sets of questions and all possible answers, the procedure must at some point check that 
    $\textsc{CCE}(\prog_1, \inp, \assump  \land \assump') = \textsc{CCE}(\prog', \inp, \assump \land \assump')$ and $\textsc{CCE}(\prog_2, \inp, \assump  \land \assump') = \textsc{CCE}(\prog', \inp, \assump \land \assump')$. 
    However, this implies $\textsc{CCE}(\prog_1, \inp, \assump  \land \assump') = \textsc{CCE}(\prog_2, \inp, \assump \land \assump')$, which is a contradiction. Therefore, $\forall \prog_1, \prog_2 \in \progs. \ \prog_1 \indist_\fbf \prog_2$. 
\end{proof}

\begin{lemma}
Let $\prog_1, \prog_2$ be a pair of programs such that $\prog_1 \indist_\assump \prog_2$. Then, assuming $\assump$ represents accurate user feedback, we have $\forall I \in \inps. \ \gtsem{\prog_1}(I) = \gtsem{\prog_2}(I)$ where $\inps$ denotes the input space.
\end{lemma}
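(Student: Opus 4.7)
The plan is to exhibit a specific extension of $\assump$ that forces the constrained conformal semantics of both programs to collapse onto their ground-truth outputs, and then invoke indistinguishability to equate those outputs. Fix an arbitrary $I \in \inps$. I will choose a set $Q$ of previously-unanswered neural questions together with a valid answer formula $\assump^* \in \mathsf{Possible}(Q)$ such that $\confsem{\prog_i}_{\assump \land \assump^*}(I)$ is the singleton $\{\gtsem{\prog_i}(I)\}$ for both $i \in \{1,2\}$.

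\textbf{Construction.} Let $Q$ consist of every pair $(\nn, I')$ where $\nn \in \nns$ occurs syntactically in $\prog_1$ or $\prog_2$ and $I'$ is an input on which $\nn$ is invoked during the ground-truth evaluation of $\prog_1(I)$ or $\prog_2(I)$; discard any pair already answered by $\assump$. This $Q$ is finite assuming termination. Define
\[
\assump^* \;\equiv\; \bigwedge_{(\nn, I') \in Q} \nn(I') = \gtsem{\nn}(I').
\]
Because user feedback is accurate, $\assump^*$ is a legal answer set, so $\assump^* \in \mathsf{Possible}(Q)$.

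\textbf{Key intermediate claim.} I would then prove by structural induction on expressions that
\[
\confsem{\prog_i}_{\assump \land \assump^*}(I) \;=\; \{\gtsem{\prog_i}(I)\} \qquad (i=1,2).
\]
The interesting case is \textsc{Neural}: for any $\nn(x)$ with $\bindings(x) = I'$, the feedback entails $\nn(I') = \gtsem{\nn}(I')$, so the satisfiability filter in Figure~\ref{fig:constrainedsem} prunes every element of $\tilde{\nn}(I')$ except the ground-truth label, which is present by the coverage guarantee of conformal prediction. Thus the \textsc{Neural} rule produces the singleton pair $(\gtsem{\nn}(I'),\, \nn(I') = \gtsem{\nn}(I'))$. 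The \textsc{Symb} rule then preserves the singleton property: each operand evaluates to a singleton by the induction hypothesis, so the cross product is a single tuple, and the resulting constraint is consistent with $\assump \land \assump^*$ (each conjunct pins a distinct neural input to its unique ground-truth output). The rules \textsc{Assign}, \textsc{Lookup}, and \textsc{Prog} propagate the property unchanged.

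\textbf{Conclusion and main obstacle.} Applying $\prog_1 \indist_\assump \prog_2$ to the set $Q$ and formula $\assump^*$ yields $\confsem{\prog_1}_{\assump \land \assump^*}(I) = \confsem{\prog_2}_{\assump \land \assump^*}(I)$, and by the claim both sides are singletons containing the respective ground-truth outputs, so $\gtsem{\prog_1}(I) = \gtsem{\prog_2}(I)$. Since $I \in \inps$ was arbitrary, the lemma follows. The main obstacle is the inductive step for \textsc{Symb}: one must argue that the conjunction of per-operand constraints is always satisfiable so that no ground-truth combination is spuriously filtered. This rests on the fact that each conjunct in $\assump^*$ uniquely determines one neural output, so conjunctions over multiple neural invocations remain satisfiable—a small but essential point that is easy to overlook and that crucially relies on the coverage guarantee to ensure the filtered prediction sets are non-empty.
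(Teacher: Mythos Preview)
Your approach is essentially the paper's: pin every relevant neural output to its ground-truth label, observe that constrained conformal evaluation then collapses to ground-truth evaluation, and invoke indistinguishability. The paper frames it as a contradiction and takes the \emph{maximal} feedback set $\fb = \{(f, I, \gtsem{f}(I)) \mid f \in \nns,\, I \in \inps\}$ rather than your restricted $Q$, then asserts (without spelling out the induction) that $\confsem{\cdot}_{\assump \land \fbf_\fb}$ coincides with $\gtsem{\cdot}$.

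One technical point where the paper's choice pays off: because conditionals are treated as a special case of \textsc{Symb} (Figure~\ref{fig:constrainedsem}), \emph{both} branches are evaluated under constrained conformal semantics. Your $Q$ only records neural calls made along the \emph{ground-truth} execution path, so a neural call $\nn(x)$ appearing solely in an untaken branch is not pinned by $\assump^*$ and will not evaluate to a singleton. Hence your stated induction hypothesis, ``each operand evaluates to a singleton,'' is too strong and fails at such nodes. The top-level conclusion still holds---the untaken branch cannot influence the output once the guard is pinned---but the induction must be phrased more carefully (e.g., the ground-truth value is always in the conformal output set, and no other value survives to the root). The paper's maximal $\fb$ avoids this subtlety entirely, since every neural call anywhere is pinned.
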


\begin{proof}
    Towards contradiction, suppose there exist programs $\prog_1$, $\prog_2$ and an input $\inp^*$ such that 
    \begin{align*}
      \gtsem{\prog_1}(\inp^*) \neq \gtsem{\prog_2}(\inp^*). 
    \end{align*}

  Consider the set of labels $\fb = \{(f, \inp, \gtsem{f}(\inp)) \ | \ f \in \nns, \inp \in \inps \}$. This set provides the ground truth label for every neural function on every input. Constrained conformal semantics with the formula $\fbf \wedge \fbf_{\fb}$ is thus equivalent to the ground truth semantics. Then, since $\prog_1 \indist \prog_2$, their output under constrained conformal semantics with feedback $\fbf \wedge \fbf_{\fb}$ must be equal on all inputs. That is,
    \begin{align*}
 \gtsem{\prog_1}(\inp^*) = \confsem{\prog_1}_{\assump  \land \fbf_{\fb}}(\inp^*) = \confsem{\prog_2}_{\assump  \land \fbf_{\fb}}(\inp^*) = \gtsem{\prog_2}(\inp^*),
    \end{align*}
    which contradicts the initial assumption. 

\end{proof}

\begin{theorem}
Let $\prog^*$ be the ground truth program, and let $\prog$ be the program returned by the {\sc ActiveLearning} procedure. Then, under the assumption that $\prog^* \in \progs$:
\[
\forall \inp \in \inps. \ \gtsem{\prog}(\inp) = \gtsem{\prog^*}(\inp)
\]
\end{theorem}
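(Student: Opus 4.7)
The plan is to show the result in two stages: first, establish an invariant that the ground-truth program $\prog^*$ is never removed from the hypothesis space $\progs$ during the main loop of \textsc{ActiveLearning}; second, use this invariant together with the termination condition to conclude that the returned program is ground-truth equivalent to $\prog^*$.

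For the invariant, I would proceed by induction on the iterations of the \texttt{while} loop in Figure~\ref{fig:toplevel}. The only place programs are removed from $\progs$ is inside \textsc{RefineHS}, which removes $\prog$ exactly when some $(\inp,\out)\in\inout$ satisfies $\out \notin \textsc{CCE}(\prog,\fbf,\inp)$. So I need to argue that $\prog^* \models \fbf$ at every iteration, i.e., $\out \in \confsem{\prog^*}_{\fbf}(\inp)$ for every $(\synthfunc, \inp, \out)$ recorded in the user feedback. This holds at initialization because the user-provided input-output pairs $\inout$ are correct with respect to $\gtsem{\prog^*}$, and the paper's assumption on conformal prediction guarantees coverage of the ground-truth label; i.e., $\gtsem{\prog^*}(\inp) \in \confsem{\prog^*}_{\fbf}(\inp)$. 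For the inductive step, each new piece of feedback is added in the form $f(\inp) = \out$ where $\out = \gtsem{f}(\inp)$, so strengthening $\fbf$ by this conjunct only removes labels inconsistent with the ground truth from the prediction set used in the {\sc Neural} rule of Figure~\ref{fig:constrainedsem}, and hence cannot remove $\gtsem{\prog^*}(\inp)$ itself. Thus $\prog^* \in \progs$ is preserved throughout.

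For the second stage, I would invoke the termination condition: the loop only exits when $\textsc{Distinguish}(\prog',\progs\setminus\prog',\fbf,\inps,\questions)$ returns false for the sampled $\prog'$. Since we sample $\prog' \in \progs$ and $\progs$ still contains $\prog^*$, I can apply Lemma~\ref{lemma:dist1} to conclude that all pairs of programs in $\progs$ are indistinguishable under $\fbf$. In particular, for the returned program $\prog$, we have $\prog \indist_{\fbf} \prog^*$. Applying Lemma~\ref{lemma:sufficient} then yields $\gtsem{\prog}(\inp) = \gtsem{\prog^*}(\inp)$ for all $\inp \in \inps$, which is exactly the theorem's conclusion. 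A subtle case is when $\textsf{Sample}(\progs)$ happens to return $\prog^*$ itself; but then the same reasoning gives distinguishability-free status between $\prog^*$ and the rest of $\progs$, and since the returned program is drawn from $\progs$ after no further pruning occurs, it is still indistinguishable from $\prog^*$.

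The main obstacle I anticipate is justifying the preservation of $\prog^*$ across \textsc{RefineHS} calls in a rigorous way, because this requires carefully combining (a) the coverage guarantee of the underlying conformal predictor, and (b) the fact that {\sc CCE} is a sound implementation of the constrained conformal semantics in Figure~\ref{fig:constrainedsem}. The coverage guarantee is only probabilistic (as noted in Section~\ref{sec:conformal}), so strictly speaking the theorem is an ``assuming coverage holds'' statement; I would make this assumption explicit at the start of the proof, matching the paper's convention that the user provides accurate feedback and the conformal predictor covers the ground truth. Once that assumption is in place, the remaining steps are straightforward applications of the two lemmas already established.
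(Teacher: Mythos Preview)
Your proposal is correct and follows essentially the same approach as the paper: establish that $\prog^*$ remains in the hypothesis space (the paper compresses this to a single sentence invoking the assumption that user feedback is consistent with the ground truth, while you spell it out as an inductive invariant), then apply Lemma~\ref{lemma:dist1} at termination and Lemma~\ref{lemma:sufficient} to conclude. Your explicit treatment of the conformal-coverage assumption is a useful clarification the paper's proof leaves implicit; the ``subtle case'' about sampling $\prog^*$ is unnecessary since Lemma~\ref{lemma:dist1} already quantifies over all pairs.
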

\begin{proof}
    Note that $\prog$ is a program selected from a refined hypothesis space $\progs' \subseteq \progs$ such that, for some user feedback $\fbf$ and some $\prog' \in \progs$, {\sc Distinguish}($\prog', \progs \backslash \prog', \fbf, \inps, \questions$) returns false. By Lemma ~\ref{lemma:dist1}, $\prog_1 \indist_{\fbf} \prog_2$ for all pairs $(\prog_1, \prog_2) \in \progs' \times \progs'$. Since the user always gives feedback consistent with the ground truth, $\prog^* \in \progs'$. Then $\prog \indist_{\fbf} \prog^*$. Therefore, by Lemma ~\ref{lemma:indis}, $\gtsem{\prog}(\inp) = \gtsem{\prog^*}(\inp)$ for all $\inp \in \inps$.
\end{proof}

\begin{theorem}\label{thm:qs1}
Let $\prog = (\textsf{def} \ \synthfunc(x) = S)$ be a program and let $\fbf$ denote user feedback. Then, $\prog \Vdash_k \fbf$ implies $\prog \models \fbf$.
\end{theorem}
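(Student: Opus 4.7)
The plan is to derive the theorem from a more basic subset relation between the two semantics: for any program $\prog$, input $\inp$, and user feedback $\fbf$,
\[
\confsem{\prog}_{\fbf}^{k}(\inp) \;\subseteq\; \confsem{\prog}_{\fbf}(\inp).
\]
Once this is established, the theorem follows immediately by unfolding definitions: $\prog \Vdash_k \fbf_{\fb}$ means $\out \in \confsem{\prog}_{\fbf}^{k}(\inp)$ for every $(\synthfunc, \inp, \out) \in \fb$, and the subset inclusion then gives $\out \in \confsem{\prog}_{\fbf}(\inp)$, which is exactly $\prog \models \fbf_{\fb}$.

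To prove the subset inclusion, I would proceed by structural induction on the AST of $\prog$, mirroring the rule system in Figure~\ref{fig:constrainedsem}. The point is that BCE and CCE share all rules \emph{except} that in the {\sc Neural} and {\sc Symb} rules BCE additionally restricts the generated $(\out, \varphi)$-set to a subset of cardinality at most $k$ (via sampling) whenever the natural set would be larger. For the {\sc Neural} base case, BCE selects at most $k$ elements from $\{(\out, \nn(\inp)=\out) \mid \out \in \tilde{\nn}(\inp),\ \textsf{SAT}(\nn(\inp)=\out \wedge \fbf)\}$, while CCE includes all of them; hence the BCE result is a subset of the CCE result. The {\sc Lookup} and {\sc Assign} cases are immediate from the induction hypothesis. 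For the inductive {\sc Symb} case, the induction hypothesis yields $\confoutint_i^{(k)} \subseteq \confoutint_i$ for each subexpression $E_i$; the cross product is monotone under subset inclusion, the per-tuple satisfiability filter is preserved, and the additional BCE sampling step can only shrink the resulting set further, so the combined BCE set for $\symfunc(E_1,\ldots,E_n)$ is a subset of the CCE set. Finally, {\sc Prog} simply projects out the constraints, which preserves subset inclusion.

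The main obstacle is being precise about the semantics of the BCE sampling step in the {\sc Symb} rule, since BCE may restrict both the intermediate results coming from subexpressions and the combined output of $\symfunc$. I would make this precise by treating BCE as nondeterministic choice over all subsets of cardinality at most $k$, so that $\confsem{\prog}_{\fbf}^{k}(\inp)$ denotes the set of possible outcomes of any such realization; for any fixed realization, the inductive argument above shows that the produced set is contained in the CCE set. Everything else is straightforward bookkeeping: the satisfiability filter $\textsf{SAT}(\fbf \wedge \bigwedge_i \varphi_i)$ is identical in the two semantics, and no BCE-only outcomes are ever generated that CCE would not also generate, so no additional elements appear on the BCE side. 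This yields the desired inclusion, completing the proof.
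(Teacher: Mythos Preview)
Your proposal is correct and follows essentially the same approach as the paper: both reduce the theorem to the subset inclusion $\confsem{\prog}_{\fbf}^{k}(\inp) \subseteq \confsem{\prog}_{\fbf}(\inp)$ and then unfold definitions. The paper simply asserts this inclusion in one sentence (``since BCE simply samples subsets of the prediction sets output by CCE''), whereas you spell it out via structural induction on the evaluation rules; your version is more rigorous but not a different route.
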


\begin{proof}

    To prove $\prog \models \fbf_{\fb}$, we must show $\forall (\synthfunc, \inp, \out) \in \fb . \ \out \in \confsem{\prog}_{\fbf_\fb}(\inp)$. Since $\prog \Vdash_k \fbf_\fb$, we know $\forall (\synthfunc, \inp, \out) \in \fb . \ \out \in \confsem{\prog}_{\fbf_\fb}^k(\inp)$. Since BCE simply samples subsets of the prediction sets output by CCE, it must be that   $\confsem{\prog}_{\fbf_\fb}^k(\inp) \subseteq \confsem{\prog}_{\fbf_\fb}(\inp)$, and so $\out \in \confsem{\prog}_{\fbf_\fb}(\inp)$. Therefore, $\prog \models \fbf_{\fb}$


\end{proof}

\begin{corollary}\label{cor:qs2}
Let $\progs$ be a set of programs and let $\fbf$ denote user feedback. Then, we have: \[
 (\hs{\progs}{\fbf}) \  \supseteq \  \{ \prog \in \progs \ | \ \prog \Vdash \fbf \} 
 \]
\end{corollary}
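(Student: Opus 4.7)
The plan is to derive this corollary as an almost immediate consequence of Theorem~6.1 (the preceding theorem showing $\prog \Vdash_k \fbf \Rightarrow \prog \models \fbf$) together with Definition~3.7 of the hypothesis space. Specifically, I would take an arbitrary program $\prog$ from the right-hand set $\{\prog \in \progs \mid \prog \Vdash_k \fbf\}$ and show that it must lie in $\hs{\progs}{\fbf}$.

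First, I would unpack the right-hand side: any such $\prog$ satisfies (i) $\prog \in \progs$ and (ii) $\prog \Vdash_k \fbf$. Applying Theorem~6.1 to (ii) yields $\prog \models \fbf$. Then, invoking the definition of the hypothesis space (Definition~3.3), $\hs{\progs}{\fbf} = \{\prog \in \progs \mid \prog \models \fbf\}$, we see that $\prog \in \hs{\progs}{\fbf}$, establishing the containment. No new semantic reasoning is required beyond what Theorem~6.1 already provides.

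\paragraph{Main obstacle.} There is essentially no technical obstacle here: the work has been done in Theorem~6.1, whose core step shows that the BCE output set is a subset of the CCE output set (since BCE is obtained from CCE by sampling subsets of the prediction sets in the \textsc{Neural} and \textsc{Symb} rules). The only subtlety worth flagging in the writeup is that the statement as written uses $\Vdash$ rather than $\Vdash_k$; I would treat this as implicitly parameterized by the same $k$ used throughout Section~4.3, or equivalently quantify the claim as holding for every choice of $k$, since Theorem~6.1 is stated for arbitrary $k$. Under that reading, the corollary follows by a one-line set-inclusion argument.
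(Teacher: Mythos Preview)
Your proposal is correct and essentially identical to the paper's own proof: both take an arbitrary $\prog$ from the right-hand set, apply the preceding theorem (labeled \textsf{thm:qs1} in the paper) to conclude $\prog \models \fbf$, and then invoke the definition of the hypothesis space to obtain $\prog \in \hs{\progs}{\fbf}$. Your remark about $\Vdash$ versus $\Vdash_k$ is apt; the paper uses the two interchangeably here.
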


\begin{proof}
    Let $\prog' \in \{ \prog \in \progs \ | \ \prog \Vdash \fbf \}$. Since $\prog' \Vdash \fbf$, by Theorem ~\ref{thm:qs1}, $\prog' \models \fbf$. Then $\prog' \in \hs{\progs}{\fbf}$, and hence $(\hs{\progs}{\fbf}) \  \supseteq \  \{ \prog \in \progs \ | \ \prog \Vdash \fbf$.
\end{proof}

\begin{corollary}\label{cor:qs3}
Let $q = (f, I)$ be a question and $\progs$ be a program space consistent with $\fbf$. Then:
\begin{align*}
    \pp{q, \progs} \leq  \frac{|\progs| - \underset{i}{\text{max}}  \ | \{ \prog \in \progs \ | \ \prog \Vdash \fbf \land f(I) = a_i \}  |}{|\progs|}
\end{align*}
\end{corollary}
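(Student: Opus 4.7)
The plan is to derive the bound by chaining the definition of pruning power with the set inclusion given by Corollary~\ref{cor:qs2}. First, I would unfold Definition~\ref{def:pp} and observe that since $\progs$ is already consistent with $\fbf$, conditioning on the additional constraint $f(I) = a_i$ yields the same hypothesis space as conditioning on the conjunction $\fbf \wedge f(I) = a_i$. That is, $\hs{\progs}{f(I) = a_i} = \hs{\progs}{\fbf \wedge f(I) = a_i}$, so the pruning power can be written as
\begin{align*}
\pp{q, \progs} = \frac{|\progs| - \max_i |\hs{\progs}{\fbf \wedge f(I) = a_i}|}{|\progs|}.
\end{align*}

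Next, I would invoke Corollary~\ref{cor:qs2} with feedback $\fbf \wedge f(I) = a_i$ to conclude
\begin{align*}
\hs{\progs}{\fbf \wedge f(I) = a_i} \;\supseteq\; \{\,\prog \in \progs \mid \prog \Vdash_k \fbf \wedge f(I) = a_i\,\}
\end{align*}
for each answer $a_i$, and hence the corresponding cardinality inequality holds pointwise in $i$. Taking the maximum over $i$ preserves the inequality, giving
\begin{align*}
\max_i |\hs{\progs}{\fbf \wedge f(I) = a_i}| \;\geq\; \max_i |\{\,\prog \in \progs \mid \prog \Vdash_k \fbf \wedge f(I) = a_i\,\}|.
\end{align*}

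Finally, I would substitute this into the expression for $\pp{q,\progs}$: subtracting the larger quantity from $|\progs|$ yields a smaller numerator, and dividing by the positive constant $|\progs|$ preserves the direction of the inequality, which exactly gives the claimed upper bound. There is no real obstacle here; the argument is pure bookkeeping on top of Corollary~\ref{cor:qs2}, which in turn rests on the soundness of BCE as an under-approximation of CCE (Theorem on $\prog \Vdash_k \fbf \Rightarrow \prog \models \fbf$). The only subtlety to flag is the implicit use of the assumption that $\progs$ is consistent with $\fbf$ in order to fold the prior feedback into the per-answer constraint without changing the set difference.
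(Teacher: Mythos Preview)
Your proposal is correct and follows essentially the same route as the paper: invoke Corollary~\ref{cor:qs2} to get the cardinality inequality for each answer, take the max over $i$, and substitute into the pruning-power formula. You are in fact slightly more careful than the paper in explicitly justifying why the consistency of $\progs$ with $\fbf$ lets you rewrite $\hs{\progs}{f(I)=a_i}$ as $\hs{\progs}{\fbf \wedge f(I)=a_i}$, a step the paper's proof leaves implicit.
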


\begin{proof}
    By Corollary ~\ref{cor:qs2}, $\forall \inp \in \inps. \ | \{ \prog \in \progs \ | \ \prog \Vdash \fbf \wedge f(\inp) = \answer_i \}| \leq |(\hs{\progs}{\fbf \wedge f(\inp) = \answer_i} ) |$. Then $\text{max}_i \ | \{ \prog \in \progs \ | \ \prog \Vdash \fbf \wedge f(\inp) = \answer_i \}| \leq \text{max}_i \ |(\hs{\progs}{\fbf \wedge f(\inp) = \answer_i} )|$, and hence
\begin{align*}
    \frac{|\progs| - \underset{i}{\text{max}}  \ | (\hs{\progs}{\fbf \land f(I) = a_i} ) |}{|\progs|}\leq  \frac{|\progs| - \underset{i}{\text{max}}  \ | \{ \prog \in \progs \ | \ \prog \Vdash \fbf \land f(I) = a_i \}  |}{|\progs|}.
\end{align*}
    
\end{proof}

\begin{theorem}
The {\sc SelectQuestion} procedure returns a question $q^* \in \questions$ with the highest pruning power modulo hypothesis space $\progs$. That is,
\[
\forall q \in \questions. \ \pp{q, \progs} \leq \pp{q^*, \progs}
\]
\end{theorem}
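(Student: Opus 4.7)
The plan is to partition the question space $\questions$ into two disjoint sets depending on how {\sc SelectQuestion} treated them, and then show that every question in each set has pruning power at most $\pp{q^*, \progs} = \beta^*$. Let $\questions_p$ be the set of questions whose \emph{actual} pruning power was computed at lines 11--12 of the while loop, and let $\questions_u = \questions \setminus \questions_p$ be the questions that were still in the worklist when the loop terminated (either via the \textbf{break} at line 10 or by emptying).

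For any $q \in \questions_p$, the algorithm computes $\beta' = (|\progs| - \max_i |\textsc{RefineHS}(\progs, q, \fbf \land f(I)=a_i)|)/|\progs|$. Since {\sc RefineHS} is defined using the exact \textsc{CCE} procedure, which implements $\hs{\progs}{\cdot}$ from Definition~\ref{def:hs}, we have $\beta' = \pp{q, \progs}$ by Definition~\ref{def:pp}. Since $q^*$ and $\beta^*$ are updated to track the maximum such $\beta'$ encountered (line 13), we conclude $\pp{q, \progs} \leq \beta^*$ for every $q \in \questions_p$.

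For any $q \in \questions_u$, I will argue that its \textsc{BCE}-derived upper bound $\beta_q$ satisfies $\beta_q \leq \beta^*$. This follows from the fact that the worklist is maintained in decreasing order of $\beta$: when the break condition $\beta^* \geq \beta$ fires at line 10 for some popped pair $(\tilde q, \tilde\beta)$, all elements still in the worklist have $\beta$-value at most $\tilde\beta \leq \beta^*$. (If the loop instead exits because the worklist is empty then $\questions_u$ contributes nothing.) Combining this with Corollary~\ref{cor:qs3}, which gives $\pp{q, \progs} \leq \beta_q$, yields $\pp{q, \progs} \leq \beta_q \leq \beta^*$.

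Putting the two cases together, $\pp{q, \progs} \leq \beta^* = \pp{q^*, \progs}$ for every $q \in \questions$, proving the theorem. The main delicate step is the second case: one must carefully formalize the worklist invariant that ``all unprocessed entries have $\beta$-value no larger than the most recently popped one'' and combine it with the soundness of the \textsc{BCE} upper bound from Corollary~\ref{cor:qs3}. A small edge case worth handling explicitly is when $\questions_p = \varnothing$ (e.g., the very first popped element already satisfies $\beta^* \geq \beta$), but since $\beta^*$ is initialized to $0$ this can only occur when every question has upper bound $0$, forcing all true pruning powers to be $0$ as well, so $q^* = \bot$ is trivially optimal.
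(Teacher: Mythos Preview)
Your proof is correct and follows essentially the same approach as the paper's: both partition $\questions$ into the questions whose exact pruning power was computed versus those left with only their \textsc{BCE} upper bound, handle the first set by the max-tracking at line~13, and handle the second set via the sorted-worklist invariant together with Corollary~\ref{cor:qs3}. Your version is more explicit about the worklist invariant and the $\questions_p=\varnothing$ edge case than the paper's terse argument, but the underlying structure is identical.
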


\begin{proof}
    Note that {\sc SelectQuestion} returns a question $\question^* \in \questions$ such that, for all questions $\question \in \questions$, either we have computed the pruning power of $\question$ and checked that $\pp{q, \progs} \leq \pp{q^*, \progs}$, or we have checked that the pruning power of $\question^*$ is greater than the approximated pruning power of $\question$:
\begin{align*}
\frac{|\progs| - \underset{i}{\text{max}}  \ | \{ \prog \in \progs \ | \ \prog \Vdash \fbf \land f(I) = a_i \}  |}{|\progs|} \leq \pp{\question^*, \progs}.
\end{align*}
By Corollary ~\ref{cor:qs3}, 
\begin{align*}
         \pp{q, \progs} \leq  \frac{|\progs| - \underset{i}{\text{max}}  \ | \{ \prog \in \progs \ | \ \prog \Vdash \fbf \land f(I) = a_i \}  |}{|\progs|}.
\end{align*}
Therefore, 
\[
\forall q \in \questions. \ \pp{q, \progs} \leq \pp{q^*, \progs}.
\]

\end{proof}

\section{Hyperparameter Selection}
Recall that the bounded conformal evaluation strategy used in question selection samples subsets of the prediction sets output by a program and its intermediary operations. In the \digits domain, the BCE hyper-parameter $k$ was set to 1, and we found that increasing it beyond $k=1$ is not useful. In the \imageedit and \imagesearch domains, on the other hand, we found that proportionally sampling 5\% of the elements in the prediction set gives the best results until a $k$ limit is reached. In particular, if the prediction set contains $n$ elements, we choose $\emph{max}(n/20, k)$ elements, and we use a $k$ value of $10$ in our experiments.
\section{Details About Application Domains}\label{sec:domains2}

In Section ~\ref{sec:eval}, we evaluate \toolname on three application domains: \imageedit, \imagesearch, and \digits. In this section, we provide more details about these domains and their abstractions. 


\subsection{Intermediate Representation for Image Editing and Search }

While prior work proposes different DSLs targeting image editing versus image search, these domains both use the same neural networks  for image segmentation and classification. Hence, to avoid defining two different abstract domains, we translate both DSLs proposed in prior work to an intermediate representation (IR) shown in  Figure~\ref{fig:imageeyedsl} (henceforth referred to as \objextract)  used for retrieving objects from an image. The \imageedit DSL extracts one or more sets of objects using the DSL in Figure~\ref{fig:imageeyedsl} and applies the specified operation to each set. Meanwhile, the \imagesearch DSL returns a boolean depending on whether or not the IR in Figure~\ref{fig:imageeyedsl} returns the empty set.

\begin{figure}
\begin{minipage}{.39\textwidth}
    \centering
    \footnotesize
    \[
    \begin{array}{r l}
        \prog := & \lambda x. \ \text{let } y = \textsf{\textbf{Segment}}(x) \text{ in } E \\
         \extractor := & y \ | \ {\textsf{Filter}}(E, f) \ | \   {\sf Complement}(\extractor) \\ &| \ {\sf Union}_N(E_1, \cdots, E_N) \\ &| \ {\sf Intersect}_N(E_1, \cdots, E_N) \\ &| \ E \times E \\  
          f := & \imageeyeattr{a} \ | \ \textsf{ HasRelation}(\texttt{r})
    \end{array}
    \]
    \caption{The \objextract DSL. The neural components are in bold.}
    \label{fig:imageeyedsl}
\end{minipage}
\begin{minipage}{.6\textwidth}
    \centering
    \includegraphics[scale=0.18, trim={0 26cm 25cm 0},clip]{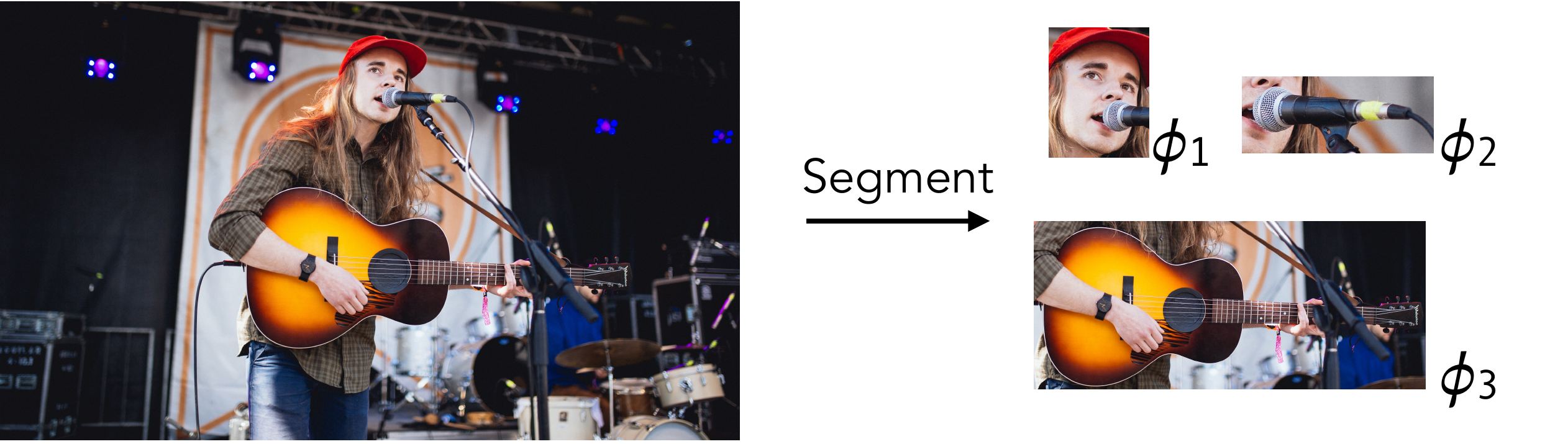}
    \caption{Image segments produced via object detection. }
    \label{fig:segment}
\end{minipage}
\vspace{-.3cm}
\end{figure}

 A program $P$ in this DSL is an object extraction function of the form $\lambda x. \ \text{let } y = \textsf{\textbf{Segment}}(x) \text{ in } E$, where $\textsf{\textbf{Segment}}$ is a neural component for performing \emph{segmentation} on the input image $x$. In particular, the $\textsf{\textbf{Segment}}$ operation outputs a set of \emph{image segments} $o = (\phi, \Delta)$, where $\phi$ denotes an image corresponding to a specific object and $\Delta$ is the location of that object within the original image. Following prior work~\cite{imageeye}, we represent the location of the object as a bounding box ($j_{\tt l}, j_{\tt r}, j_{t}, j_{b}$)  describing the left, right, top, and bottom pixels of the object. For instance,  Figure ~\ref{fig:segment} shows the result of running segmentation on an example image.

As shown in Figure~\ref{fig:imageeyedsl}, expressions $E$ in the \objextract DSL involve a combination of neural and symbolic operations. Specifically, symbolic expressions include standard set operations (complement, union, intersection, and Cartesian product) with well-known semantics. The \textsf{Filter} construct takes as input a set of image segments and a predicate $f$, which can be either neural or symbolic. In particular, $\imageeyeattr{a}$ is a neural component that checks whether a given image segment has a specific attribute \texttt{a}, such as \texttt{face} or \texttt{guitar}, indicating that this object is labeled as a face or guitar. On the other hand, $\textsf{HasRelation}(\texttt{r})$ is a symbolic binary predicate for checking whether a pair of image segments have a certain spatial relation (e.g., \texttt{below}). Since spatial relations can be determined based on  bounding boxes, evaluation of such predicates does not require the use of a neural network. 

\begin{example}
Consider the following expression in the \objextract DSL:
\begin{align*}
\small
    \textsf{Intersection}(\textsf{Filter}(y, \imageeyeattr{(\texttt{face})}),  \textsf{Complement}(\textsf{Filter}(y, \imageeyeattr{(\texttt{smiling})}))) 
\end{align*}
This expression yields all image segments containing human faces that are not smiling. For instance, applying this expression to the segments from Figure~\ref{fig:segment} would yield object $\phi_1$.
\end{example}

We conclude this section by  showing how to express a few representative tasks from the image editing and image search domains using our IR:

\begin{example}
    Consider the following program in the \imageedit DSL:
\begin{align*}
\small
         \{ \textsf{Intersect}(&\textbf{\textsf{Is}}(\textsf{\textbf{Object}}( \texttt{face})), \\
         \small &\textsf{Complement}(\textsf{Find}((\textbf{\textsf{Is}}(\textsf{\textbf{Object}}(\texttt{guitar})), \textbf{\textsf{Is}}(\textsf{\textbf{Object}}(\texttt{face})), \textsf{GetAbove})))) \rightarrow \textsf{Blur}  \}.
\end{align*}
This program blurs the faces of all people who are not holding guitars. We may instead express this task by applying a blur action to all objects extracted by the following \textsc{ObjExtract} expression:
\begin{align*}
         \textsf{Intersect}(&\textsf{Filter}(y, \textbf{\textsf{HasAttribute}}(\texttt{face})), \\ 
         &\textsf{Complement}(\textsf{Filter}(\textsf{Filter}(y, \textsf{\textbf{HasAttribute}}(\texttt{guitar})) \times \\ 
         & \ \ \ \ \ \ \ \ \ \ \ \ \ \ \ \ \ \ \ \ \ \ \ \ \ \ \ \ \ \ \ \ \ \ \ \textsf{Filter}(y, \textsf{\textbf{HasAttribute}}(\texttt{face})), \textsf{HasRelation}( \texttt{above})))) .
\end{align*}
\end{example}
\begin{example}
    Consider the following program in the \imagesearch DSL:
\begin{align*}
        \exists x. \exists y. \exists z. &\textbf{\textsf{HasType}}(x, \texttt{bicycle}) \wedge \textbf{\textsf{HasType}}(y, \texttt{person})  
        \wedge \textbf{\textsf{HasType}}(z, \texttt{helmet}) \wedge \\ &\textsf{HasRelation}(x, y, \texttt{below}) 
        \wedge \textsf{HasRelation}(y, z, \texttt{below}).
\end{align*}
This program returns all images in a corpus that contain cyclists wearing helmets. We may instead express this task by applying the following \textsc{ObjExtract} expression to all images in the corpus, and returning those whose extracted object sets are non-empty:
\begin{align*}
&\textsf{Filter}(\textsf{Filter}(\textsf{Filter}(y, \textsf{\textbf{HasAttribute}}(\texttt{bicycle})) \times \\ 
& \ \ \ \ \ \ \ \ \ \ \ \ \ \ \ \ \ \ \ \ \ \textsf{Filter}(y, \textsf{\textbf{HasAttribute}}(\texttt{person})), \textsf{HasRelation}( \texttt{above})) \times \\ 
& \ \ \ \ \ \ \ \ \ \ \textsf{Filter}(y, \textsf{\textbf{HasAttribute}}(\texttt{helmet})), \textsf{HasRelation}(\texttt{above})).
\end{align*}
\end{example}

\subsection{Abstract Domain for \objextract Intermediate Language}

Since programs in the \objextract IR  return a \emph{set} of objects under the regular evaluation semantics, they must return \emph{sets of sets} under the conformal semantics. We thus consider a \emph{set interval} abstract domain where each set $\confout$ of object sets is abstracted using a \emph{set interval} $[\out^-, \out^+]$ satisfying $\forall \out \in \confout. \  \out^- \subseteq \out \subseteq \out^+$. The abstraction and concretization functions for the set interval domain are defined as follows:
\begin{align*}
    \abstfunc(\confout) &= [\{\imgobj \ | \ \forall \out \in \confout. \ \imgobj \in \out \}  , \{ \imgobj \ | \ \exists \out \in \confout. \ \imgobj \in \out \}  ] \\ 
    \concfunc( [\out^-, \out^+]) &= \{ \out \ | \ \out^- \subseteq \out \subseteq \out^+ \}
\end{align*}
In other words, the abstraction function $\abstfunc$ takes a set $\confout$ of object sets and outputs an interval $[\out^-, \out^+]$ where $\out^-$ (resp. $\out^+$) is an under-approximation (resp. over-approximation of every object set $\out \in \confout$.
Conversely, the concretization function takes an abstract value $[\out^-, \out^+]$ and outputs the set of all object sets $\out$ that are a superset of $\out^-$ and a subset of $\out^+$.

\paragraph{\textbf{Forward abstract semantics.}}  Figure ~\ref{fig:imageeyeabs} presents the forward abstract semantics for this domain. The first two rules for  neural components apply the abstraction function to the conformal prediction result. The transformer for \textsf{Union} first computes the abstract values for the arguments and then takes the union of the lower and upper bounds. 
 The transformer for $\textsf{Intersect}$ is similar to that of $\textsf{Union}$, but it takes the intersections of the lower and upper bounds instead of their unions. The rule for $\textsf{Complement}$ first computes the abstract value $[O^-, O^+]$ for its argument $E$, and then computes the new lower and upper bounds  as $U^- \backslash O^+$  and $U^+ \backslash O^-$.
In the rule for \textsf{Filter}, the under- (resp. over-) approximation for the whole expression includes an object $o$ if (1) it is  in the under- (resp. over-) approximation of the nested expression $E$ and (2) if $o$ must (resp. may) have attribute $a$ according to conformal prediction. The final rule shows the abstract transformer for filtering based on spatial relations between a pair of objects. This rule is similar to the previous one, but uses a symbolic predicate instead of a neural one.

 \begin{figure}
     \centering
     \small
\begin{align*}
    \abssem{\textsf{\textbf{Segment}}}(\inp) &= \abstfunc(\constrainedsem{\textsf{\textbf{Segment}}}(\inp)) \\
    \abssem{\imageeyeattr{a}}(\inp) &= \abstfunc(\constrainedsem{\imageeyeattr{a}}(\inp)) \\
    \abssem{\textsf{Union}(\extractor_1, \ldots, \extractor_N)}(\inp) &= \left [ \bigcup_i \out^-_i, \bigcup_i \out^+_i\right ]\text{, where } [ \out^-_i, \out^+_i] = \abssem{E_i}(\inp)  \\
    \abssem{\textsf{Intersect}(\extractor_1, \ldots, \extractor_N)}(\inp) &= \left [ \bigcap_i \out^-_i, \bigcap_i \out^+_i \right ] \text{, where } [\out^-_i, \out^+_i] = \abssem{E_i}(\inp)  \\
    \abssem{\textsf{Complement}(\extractor)}(\inp) &= [ U^-  \setminus \out^+, U^+ \setminus \out^- ] \\
    & \text{where } [\out^-, \out^+] = \abssem{E}(\inp) \text{ and } [U^-, U^+] = \abssem{\textsf{\textbf{Segment}}}(\inp)\\
    \abssem{\textsf{Filter}(\extractor, \imageeyeattr{a})}(\inp) &= [ \{\imgobj \ | \ \imgobj \in \out_1^- \wedge \imgobj \in \out_2^- \} , \{\imgobj \ | \ \imgobj \in \out_1^+ \wedge \imgobj \in \out_2^+ \ \} ]  \\ 
    &\text{ where } [ \out_1^-, \out_1^+] = \abssem{E}(\inp) \text{ and }  [\out_2^-, \out_2^+] = \abssem{\imageeyeattr{a}}(\inp) \\ 
    \abssem{\textsf{Filter}(\extractor_1 \times \extractor_2, \textsf{HasRelation}(\texttt{r}))}(\inp) &= [\{(\imgobj_1, \imgobj_2) \in \out_1^- \times \out_2^- \ | \  \stansem{\textsf{HasRelation}(\texttt{r})}(o_1, o_2) \}, \\
    & \hspace{.5cm} \{(\imgobj_1, \imgobj_2) \in \out_1^+ \times \out_2^+ \ | \ 
 \stansem{\textsf{HasRelation}(\texttt{r})}(o_1, o_2)  \}] \\ 
    & \text{ where } [\out^-_i, \out^+_i] = \abssem{E_i}(\inp)
\end{align*}
\vspace{-0.25in}
     \caption{Abstract forward semantics for \objextract.}
     \label{fig:imageeyeabs}
     \vspace{-0.15in}
 \end{figure}

\paragraph{\textbf{ Backward abstract semantics.}} 
Given an operator $f$ whose abstract output is $\alpha_o$ and whose abstract inputs are $\alpha_1, \ldots, \alpha_k$, the inverse abstract transformer $\backsem{f}{i}$ infers a new abstract value $\alpha'_i$ for the $i$'th child of $f$ such that:
\[
 \forall \overline{x},y.  \left ( \left ( \ f(x_1, \ldots, x_n) = y \land 
y \in \gamma(\alpha_o) \land \bigwedge_{j=1}^n x_j \in \gamma(\alpha_j) \right )  \ 
\Longrightarrow \ x_i \in \gamma(\backsem{f}{i}(\alpha_o, \alpha_1, \ldots, \alpha_n) \right )
\]
The idea is  to use the abstract backward semantics to tighten the abstract value of the $i$'th argument of $f$ as $\alpha_i \sqcap \backsem{f}{i}(\alpha_o, \alpha_1, \ldots, \alpha_n)$. 
Figure ~\ref{fig:imageeyebackwards} presents the backward abstract transformers for \objextract. To understand the rule for \textsf{Union}, suppose we have $A = A_1 \cup A_2$. Since $A \supseteq A_1 \supseteq A \backslash A_2$,  we can derive $A \backslash A_2$ as a lower bound for $A_1$ and $A$ as an upper bound. Generalizing this to more than two operands, the union rule computes the upper bound for the $i$'th operand of union  as $O^+$ and the lower bound as  $O^- \backslash \cup_{i \neq j} O_j^+$.  

To understand the rule for \textsf{Filter}, suppose $A = \textsf{Filter}(A_1, A_2)$. Then $A \subseteq A_1 \subseteq A \cup A_2$. Hence, we can use $A$ as a lower bound on $A_1$ and $A \cup A_2$ as an upper bound. Thus, the \textsf{Filter} rule computes the lower bound for the $i$'th operand as $\out^-$ and the upper bound as $\out^+ \cup (U^+ \setminus \out_j^-)$ (with $j \neq i$). The second filter rule for binary predicates is similar but yields the upper bound $U^+ \times U^+$, since such an operation can only be applied to pairs of objects. The rule for $\textsf{Intersect}$ is a generalization of the rule for \textsf{Filter}. The rule for \textsf{Complement} is the same as the abstract transformer in the forward direction, owing to the face that the set complement operation is involutive.



 \begin{figure}
     \centering
     \small
\begin{align*}
    \backsem{\textsf{Union}(\ldots)}{i}([\out^-, \out^+], [\out^-_1, \out^+_1], \ldots, [\out^-_n, \out^+_n]) &= \left [ \out^- \setminus \bigcup_{j\neq i} \out^+_j, \ \out^+ \right ]  \\
    \backsem{\textsf{Intersect}(\ldots)}{i}([ \out^-, \out^+], [\out^-_1, \out^+_1], \ldots, [\out^-_n, \out^+_n]) &= 
     \left [\out^-, \ O^+ \cup  \left ( U^+ \setminus (\bigcap_{j\neq i} \out^-_j )\right ) \right ]  \\
    \backsem{\textsf{Complement}(\extractor)}{1}([\out^-, \out^+], \ldots) &= [U^- \setminus \out^+ , U^+ \setminus \out^-]    \\
    \backsem{\textsf{Filter}(\extractor, \imageeyeattr{a})}{i}( [\out^-, \out^+], [\out^-_1, \out^+_1], [\out^-_2, \out^+_2]) &= 
    [\out^-, \out^+ \cup (U^+ \backslash O_j^-)]  \textrm{ where } j \neq i\\   
   \backsem{\textsf{Filter}(E, {\textsf{HasRelation}(\texttt{r}))}}{1}([O^-, O^+]) &= [ O^-, U^+ \times U^+ ] \\
      \backsem{E_1 \times E_2}{i}([O^-, O^+]) &= [ O^- \downarrow i, O^+ \downarrow i ]
\end{align*}
\vspace{-0.25in}
     \caption{Abstract backward semantics for \objextract where $[U^-, U^+] = \abssem{\textsf{\textbf{Segment}}}(\inp) $. We use the notation $O \downarrow j$ to denote the set $\{ o_j \ | \ (o_1, \ldots, o_n) \in O \}$.}
     \label{fig:imageeyebackwards}
     \vspace{-0.1in}
 \end{figure}

\subsection{Further Details About the \imageedit and \imagesearch Domains}

In this subsection, we provide further details about some implementation choices for the \imageedit and \imagesearch domains. 

\paragraph{\textbf{Calibration dataset.}}  As discussed in Section~\ref{sec:impl}, conformal prediction requires calibration datasets for which ground truth labels are available and that have a similar data distribution as the input space. Unlike the \digits domain, the \imageedit and \imagesearch domains lacks quality datasets for which ground truth labels for the objects and attributes are available. To deal with the lack of a labeled calibration set, we use the following methodology. For a given image $\inp$, we use the predictions of the underlying neural networks for segmentation and classification as the ground truth labels. We then distort $\inp$ using existing image perturbation techniques \cite{hendrycks2019benchmarking} to produce a new image $\inp'$ and generate predicted labels for $\inp$ by using the classification results on $\inp'$. We use the same strategy to obtain pseudo-ground truth labels for both the calibration and test set. Using pseudo-labels for both training and testing data is an established methodology in Computer Vision ~\cite{kendall2015posenet, valentin2016learning, sohn2020fixmatch, liu2021unbiased}.

\paragraph{\textbf{Object detection and classification.}} The \imageedit and \imagesearch DSLs contain operators that detect and label object in an image (e.g. the $\textsf{\textbf{Is}}$ operator in the \imageedit DSL and the $\textsf{\textbf{HasAttribute}}$ operator in the \imagesearch DSL). As in ~\cite{imageeye}, we implement these operators using Amazon Rekognition. When a new batch of images is loaded, we perform a preprocessing step that computes the prediction sets of each neural attribute of each image, then memoizes these results to be used during active learning. Preprocessing takes an average of 1.6s per image. 

\subsection{Abstract Domain for Visual Arithmetic Application}

For visual arithmetic tasks, we consider the DSL shown in Figure ~\ref{fig:mnistdsl}. Programs  in this DSL take as input a list of images of handwritten digits, transform those images into integers using the neural function $\texttt{\textbf{toDigit}}$, and then perform list operations using the higher-order functions $\texttt{fold}$, $\texttt{map}$, and $\texttt{filter}$. A $\texttt{fold}$ operation can use the binary functions $\texttt{sum}$, $\texttt{max}$, $\texttt{product}$, and $\texttt{inc}$. A $\texttt{map}$ operation can use curried versions of any of these functions, as well as the neural function $\texttt{\textbf{toDigit}}$ that predicts the label of an image. A \texttt{filter} operation takes in a boolean predicate of the form $\lambda x. x \lhd c $ 
where $\lhd \in \{ \leq, =, \geq, \ldots \} $.


 \paragraph{\textbf{Abstract domain.}} Our abstraction represents a  list of integers  as a list of tuples $(\alpha_I^i, z^i)$ where $\alpha_I^i$ is the interval abstraction of an integer and $z^i \in \{\textsf{true}, \textsf{false}, *\}$ is a three-valued logic element that denotes whether or not the corresponding element is \emph{definitely} in the list. Intuitively, if $z^i$ is $*$, then this element may or may not have been filtered from the list. If $z^i$ is $\textsf{false}$, then the element has definitely been removed, and if $z^i$ is $\textsf{true}$ then it is definitely still in the list.

\begin{figure}
     \centering
     \small
\begin{align*}
    \abssem{\mnistpred}(x) =& \ \abstfunc(\constrainedsem{\mnistpred}(x)) \\
    \abssem{f}(([a,b], z), ([a', b'], z')) =& \ ([\stansem{f}(a, a'), \stansem{f}(b, b')], z \wedge z') \\ 
    \abssem{\lambda x.x < c}(([a,b], z)) =& \begin{cases}
        ([a,b], z)& \text{ if } b < c \\ 
        ([a,b], \textsf{false})& \text{ if } a \geq c \\ 
        ([a,b], z \wedge *)& \text{ otherwise}
    \end{cases} \\
    \abssem{\texttt{filter} \ h \ E} =&  \ \textsf{filter} \ \abssem{h} \ \abssem{E}
    \\
    \abssem{\texttt{map} \ g \ E } =&  \ \textsf{map} \ \abssem{g} \ \abssem{E} 
    \\
    \abssem{\texttt{fold} \ f \ c \ E} =& \  ([a_1 , b_2], \textsf{true}) \\ 
     \text{where } & \ ([a_1, b_1], z_1) = \textsf{fold} \ \abssem{f} \ \alpha(c)  \ (\textsf{filter} \ (\lambda(x, y). y = \textsf{true}) \ \abssem{E}) \ \\
     \text{ and } & \ ([a_2, b_2], z_2) = \textsf{fold} \ \abssem{f} \ \alpha(c) \ (\textsf{filter} \ (\lambda (x,y). \ y \neq \textsf{false}) \  \abssem{E}) 
\end{align*}
     \caption{Abstract forward semantics for the \digits DSL.}
     \label{fig:mnistabs}
 \end{figure}

 \paragraph{\textbf{Forward abstract semantics.}} Figure ~\ref{fig:mnistabs} presents the abstract semantics for this DSL.  In the following discussion, we explain each of the forward abstract transformers in more detail.
 
 \paragraph{\textbf{Binary operation}} A binary operation $f$ (e.g. addition) may be performed on two abstract values $([a,b], z)$, $([a',b'], z')$ by performing the operation on both the lower and upper bounds of the interval and taking the logical conjunction of $z$ and $z'$ under the 3-valued logic semantics~\cite{tvl}. 
 
 \paragraph{\textbf{Filter and map}} A filtering function $h$ may be performed on an abstract value $([a,b], z)$ by considering where \emph{any}, \emph{all}, or \emph{no} concrete values in $\concfunc(([a,b], z))$ would be filtered by $h$. For instance, consider the filtering function $\lambda x. x < c$. In the case that $a \geq c$, we know that no concrete values in $\concfunc(([a,b], z))$ are filtered, and so $h$ outputs the unchanged input $([a,b], z)$. In the case that $b < c$, we know that every concrete value in $\concfunc(([a,b], z))$
are filtered, and so $h$ outputs $([a,b], \textsf{false})$. In any other case, we do not know whether \emph{some} concrete value is filtered, so we output $([a,b], *)$ to account for this uncertainty. To evaluate $\texttt{filter} \ h \ E$ abstractly, we simply filter the abstract list $\abssem{E}$ using $\abssem{h}$.  
The evaluation of $\texttt{map} \ g \ E$ is similar to the evaluation of $\texttt{filter} \ h \ E$.

\paragraph{\textbf{Fold operation}} Evaluating $\texttt{fold} \ c \ E$ abstractly is slightly more complex, as it requires computing an interval $[a', b']$ representing the minimum and maximum output of the $\texttt{fold}$ operation. Our abstract semantics relies on the fact that all list elements are natural numbers and that, for all functions $f$ in our DSL, we have 
$
\texttt{fold} \ f \ c \ x:xs \geq \texttt{fold} \ f \ c \ xs 
$.
Thus, we can compute the upper bound of the result as: 
\[
\abssem{f} \ \alpha(c)  \ \textsf{filter} \ (\lambda(x, y). y \neq \textsf{false}) \ \abssem{E}) 
\]

However, to compute a true lower bound, we should only consider the list elements that \emph{must} be in  the input list; hence, we compute the lower bound as 
\[
\abssem{f} \ \alpha(c)  \ \textsf{filter} (\lambda(x, y). y = \textsf{true}) \abssem{E}
\]

\paragraph{\textbf{Remark.}} Note that our forward abstract semantics preserve the length of an input list. In particular, if some operation could end up  removing an element from the list, the forward semantics retains that element in the list but sets it corresponding boolean to either $\mathsf{false}$ or $*$. Our backward semantics rely on this length-preservation property of the abstract domain. 


\paragraph{\textbf{Backward abstract semantics.}} 


The backward abstract semantics for the \digits domain are given in Figure ~\ref{fig:mnistbackwards}. We explain each of the non-trivial rules in more detail below.

 \begin{figure}
     \centering
     \small
\begin{align*}
    \backsem{\mnistpred}{1}(([a,b], z)) &= ([a,b], z) \\
    \backsem{\lambda x.x < c}{1}(([a,b],z)) &= \begin{cases}
        ([a, c - 1], z)& \text{ if } z = \textsf{true}\\ 
        ([a, b], z)& \text{ otherwise}
    \end{cases} \\ 
    \backsem{\texttt{filter} \ h \ E}{2}(l, \ldots) &= \textsf{map} \ \backsem{h}{1} \ l
    \\
    \backsem{\texttt{map} \ g \ E  }{2}(l, \ldots) &= \textsf{map} \ \backsem{g}{1}   \ l 
    \\
    \backsem{\texttt{fold} \ \texttt{sum} \ c \ E \ }{3}([a, b], [([a_1, b_1], z_1), \ldots, ([a_n, b_n], z_n)]) &= [([a_1', b_1'], z_1), \ldots, ([a_n', b_n'], z_n)] \\ 
    \text{where } [a_i', b_i'] &= [\textsf{max}(0, a - c- \sum_{j \neq i} b_j ), b - c- \sum_{j \neq i} a_j \cdot \mathbbm{1}(z_j) ]
\end{align*}
     \caption{Abstract backward semantics for the \digits DSL.}
     \label{fig:mnistbackwards}
 \end{figure}

\paragraph{Predicates.} Consider a predicate $\lambda x. x<c$ where $x$ is a list element. If $x$ is known to be \emph{definitely} in the list after applying this predicate to $x$, then $c$ constitutes a lower bound on $x$, as it was definitely \emph{not} removed from the list by applying this predicate.  Thus, as shown in Figure~\ref{fig:mnistbackwards}, we perform a case split on $z$, tightening the bound to $[c, b]$ when $z$ is true and leaving it as $[a, b]$ otherwise.

\paragraph{Filter and map.} The definition of $\backsem{\tt filter}{2}$  relies on the  backward semantics for predicates. For each element $x$ in $l$, the backward semantics computes  the input value of $x$ as $\backsem{h}{1}(x)$. Note that the correctness of the backward semantics relies on the fact that our list abstract domain is length-preserving, as mentioned earlier. The definition of $\backsem{\tt map}{2}$ is similar and relies on the abstract backward semantics $\backsem{g}{1}$. For example, $\backsem{{\tt curry} \ {\tt sum} \  c}{1}$ is defined as  $\lambda x. x-c$. Thus, $\backsem{{\tt map}{\tt{\ curry \ sum} \  c \ L }}{2}$ ends up subtracting $c$ from each element of the output list to compute the abstract value for input list $L$.


\paragraph{Fold.}
The final rule in Figure~\ref{fig:mnistbackwards} defines the backward abstract semantics of $\texttt{fold}$. However, since precisely reasoning about {\tt fold} requires performing a case split on the function argument $f$, we only show the case where $f = {\tt sum}$ as a representative example. This rule shows how to compute the updated abstract value for the input list given that the output of the {\tt fold} operation is $[a, b]$ and the initial abstract value of the input list is $[ ([a_1, b_1], z_1), \ldots ([a_n, b_n], z_n)$. To understand this rule, suppose that the actual input list of fold has elements $[x_1, \ldots, x_n]$ and suppose that the actual output is $y$. Then, we have
\[
x_i = y - c - \sum_{j \neq i} x_j
\]
Thus, the lower bound $a_i'$ and upper bound $b_i'$ on $x_i$ can be computed as:
\[
\begin{array}{cc}
a_i' = a - c - \sum_{j \neq i} b_j & b_i' = b - c - \sum_{j \neq i} a_j \times \mathbbm{1}(z_i)
\end{array}
\]
Note that, for the upper bound computation, we define $\mathbbm{1}(z_i)$ to be $1$ if $z_i$ is true and $0$ otherwise, and we only subtract $a_i$ if $\mathbbm{1}(z_i)$ is $1$. This is needed to ensure that the value we compute is a true upper bound.




\section{List of \imageedit Benchmarks}

\begin{enumerate}
    \item \{\textsf{Intersect}(\textsf{Is}(\textsf{Smiling}), \textsf{Is}(\textsf{EyesOpen})) $\rightarrow$\textsf{Brighten}\} \\
    Description: Brighten all faces that are smiling and have eyes open. \\ 
    Dataset: Wedding
    \item \{\textsf{Find}(\textsf{Is}(\textsf{Object}(\texttt{face}), \textsf{Is}(\textsf{Object}(\texttt{face})), \textsf{GetAbove}) $\rightarrow$ \textsf{Brighten}\} \\ 
    Description: Brighten all faces in back. \\ 
    Dataset: Wedding
    \item \{\textsf{Union}(\textsf{Is}(\textsf{Object}(\texttt{bride})), \textsf{Is}(\textsf{Object}(\texttt{groom})) $\rightarrow$ \textsf{Crop}\} \\ 
    Description: Crop image to feature just faces of bride and groom. \\ 
    Dataset: Wedding 
    \item \{\textsf{Intersect}(\textsf{Is}(\textsf{Object}(\texttt{face})), \newline \textsf{Complement}(\textsf{Is}(\textsf{Object}(\texttt{bride})) $\rightarrow$ \textsf{Blur}\} \\ 
    Description: Blur all faces except the bride's face. \\ 
    Dataset: Wedding 
    \item \{\textsf{Find}(\textsf{Find}(\textsf{Is}(\textsf{Object}(\texttt{face})), \textsf{Is}(\textsf{Object}(\texttt{face})), \textsf{GetRight}), 
\textsf{Is}(\textsf{Object}(\texttt{face}))), \textsf{GetRight}) $\rightarrow$ \textsf{Brighten}\} \\ 
Description: Brighten all faces except the leftmost two faces. \\ 
Dataset: Wedding
    \item \{\textsf{Intersect}(\textsf{Is}(\textsf{Object} (\texttt{face})), \textsf{Complement}(\textsf{Intersect}(\textsf{Is}(\textsf{Smiling}), \textsf{Is}(\textsf{EyesOpen})))) $\rightarrow$ \textsf{Blur} \} \\ 
    Description: Blur all faces that are not smiling and do not have their eyes open. \\ 
    Dataset: Wedding 
    \item \{\textsf{Intersect}(\textsf{Is}(\textsf{Smiling}), \textsf{Is}(\textsf{EyesOpen}), \textsf{Complement}(\textsf{Is}(\textsf{Object}( \texttt{groom}))))$\rightarrow$ \textsf{Blur}\} \\ 
    Description: Crop image to feature all faces that are smiling and have eyes open, except the groom's face. \\ 
    Dataset: Wedding
    \item \{\textsf{Union}(\textsf{Is}(\textsf{Object}(\texttt{bride})), \textsf{Intersect}(\textsf{Is}(\textsf{Smiling}), \textsf{Is}(\textsf{EyesOpen}))) $\rightarrow$ \textsf{Blur}\} \\ 
    Description: Crop image to feature the bride's face, plus faces that are smiling and have their eyes open. \\ 
    Dataset: Wedding
    \item \{\textsf{Intersect}(\textsf{Complement}(\textsf{Is}(\textsf{Smiling})), \textsf{Find}(\textsf{Is}(\textsf{Object}(\texttt{face})),
(\textsf{Is}(\textsf{Object}(\texttt{face})), \textsf{GetAbove})) $\rightarrow$ \textsf{Blur}\} \\ 
Description: Blur all faces in the back that are not smiling. \\ 
Dataset: Wedding 
\item \{\textsf{Union}(\textsf{Intersect}(\textsf{Is}(\textsf{Object}(\texttt{face})), \textsf{Complement}(\textsf{Is}(\textsf{Smiling}))), \textsf{Is}(\textsf{BelowAge}(18))) $\rightarrow$ \textsf{Blur}\} \\ 
Description: Blur all faces that are not smiling or are under 18. \\ 
Dataset: Wedding
\item \{\textsf{Union}(\textsf{Find}(\textsf{Is}(\textsf{Object}(\texttt{bride})), \textsf{Is}(\textsf{Object}(\texttt{face})), \textsf{GetRight}), \textsf{Is}(\textsf{Object}(\texttt{bride}))) $\rightarrow$ \textsf{Crop}\} \\ 
Crop image to feature just the bride's face and the face directly to her right. \\ 
Dataset: Wedding
\item \{\textsf{Union}(\textsf{Is}(\textsf{Object}(\texttt{bride})), \textsf{Find}(\textsf{Is}(\textsf{Object}(\texttt{bride})), \textsf{Is}(\textsf{Object}(\texttt{groom})), \textsf{GetAbove})) $\rightarrow$ \textsf{Crop}\} \\ 
Description: Crop image to feature just the bride and the groom when he is behind her.   \\
Dataset: Wedding
\item \{\textsf{Intersect}(\textsf{Find}(\textsf{Is}(\textsf{Object}(\texttt{face})), \textsf{Is}(\textsf{Object}(\texttt{face})), \textsf{GetRight}), \newline \textsf{Find}(\textsf{Is}(\textsf{Object}(\texttt{face})), \textsf{Is}(\textsf{Object}(\texttt{face})), \textsf{GetLeft})) $\rightarrow$ \textsf{Brighten}\} \\ 
Description: Brighten all faces except leftmost and rightmost face. \\ 
Dataset: Wedding 
\item \{\textsf{Find}(\textsf{Union}(\textsf{Is}(\textsf{Object}(\texttt{groom})), \textsf{Is}(\textsf{Smiling}), \textsf{Is}(\textsf{EyesOpen})),  \textsf{Is}(\textsf{Object}(\texttt{person})), \textsf{GetBelow}) $\rightarrow$ \textsf{Sharpen}\} \\ 
Description: Sharpen the groom, and all smiling people and people with their eyes open. \\
Dataset: Wedding 
\item \{\textsf{Intersect}(\textsf{Find}(\textsf{Is}(\textsf{Object}(\texttt{face})), \textsf{Is}(\textsf{Object}(\texttt{bride})), \textsf{GetRight}), \newline  \textsf{Find}(\textsf{Is}(\textsf{Object}(\texttt{face})), \textsf{Is}(\textsf{Object}(\texttt{bride})), \newline \textsf{GetLeft})) $\rightarrow$ \textsf{Crop}\} \\ 
Description: Crop image to feature just bride when someone is to her left and right. \\ 
Dataset: Wedding 
\item \{\textsf{Union}(\textsf{Find}(\textsf{Is}(\textsf{Object}(\texttt{bride})), \textsf{Is}(\textsf{Object}(\texttt{face})), \textsf{GetRight}), \textsf{Find}(\textsf{Is}(\textsf{Is}(\textsf{Object}(\texttt{bride}))), \newline \textsf{FaceObject}, \textsf{GetLeft}), \textsf{Is}(\textsf{Object}(\texttt{bride}))) \newline $\rightarrow$ \textsf{Crop}\} \\ 
Description: Crop image to feature just the bride and the people to her left and right. \\ 
Dataset: Wedding
\item \{\textsf{Complement}(\textsf{Is}(\textsf{Object}(\texttt{car}))) $\rightarrow$ \textsf{Blur}\} \\ 
Description: Blur all objects except cars. \\ 
Dataset: City Streets 
\item \{\textsf{Filter}(\textsf{Is}(\textsf{Object}(\texttt{car})), \textsf{Is}(\textsf{Object}(\texttt{face}))) $\rightarrow$ \textsf{Blur}\} \\
Description: Blur all faces in cars. \\ 
Dataset: City Streets
\item \{\textsf{Filter}(\textsf{Is}(\textsf{Object}(\texttt{car})), \textsf{Is}(\textsf{Object}(\texttt{text})) $\rightarrow$ \textsf{Blur}\} \\
Description: Blur all text on cars. \\ 
Dataset: City Streets
\item \{\textsf{Find}(\textsf{Is}(\textsf{Object}(\texttt{text})), \textsf{Is}(\textsf{Object}(\texttt{car})), \textsf{GetParents}) $\rightarrow$ \textsf{Blur}\} \\ 
Description: Blur all cars with text on them. \\
Dataset: City Streets
\item \{\textsf{Union}(\textsf{Is}(\textsf{Object}(\texttt{cat})), \textsf{Is}(\textsf{Object}(\texttt{face})) ) $\rightarrow$  \textsf{Brighten}\} \\ 
Description: Brighten all faces and all cats. \\ 
Dataset: Cats 
\item \{\textsf{Union}(\textsf{Is}(\textsf{Object}(\texttt{cat})), \textsf{Is}(\textsf{EyesOpen})) $\rightarrow$ \textsf{Brighten}\} \\ 
Description: Brighten all faces with eyes open and all cats. \\ 
Dataset: Cats 
\item \{\textsf{Find}(\textsf{Is}(\textsf{Object}(\texttt{guitar})), \textsf{Is}(\textsf{Object}(\texttt{face})), \textsf{GetAbove}) $\rightarrow$ \textsf{Sharpen}\} \\ 
Description: Sharpen faces of people playing guitar. \\ 
Dataset: Festival 
\item \{\textsf{Find}(\textsf{Intersect}(\textsf{Is}(\textsf{Smiling}), \textsf{Is}(\textsf{EyesOpen})), \textsf{Object}(\texttt{car}), \newline \textsf{GetParents}) $\rightarrow$ \textsf{Brighten}\} \\ 
Description: Brighten all people in cars who are smiling and have eyes open. \\
Dataset: City Streets
\item \{\textsf{Union}(\textsf{Is}(\textsf{Object}(\texttt{car})), \textsf{Is}(\textsf{Object}(\texttt{bicycle}))) $\rightarrow$ \textsf{Brighten}\} \\ 
Description: Brighten all cars and bicycles. \\
Dataset: City Streets 
\item \{\textsf{Find}(\textsf{Is}(\textsf{Object}(\texttt{person})), \textsf{Object}(\texttt{bicycle}), \textsf{GetBelow}) $\rightarrow$ \textsf{Brighten}\} \\ 
Description: Brighten all bicycles that are being ridden. \\ 
Dataset: City Streets 
\item \{\textsf{Find}(\textsf{Is}(\textsf{Object}(\texttt{bicycle})), \textsf{BelowAge}(\texttt{18}), \textsf{GetAbove}) $\rightarrow$ \textsf{Blur}\} \\ 
Description: Blur the faces of children riding bicycles. \\ 
Dataset: City Streets 
\item \{\textsf{Complement}(\textsf{Union}(\textsf{Is}(\textsf{Object}(\texttt{car})), \textsf{Is}(\textsf{Object}(\texttt{bicycle})))) $\rightarrow$ \textsf{Blackout}\} \\ 
Description: Blackout all objects except cars and bicycles. \\ 
Dataset: City Streets
\item \{\textsf{Intersect}(\textsf{Is}(\textsf{Object}(\texttt{text})),  \textsf{Complement}(\textsf{Filter}(\textsf{Is}(\textsf{Object}(\texttt{car})),  \textsf{Is}(\textsf{Object}(\texttt{car}))))) $\rightarrow$ \textsf{Blackout}\} \\ 
Description: Blackout all text not on a car. \\
Dataset: City Streets
\item \{\textsf{Union}(\textsf{Is}(\textsf{Object}(\texttt{bicycle})), \textsf{Is}(\textsf{Object}(\texttt{car})), \textsf{Is}(\textsf{Object}(\texttt{person}))) $\rightarrow$ \textsf{Brighten}\} \\ 
Description: Brighten all bicycles, cars, and people. \\
Dataset: City Streets
\item \{\textsf{Intersect}(\textsf{Is}(\textsf{Object}(\texttt{face})), \newline  \textsf{Complement}(\textsf{Find}(\textsf{Is}(\textsf{Object}(\texttt{bicycle})), \textsf{Is}(\textsf{Object}(\texttt{face})), \textsf{GetAbove}))) $\rightarrow$ \textsf{Blur}\} \\ 
Description: Blur faces of people not riding bicycles. \\ 
Dataset: City Streets
\item \{\textsf{Union}(\textsf{Is}(\textsf{Object}(\texttt{guitar})), \textsf{Find}(\textsf{Is}(\textsf{Object}(\texttt{guitar})), \textsf{Is}(\textsf{Object}(\texttt{face})), \textsf{GetAbove})) $\rightarrow \textsf{Brighten}$\} \\ 
Description: Brighten all guitars and people playing guitar. \\ 
Dataset: Festival 
\item \{\textsf{Intersect}(\textsf{Is}(\textsf{Object}(\texttt{face})),  \textsf{Complement}(\textsf{Find}(\textsf{Is}(\textsf{Object}(\texttt{guitar})), \textsf{Is}(\textsf{Object}(\texttt{face})), \textsf{GetAbove}))) $\rightarrow$ \textsf{Blur}\} \\ 
Description: Blur faces of people not playing guitar. \\ 
Dataset: Festival 
\item \{\textsf{Intersect}(\textsf{Is}(\textsf{Object}(\texttt{bicycle})), \newline  \textsf{Complement}(\textsf{Find}(\textsf{Is}(\textsf{Object}(\texttt{person})), \textsf{Object}(\texttt{bicycle}), \textsf{GetBelow}))) $\rightarrow$ \textsf{Sharpen}\} \\ 
Description: Sharpen bicycles that are not being ridden. \\ 
Dataset: City Streets 
\item \{\textsf{Intersect}(\textsf{Is}(\textsf{Object}(\texttt{bicycle})), \textsf{Complement}(\textsf{Find}(\textsf{Is}(\textsf{BelowAge}(\texttt{18})), \textsf{Is}(\textsf{Object}(\texttt{bicycle})), \textsf{GetBelow}))) $\rightarrow$ \textsf{Sharpen}\} \\ 
Description: Sharpen all bicycles that are not ridden by a child. \\ 
Dataset: City Streets
\item  \{\textsf{Intersect}(\textsf{Is}(\textsf{Object}(\texttt{cat})), 
 \newline \textsf{Complement}(\textsf{Find}(\textsf{Is}(\textsf{Object}(\texttt{cat})), \newline  \textsf{Object}(\texttt{cat}), \textsf{GetBelow}))) $\rightarrow$ \textsf{Crop} \} \\ 
 Description: Crop image to feature just topmost cat. \\ 
 Dataset: Cats 
 \item \{\textsf{Intersect}(\textsf{Find}(\textsf{Is}(\textsf{Object}(\texttt{cat})), \newline \textsf{Object}(\texttt{cat}), \textsf{GetRight}), \newline \textsf{Find}(\textsf{Is}(\textsf{Object}(\texttt{cat})), \textsf{Object}(\texttt{cat}), \newline \textsf{GetLeft})) $\rightarrow$ \textsf{Brighten}\} \\ 
 Description: Brighten cats that are between two other cats. \\ 
 Dataset: Cats
\end{enumerate}

\section{List of \imagesearch Benchmarks}
\begin{enumerate}
    \item $\exists x. \exists y. \textsf{HasAttribute}(x, \texttt{car}) \wedge \textsf{HasAttribute}(y, \texttt{bicycle})$ \\ 
    Description: Images that contain a car and a bicycle. \\ 
    Dataset: City Streets
    \item $\exists x. \exists y. \textsf{HasAttribute}(x, \texttt{guitar}) \wedge \textsf{HasAttribute}(y, \texttt{microphone})$ \\ 
    Description: Images that contain a guitar and a microphone. \\ 
    Dataset: Festival
    \item $\forall x. \neg \textsf{HasAttribute}(x, \texttt{person})$ \\ 
    Description: Images that do not contain any people. \\ 
    Dataset: Wedding
    \item $\exists x. \exists y. \textsf{HasAttribute}(x, \texttt{bride}) \wedge \textsf{HasAttribute}(y, \texttt{groom}) \wedge \textsf{HasRelation}(x, y, \texttt{left})$ \\ 
    Description: Images where the bride is to the left of the groom. \\ 
    Dataset: Wedding
    \item $\exists x. \forall y. \textsf{HasAttribute}(x, \texttt{bride}) \wedge \neg \textsf{HasAttribute}(y, \texttt{groom})$ \\ 
    Description: Images that contain the bride and not the groom. \\ 
    Dataset: Wedding
    \item $\exists x. \exists y. \textsf{HasAttribute}(x, \texttt{face}) \wedge \textsf{HasAttribute}(y, \texttt{glasses}) \wedge \textsf{HasRelation}(x, y, \texttt{contains})$ \\ 
    Description: Images with people wearing glasses.
    Dataset: Wedding
    \item $\exists x.\forall y. \textsf{HasAttribute}(x, \texttt{face}) \wedge (\textsf{HasAttribute}(y, \texttt{glasses}) \rightarrow \neg \textsf{HasRelation}(x, y, \texttt{contains}))$ 
    Description: Images with people not wearing glasses.
    Dataset: Wedding 
    \item $\exists x. \exists y. \text{HasAttribute}(x, \texttt{mouth\_open}) \wedge \text{HasAttribute}(y, \texttt{mouth\_open}) \wedge \textsf{HasRelation}(x, y, \texttt{next\_to})$ \\ 
    Description: Images with people talking. \\
    Dataset: Wedding 
    \item $\exists x. \exists y. \text{HasAttribute}(x, \texttt{mouth\_open}) \wedge \text{HasAttribute}(x, \texttt{smiling}) \wedge \text{HasAttribute}(y, \texttt{mouth\_open}) \wedge \text{HasAttribute}(y, \texttt{smiling}) \wedge \textsf{HasRelation}(x, y, \texttt{next\_to})$ \\ 
    Description: Images with people laughing. \\ 
    Dataset: Wedding
    \item $\exists x. \exists y. \textsf{HasAttribute}(x, \texttt{tie}) \wedge \textsf{HasAttribute}(y, \texttt{face}) \wedge \textsf{HasRelation}(x, y, \texttt{below})$ \\ 
    Description: Images with people wearing ties. \\ 
    Dataset: Wedding 
    \item $\exists x. \text{HasAttribute}(x, \texttt{suit}) \vee \textsf{HasAttribute}(x, \texttt{tie}) \vee \textsf{HasAttribute}(x, \texttt{gown})$ \\ 
    Description: Images with formal attire.  \\
    Dataset: Wedding 
    \item $\exists x. \textsf{HasAttribute}(x, \texttt{bride}) \wedge \textsf{HasAttribute}(x, \texttt{smiling})$ \\ 
    Description: Images where the bride is smiling. \\
    Dataset: Wedding 
    \item $\exists x. \exists y. \textsf{HasAttribute}(x, \texttt{guitar}) \wedge \textsf{HasAttribute}(y, \texttt{face}) \wedge \textsf{HasRelation}(x, y, \texttt{below})$ \\ 
    Description: Images with guitar players. \\
    Dataset: Festival
    \item $\exists x. \exists y. \textsf{HasAttribute}(x, \texttt{microphone}) \wedge \textsf{HasAttribute}(y, \texttt{face}) \wedge \textsf{HasRelation}(x, y, \texttt{below})$ \\ 
    Description: Images with singers. \\
    Dataset: Festival 
    \item $\exists x. \exists y. \exists z. \textsf{HasAttribute}(x, \texttt{microphone}) \wedge \textsf{HasAttribute}(y, \texttt{face}) \wedge \textsf{HasAttribute}(z, \texttt{guitar}) \wedge \textsf{HasRelation}(x, y, \texttt{below}) \wedge \textsf{HasRelation}(z, y, \texttt{below})$ \\ 
    Description: Images with people singing and playing guitar. \\ 
    Dataset: Festival
    \item $\exists x. \exists y. \textsf{HasAttribute}(x, \texttt{microphone}) \wedge \textsf{HasAttribute}(y, \texttt{face}) \wedge \textsf{HasRelation}(x, y, \texttt{below})$ \\ 
    Description: Images with smiling performers. \\
    Dataset: Festival
    \item $\exists x. \textsf{HasAttribute}(x, \texttt{speaker}) \vee \textsf{HasAttribute}(x, \texttt{microphone}) \vee \textsf{HasAttribute}(x, \texttt{guitar})$ \\
    Description: Images with stage equipment. \\ 
    Dataset: Festival 
    \item $\exists x. \exists y. \textsf{HasAttribute}(x, \texttt{bicycle}) \wedge \textsf{HasAttribute}(y, \texttt{face}) \wedge \textsf{HasRelation}(x, y \texttt{below})$ \\ 
    Description: Images with cyclists. \\ 
    Dataset: City Streets
    \item $\exists x. \forall y.  \textsf{HasAttribute}(x, \texttt{bicycle}) \wedge \neg \textsf{HasAttribute}(y, \texttt{car})$ \\ 
    Description: Images with bicycles but not cars. \\ 
    Dataset: City Streets
    \item $\exists x. \exists y. \textsf{HasAttribute}(x, \texttt{bicycle}) \wedge \textsf{HasAttribute}(y, \texttt{smiling}) \wedge \textsf{HasRelation}(x, y \texttt{below})$ \\ 
    Description: Images with happy cyclists. \\ 
    Dataset: City Streets
    \item $\exists x. \exists y. \exists z. \textsf{HasAttribute}(x, \texttt{bicycle}) \wedge \textsf{HasAttribute}(y, \texttt{face}) \wedge \textsf{HasAttribute}(y, \texttt{helmet}) \wedge \textsf{HasRelation}(x, y, \texttt{below}) \wedge \textsf{HasRelation}(y, z, \texttt{below})$ \\ 
    Description: Images with cyclists wearing helmets. \\ 
    Dataset: City Streets 
    \item $\exists x. \exists y. \textsf{HasAttribute}(x, \texttt{car}) \wedge \textsf{HasAttribute}(y, \texttt{face}) \wedge \textsf{HasRelation}(x, y, \texttt{contains})$ \\ 
    Description: Images with people driving cars. \\ 
    Dataset: City Streets 
    \item $\exists x. \forall y. \textsf{HasAttribute}(x, \texttt{face}) \wedge (\textsf{HasAttribute}(y, \texttt{car}) \rightarrow \neg \textsf{HasRelation}(y, x, \texttt{contains})$ \\ 
    Description: Images with people not driving cars. \\ 
    Dataset: City Streets
    \item $\exists x. \textsf{HasAttribute}(x, \texttt{bicycle}) \vee \textsf{HasAttribute}(x, \texttt{car})$ \\ 
    Description: Images with modes of transportation.
    Dataset: City Streets 
    \item $\exists x. \exists y. \textsf{HasAttribute}(x, \texttt{hat}) \wedge \textsf{HasAttribute}(y, \texttt{face}) \wedge \textsf{HasRelation}(x, y, \texttt{above})$ \\ 
    Description: Images with people wearing hats. \\
    Dataset: City Streets
\end{enumerate}

\section{List of \digits Benchmarks}\label{sec:mnistbenchmarks}

\begin{enumerate}
\item $\lambda \ l$.\texttt{fold} \texttt{inc} 0 (\texttt{filter} (\texttt{curry} $\leq 5$) (\texttt{map} (\texttt{curry} \texttt{plus} $1$) 
(\texttt{map} \texttt{toDigit} $l$))) \\
Counts how many test scores are still low (below 5) after adding an extra credit point.

\item $\lambda \ l$.\texttt{fold} \texttt{max} 0 (\texttt{map} (\texttt{curry} \texttt{plus} (\texttt{toDigit} (\texttt{head} $l$)) (\texttt{filter} (\texttt{curry} $\leq 8$) (\texttt{map} \texttt{toDigit} (\texttt{tail} $l$))))) \\ 
Calculates the highest score after $x$ extra credit points are added to scores below $8$.

\item $\lambda l.$\texttt{fold} \texttt{plus} 0 (map (\texttt{curry} \texttt{plus} (\texttt{toDigit} (\texttt{head} $l$))) (\texttt{filter} (\texttt{curry} $\leq 6$) (\texttt{map} \texttt{toDigit} (\texttt{tail} $l$)))) \\ 
For cheap products being sold at a store, calculate the total revenue from selling the products after increasing the price by $x$ dollars.

\item \svhnprog \texttt{inc} 0 (\texttt{filter} \curry{\texttt{plus}}{\imgmap} (\texttt{map} \curry{\texttt{mult}}{2} \imgsmap)) \\
After doubling the amount of every product in the inventory, count how many have more than $x$ units in stock, where $x$ is the required minimum.

\item \svhnprog \texttt{max} 0 (\texttt{map} \curry{\texttt{plus}}{\imgmap} (\texttt{filter} \curry{$\leq$}{2} \imgsmap)) \\ 
Researchers at the North Pole took notes of readings from a thermometer that is inaccurate in extreme cold. For temperatures below 3 fahrenheit corrects the false temperature readings by increasing them by $x$. Then reports the highest temperature at which the sensor starts failing. 

\item \svhnprog \texttt{plus} 0 (\texttt{filter} \curry{$\geq$}{\imgmap} \imgsmap ) \\ 
Sum of digits greater than or equal to $x$.

\item \svhnprog \texttt{prod} 1 (\texttt{filter} \curry{$\geq$}{\imgmap} \imgsmap) \\ 
Product of digits less than $x$ after multiplying by 3. 

\item \svhnprog \texttt{inc} 0 (\texttt{filter} \curry{$\leq$}{8} (\texttt{filter} \curry{$\geq$}{8} \imgsmap ) ) \\ 
Computes the number of occurences of a single digit (8) in the list.

\item \svhnprog \texttt{plus} \imgmap (\texttt{filter} \curry{$\geq$}{\imgmap} \imgsmap ) \\ 
Counts the total sum of donations exceeding a threshold $x$, plus an additional donation of $x$. 

\item \svhnprog \texttt{plus} 0 (\texttt{map} \curry{\texttt{prod}}{2} \imgsmap) \\ 
Calculates the total sum of donations that have been matched. 

\item \svhnprog \texttt{max} \imgmap (\texttt{filter} \curry{$\geq$}{5} \imgsmap) \\ 
Determines the maximum age of children in a day care over age 4. If there are no children over age 4, default to $x$.

\item \svhnprog \texttt{plus} 0 (\texttt{map} \curry{\texttt{plus}}{\imgmap} \imgsmap) \\ 
Calculates the inventory count after restocking.

\item \svhnprog \texttt{inc} 0 (\texttt{filter} \curry{$\leq$}{\imgmap} \imgsmap) \\ 
Counts the number of retail transactions below a limit $x$.

\item \svhnprog \texttt{prod} 1 (\texttt{map} \curry{\texttt{prod}}{7} \imgsmap) \\ 
Computes the product of a list of numbers after multiplying them by 7.

\item \svhnprog \texttt{inc} 0 (\texttt{filter} \curry{$\geq$}{4} (\texttt{filter} \curry{$\leq$}{8} \imgsmap)) \\
Counts the number of participants in a study between the ages of 4 and 8.

\item \svhnprog \texttt{max} 0 (\texttt{filter} \curry{$\leq$}{2} \imgsmap) \\ 
Finds the maximum sub-2k dollar expenditure on a balance sheet.

\item \svhnprog \texttt{plus} 0 (\texttt{filter} \curry{$\geq$}{8} \imgsmap) \\ 
Finds the sum of expenditures over \$8k on a balance sheet.

\item \svhnprog \texttt{map} 0 (\texttt{map} \curry{\texttt{prod}}{2} \imgsmap) \\ 
Finds the size of the largest class if every class size were to double.

\item \svhnprog \texttt{max} 0 (\texttt{map} \curry{\texttt{prod}}{2} (\texttt{filter} \curry{$\geq$}{5} \imgsmap)) \\ 
Finds the size of the largest class, if every class with less than $5$ students dissolved and every other class size doubled.

\item \svhnprog \texttt{inc} 0 (\texttt{filter} \curry{$\geq$}{1} \imgsmap) \\ 
Counts the number number of non-zero digits in the list.

\item \svhnprog \texttt{prod} 1 \imgsmap \\ 
Calculates cumulative compound interest by years.

\item \svhnprog \texttt{max} 0 \imgsmap \\
Calculates maximum score on a midterm.

\item \svhnprog \texttt{inc} 0 (\texttt{filter} \curry{$\geq$}{\imgmap} \imgsmap) \\ 
Calculates number of people with a passing grade. 

\item \svhnprog \texttt{inc} 0 (\texttt{filter} \curry{$\leq$}{0} \imgsmap) \\ 
Counts the number of 0's in the list.

\item \svhnprog \texttt{max} 0 (\texttt{filter} \curry{$\geq$} \imgsmap) \\ 
Finds the maximum among elements equaling at least 5, defaulting to 0 if there are no such elements.

\item \svhnprog \texttt{max} 0 (\texttt{filter} \curry{$\leq$}{\imgmap} \imgsmap) \\ 
Finds the highest failing grade on a test.
\item \svhnprog \texttt{plus} 0 (\texttt{map} \curry{\texttt{prod}}{6} \imgsmap) \\ 
Computes the total sales revenue by multiplying the quantity of items sold by unit price 6, and summing the results. 
\item \svhnprog \texttt{plus} 0 (\texttt{filter} \curry{$\geq$}{2} \imgsmap) \\ 
Counts the number of batteries in boxes containing at least 2 batteries.
\item \svhnprog \texttt{inc} 0 (\texttt{filter} \curry{$\geq$}{\imgmap} (\texttt{map} \curry{\texttt{plus}}{4} \imgsmap)) \\ 
Determines the number of students who would pass the course after adding 4 bonus points. 
\item \svhnprog \texttt{inc} 0 (\texttt{filter} \curry{$\geq$}{\imgmap} (\texttt{filter} \curry{$\geq$}{7} \imgsmap)) \\ 
Determines the number of objects within a size range suitable for a robot's gripper.
\item \svhnprog \texttt{inc} 0 (\texttt{filter} \curry{$\geq$}{2} \imgsmap) \\ 
Counts the number of conference attendees who have more than 1 dietary restriction.
\item \svhnprog \texttt{plus} 0 (\texttt{map} \curry{\texttt{prod}}{\imgmap} \imgsmap ) \\ 
Calculates the total weekly egg budget, where each person needs a specific number of eggs perday, and each egg costs $x$ dollars.
\item \svhnprog \texttt{max} 0 (\texttt{map} \curry{\texttt{plus}}{\imgmap} \imgsmap ) \\ 
Identifies the most expensive neutral oil in the grocery store, adding $x$ dollars of tax. 
\item \svhnprog \texttt{inc} 0 (\texttt{filter} \curry{$\geq$}{4} \imgsmap) \\ 
Counts the number of restaurants with ratings of at least 4 from a list of restaurant ratings.
\item \svhnprog \texttt{max} 0 (\texttt{map} \curry{\texttt{plus}}{5} \imgsmap) \\ 
Identifies the microwave recipe with the longest cooking time, adding 5 minutes for prep.
\item \svhnprog \texttt{inc} 0 (\texttt{filter} \curry{$\geq$}{5} \imgsmap) \\ 
Counts the total number of donations exceeding \$5.
\item \svhnprog \texttt{inc} 0 (\texttt{filter} \curry{$\geq$}{3} \imgsmap) \\ 
In a pool of PhD applications, counts how many have a GPA of at least 3.
\item \svhnprog \texttt{inc} 0 (\texttt{filter} \curry{$\leq$}{\imgmap} (\texttt{map} \curry{\texttt{plus}}{2} \imgsmap ) ) \\ 
Counts how many students cannot get a C grade (where the cutoff is $x$, after adding 2 bonus points.
\item \svhnprog \texttt{inc} 0 (\texttt{filter} \curry{$\leq$}{\imgmap} (\texttt{map} \curry{\texttt{plus}}{1} \imgsmap) ) \\ 
Counts how many reviewers have ratings below $x$, if each score were increased by 1.
\item \svhnprog \texttt{plus} 1 (\texttt{map} \curry{\texttt{max}}{8} (\texttt{filter} \curry{$\geq$}{4} \imgsmap ) ) \\ 
Calculates the sum of the prices of objects in a store, when all the products less than \$4 are removed, while remaining prices are increased to at least \$8.
\item \svhnprog \texttt{plus} 0 (\texttt{filter} \curry{$\leq$}{\imgmap} \imgsmap) \\
Calculates the total score among football teams scoring less than or equal to $x$ points.
\item \svhnprog \texttt{prod} 1 (\texttt{map} \curry{\texttt{max}}{1} \imgsmap) \\ 
Finds the product of a list, with 0's replaced with 1's. 
\item \svhnprog \texttt{prod} 1 (\texttt{filter} \curry{$\geq$}{2} (\texttt{filter} \curry{$\leq$}{4} \imgsmap)) \\ 
Finds the product of all numbers in a list between 1 and 5 exclusive.
\item \svhnprog \texttt{plus} 0 (\texttt{filter} \curry{\texttt{prod}}{3} \imgsmap) \\
Finds the total score in the last round of Family Feud, where points are worth triple.
\item \svhnprog \texttt{inc} 0 (\texttt{filter} \curry{$\geq$}{1} (\texttt{filter} \curry{$\leq$}{5} \imgsmap)) \\ 
Finds the number of non-zero race times less than or equal to 5. 
\item \svhnprog \texttt{plus} 0 (\texttt{filter} \curry{$\geq$}{3} \imgsmap) \\ 
Given a list of purchases at a cash register, gets the sum of transactions over \$2.
\item \svhnprog \texttt{inc} 0 (\texttt{filter} \curry{$\geq$}{\imgmap} (\texttt{map} \curry{\texttt{plus}{1}} \imgsmap)) \\ 
Count how many applicants have a GPA of at least $x - 1$. 
\item \svhnprog \texttt{plus} 0 (\texttt{map} \curry{\texttt{plus}}{2} (\texttt{map} \curry{\texttt{prod}}{\imgmap} \imgsmap)) \\ 
Computes the total cost of items in a box, where the cost per item is $x$, and the flat cost of the box is \$2. 
\item \svhnprog \texttt{max} 0 (\texttt{filter} \curry{$\leq$}{7} \imgsmap) \\ 
Given a list of people wanting various amounts of rice under 8kg, finds the one who requested the most rice. 
\item \svhnprog \texttt{max} 0 (\texttt{filter} \curry{$\leq$}{\imgmap} (\texttt{map} \curry{\texttt{prod}}{5} \imgsmap)) \\ 
Finds for the max weight of an item, such that you can put 5 of them on a lift with a maximum weight of $x$.
\end{enumerate}

\end{document}